\DeclareMathAlphabet{\pazocal}{OMS}{zplm}{m}{n}
\numberwithin{equation}{section}
\newtheorem{Thm}{Theorem}[section]
\newtheorem{Cor}[Thm]{Corollary}
\newtheorem{Lem}[Thm]{Lemma}
\newtheorem{Prop}[Thm]{Proposition}
\newtheorem{Cond}[Thm]{Condition}
\newcommand{\p}{\varphi}
\newcommand{\e}{\varepsilon}
\renewcommand{\L}{\Lambda}
\newcommand{\C}{{\mathcal{C}}}
\newcommand{\N}{\mathbb{N}}
\newcommand{\Z}{\mathbb{Z}}
\renewcommand{\P}{\mathbb{P}}
\newcommand{\R}{\mathbb{R}}
\newcommand{\E}{\mathbb{E}}
\renewcommand{\L}{\Lambda}
\newcommand{\Eb}{\pazocal{E}}
\newcommand{\Mb}{\pazocal{M}}
\newcommand{\Pa}{\mathcal{P}}
\newcommand{\Pam}{\mathcal{R}}
\newcommand{\Wb}{\pazocal{W}}
\DeclareMathOperator*{\argmax}{arg\,max}
\DeclareMathOperator*{\argmin}{arg\,min}
\newcommand{\dsp}{\displaystyle}
\title[Hydrodynamic Limits of Young Diagrams] {On Hydrodynamic Limits of Young Diagrams}
\author{Ibrahim Fatkullin, Sunder Sethuraman, and Jianfei Xue}
\address{Department of Mathematics, University of Arizona,  Tucson, AZ 85721}
\email{ibrahim@math.arizona.edu\\ sethuram@math.arizona.edu\\ jxue@math.arizona.edu}
\begin{document}

\begin{abstract}

We consider a family of stochastic models of evolving two-dimensional Young diagrams, given in terms of certain energies, with Gibbs invariant measures.
`Static' scaling limits of the shape functions, under these Gibbs measures, have been shown by several over the years.  The purpose of this article is to study corresponding `dynamical' limits of which less is understood. We show that the hydrodynamic scaling limits of the diagram shape functions may be described by different types parabolic PDEs, depending on the energy structure.

\end{abstract}

\subjclass[2010]{60K35, 82C22}

\keywords{Young diagram, Gibbs measure, interacting particle system, zero-range, weakly, hydrodynamic, shape, dynamic}

%\thanks{Research was partly supported by ARO-W911NF-18-1-0311 and a Simons Foundations Sabbatical grant.}

\maketitle

%\tableofcontents

\section{Introduction}

Young diagrams or tableaux, originally introduced in the context of combinatorics and representation theory (cf. \cite{Ful}, \cite{Yo}), have proved to be useful in a variety of disciplines ranging from mathematical physics to genetics. In particular, language involving Young diagrams and their shape functions may be used to describe phenomena such as Bose-Einstein condensation \cite{EJU}, polymerization and molecular assembly \cite{CGH}, \cite{KSS},
and random partitions in coagulation-fragmentation processes \cite{B}, \cite{P}, and references therein, among others.

In this paper, we present a class of stochastic evolutions of two-dimensional Young diagrams, given in terms of certain microscopic energy structures, and show that the hydrodynamic scaling limits of the associated shape functions obey different types of parabolic PDEs, reflecting the type of the energy formulations.  Previously, there seems to be only a small literature studying dynamical Young diagrams, for instance \cite{ES} and \cite{FS1}, which treat processes where there is birth and death evolution of squares in the diagrams.  See also the monograph \cite{Fu} which reviews some of this work.   The purpose of this article is to analyze a natural, but different class of models, through new and robust techniques.  Later, we give a brief comparison with the results in \cite{ES} and \cite{Fu}, \cite{FS1}, the latter pair closest to ours in spirit.

To describe our results, we first discuss certain `static' limits, which set the stage.
Let $\p = (p_1,p_2,\dots,p_n)$ with $p_k\geq p_{k+1}$ be a partition of the integer $M(\p) := \sum_{k=1}^n p_k$.
For example, $\p = (4,2,2,1)$ corresponds to $9 = 4+2+2+1$.
We call $\xi = (\xi(k;\p))_{k\in \N}$, where $\xi(k;\p) = \#\left\{ m : p_m = k \right\}$, the size density of the partition $\p$. Vice versa, given $\xi$, one can reconstruct $\p$, and so in a sense they are interchangeable. In terms of $\xi$, $M(\p) = \sum_{k\geq 1} k\xi(k;\p)$.
Denote by $\psi(x)$ the associated shape (height) function: 
\begin{equation*}
	\psi(x) = \sum_{k\geq x} \xi(k;\p). 
\end{equation*}
The graph of $\psi$ is the Young diagram of $\varphi$.
Since $\xi(k;\p) = \psi(k) - \psi(k+1)$, the numbers $\xi$ can be viewed as the gradient particle description of the associated partition $\p$.  
See Fig. \ref {from Young Diagram to Particles}.

\begin{figure}
{
\resizebox{3.5cm}{!}
{
\begin{tikzpicture}% {size = 0.8}
\draw [->] (0,0) -- (5,0);
\draw [->] (0,0) -- (0,5);
\draw [-,thick] (0,4) -- (1,4); 
\draw [-,thick] (1,3) -- (2,3); 
\draw [-,thick] (2,1) -- (4,1); 
\draw [-,thick] (4,0) -- (5,0);

\draw [dashed,thick] (0,1) -- (2,1);
\draw [dashed,thick] (0,2) -- (2,2);
\draw [dashed,thick] (0,3) -- (1,3);
\draw [dashed,thick] (1,0) -- (1,3);
\draw [dashed,thick] (2,0) -- (2,1);
\draw [dashed,thick] (3,0) -- (3,1);

\draw [dashed,thick] (4,1) -- (4,0); 
\draw [dashed,thick] (2,3) -- (2,1); 
\draw [dashed,thick] (1,4) -- (1,3); 
\draw [fill] (1,4) circle [radius=0.05];
\draw [fill] (2,3) circle [radius=0.05];
\draw [fill] (4,1) circle [radius=0.05];
\draw [fill,white] (1,3) circle [radius=0.05]; %to make the interior of the circle empty
\draw (1,3) circle [radius=0.05];
\draw [fill,white] (2,1) circle [radius=0.05];
\draw (2,1) circle [radius=0.05];
\draw [fill,white] (4,0) circle [radius=0.05];
\draw (4,0) circle [radius=0.05];
\coordinate [label=below: { $\psi(x)$} ] () at (2,-0.5);
\draw  (1,0) node[below]{$1$} -- (1,0.1);
\draw  (2,0) node[below]{$2$} -- (2,0.1);
\draw  (3,0) node[below]{$3$} -- (3,0.1);
\draw  (4,0) node[below]{$4$} -- (4,0.1);
\draw  (0,1) node[left]{$1$} -- (0.1,1);
\draw  (0,2) node[left]{$2$} -- (0.1,2);
\draw  (0,3) node[left]{$3$} -- (0.1,3);
\draw  (0,4) node[left]{$4$} -- (0.1,4);
\end{tikzpicture}
}
}
\hspace{0.51in}
{
\resizebox{3.5cm}{!}
{
\begin{tikzpicture}% {size = 0.8}
\draw [->] (0,0) -- (5,0);

\draw [fill] (1,0.5) circle [radius=0.075];
\draw [fill] (2,0.5) circle [radius=0.075];
\draw [fill] (2,1) circle [radius=0.075];
\draw [fill] (4,0.5) circle [radius=0.075];

\coordinate [label=below: {$\xi = (1,2,0,1,0,\ldots)$} ] () at (2,-0.5);
\draw  (1,0) node[below]{$1$} -- (1,0.1);
\draw  (2,0) node[below]{$2$} -- (2,0.1);
\draw  (3,0) node[below]{$3$} -- (3,0.1);
\draw  (4,0) node[below]{$4$} -- (4,0.1);
\end{tikzpicture}
}
}
\caption{The Young diagram and particle description  associated with the partition $(4,2,2,1)$.}
 \label{from Young Diagram to Particles}
\end{figure}
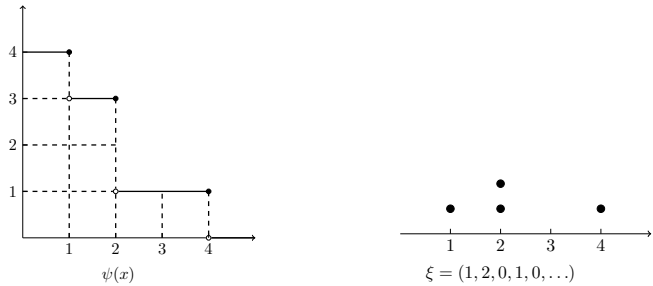

Let $\Pa_M$ be the uniform probability measure on all partitions of an integer $M$.
A classical result of A.\,Vershik \cite{V} states that in the limit as $M\to\infty$, the rescaled shape functions \mbox{$\psi_M(x) :=\psi(x\sqrt M)/\sqrt{M}$} converge in probability with respect to the canonical measure $\Pa_M$ to the curve
\begin{equation} \label {vershik curve}
\psi(x) = -\dfrac{\sqrt 6}{\pi} \ln \left( 1- e^{-\pi x/\sqrt 6} \right).
\end{equation}
Namely, for every $\epsilon>0$ and $a>0$, 
$$\lim_{M\to \infty} \Pa_M\left( \sup_{x\geq a}\big| \psi_M(x) - \psi(x)\big| >\epsilon\right) = 0.$$

Such results have a long history, and limits and phenomena different than the one above may appear if other ensembles, such as those with respect to Haar statistics, the Plancherel measure or Ewens measure are employed: see \cite{BOO}, \cite{EG},
\cite{FS2}, \cite{LS}, \cite{KV}, \cite{SV}, \cite{V}, \cite{VY} \cite{Y}, and references therein.

In this article, we will consider grand canonical ensembles of sizes $\{\xi(k):k\geq 1\}$, including those prescribed in \cite{FS2}: 
\begin{equation*}
\Pa_{\beta,N}(\xi)
=
\dfrac{1}{Z_{\beta,N}}
e^{ -\beta \sum_{k\geq 1}  \xi(k)\Eb_k  -N^{-1} M}
\end{equation*}
where $\Eb_k\geq 0$ is the energy of a summand of size $k$, total size $M=\sum_{k\geq 1} k\xi(k)$, inverse temperature $\beta\geq 0$, and $Z_{\beta, N}$ is the normalizing factor.  When $\beta=0$, the canonical, or conditional measures, with size $M$, are of course $\Pa_M$.

Consider the scaled shape function $ \psi_{\beta,N}(x):= N\psi(Nx)/R_{\beta,N}(M)$,
where $R_{\beta,N}(M)=N^2e^{-\beta\Eb_N}$, as shown in \cite{FS2}, is of the order of the expected value of $M=\sum_{k\geq 1}k\xi(k)$ with respect to $\Pa_{\beta,N}$.  This scaling is such that the expected area of the rescaled Young diagrams, $\sum_{x\geq 1}\E_{\Pa_{\beta, N}}\big[\psi_{\beta, N}(x)\big]$ is of order $1$;
see Fig. \ref{before and after scaling}. 
As $N\to \infty$, $\psi_{\beta,N}(x)$ will converge with respect to $\Pa_{\beta,N}$ to different limits,
 depending on the choice of the energy $\Eb_k$.

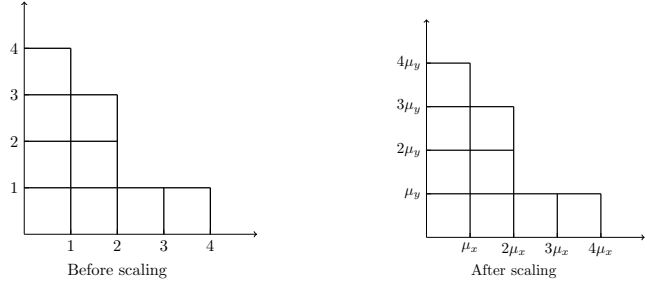
\begin{figure}
\resizebox{3.5cm}{!}
{
\begin{tikzpicture}%{size = 0.8}
\draw [->] (0,0) -- (5,0);
\draw [->] (0,0) -- (0,5);
\draw [-,thick] (0,1) -- (4,1); 
\draw [-,thick] (0,2) -- (2,2); 
\draw [-,thick] (0,3) -- (2,3); 
\draw [-,thick] (0,4) -- (1,4); 
\draw [-,thick] (1,4) -- (1,0); 
\draw [-,thick] (2,3) -- (2,0); 
\draw [-,thick] (3,1) -- (3,0); 
\draw [-,thick] (4,1) -- (4,0); 
\coordinate [label=below: {Before scaling} ] () at (2,-0.5);
\draw  (1,0) node[below]{$1$} -- (1,0.1);
\draw  (2,0) node[below]{$2$} -- (2,0.1);
\draw  (3,0) node[below]{$3$} -- (3,0.1);
\draw  (4,0) node[below]{$4$} -- (4,0.1);
\draw  (0,1) node[left]{$1$} -- (0.1,1);
\draw  (0,2) node[left]{$2$} -- (0.1,2);
\draw  (0,3) node[left]{$3$} -- (0.1,3);
\draw  (0,4) node[left]{$4$} -- (0.1,4);
\end{tikzpicture}
}
\hspace{0.51in}
{
\resizebox{3.5cm}{!}
{
\begin{tikzpicture} %{size = 1.4}
\draw [->] (0,0) -- (5,0);
\draw [->] (0,0) -- (0,5);
\draw [-,thick] (0,1) -- (4,1); 
\draw [-,thick] (0,2) -- (2,2); 
\draw [-,thick] (0,3) -- (2,3); 
\draw [-,thick] (0,4) -- (1,4); 
\draw [-,thick] (1,4) -- (1,0); 
\draw [-,thick] (2,3) -- (2,0); 
\draw [-,thick] (3,1) -- (3,0); 
\draw [-,thick] (4,1) -- (4,0); 
\coordinate [label=below: {After scaling} ] () at (2,-0.5);
\draw  (1,0) node[below]{$\mu_x$} -- (1,0.1);
\draw  (2,0) node[below]{$2\mu_x$} -- (2,0.1);
\draw  (3,0) node[below]{$3\mu_x$} -- (3,0.1);
\draw  (4,0) node[below]{$4\mu_x$} -- (4,0.1);
\draw  (0,1) node[left]{$\mu_y$} -- (0.1,1);
\draw  (0,2) node[left]{$2\mu_y$} -- (0.1,2);
\draw  (0,3) node[left]{$3\mu_y$} -- (0.1,3);
\draw  (0,4) node[left]{$4\mu_y$} -- (0.1,4);
\end{tikzpicture}
}
}
\caption{Young diagrams before and after rescaling. 
$\mu_x = 1/N$, $\mu_y = N/R_{\beta,N}(M)$ in the rescaling  from $\psi$ to $ \psi_{\beta,N}$.}
\label{before and after scaling}
\end{figure}

Following \cite{FS2}, we assume that the energy function $\Eb_k$ is in form $\Eb_k = u(\ln k)$, where $u$ is a positive function diverging at infinity.
In particular, we consider two cases in this work:  (1) $u'(x) \to 1$, and (2) $u'(x)\to 0$.  We refer to these cases as
$\Eb_k \sim \ln k$, and $1\ll \Eb_k\ll \ln k$ respectively.  The precise specification later given in Condition \ref{condition} provides a large, varied class of energies, amenable to the scaling limits that we will take.

We remark, if $\Eb_k$ is not in this form, for instance 
the case $\Eb_k \gg \ln k $, there will be a finite number of particles, uniform over $N$, in the system (cf. Proposition 2.1 in \cite{FS2}), and so the associated scaling limits will be trivial.  Also, if $\Eb_k$ is constant, the situation is tantamount to taking $\beta=0$, and so we do not distinguish this case.  Furthermore, when $\Eb_k \sim \ln k$ and $\beta>1$, the variance of the scaled shape function $\psi_{\beta,N}$ diverges, and does not vanish for $\beta=1$ (cf. Proposition 2.4 in \cite{FS2}). 
There are also other interesting `boundary' energy scenarios discussed in \cite{FS2}, including condensation regimes, which we do not pursue here.

The following convergences follow from Propositions 2.1 and 2.2 of \cite{FS2}:  For $\epsilon>0$,
\begin{enumerate}
\item $\beta =0$: $ \Pa_{\beta,N}\big(\big|\dsp \psi_{\beta,N} - \ln(1-e^{-x})\big|>\epsilon\big)\to 0$;
\item $\Eb_k \sim \ln k$, $0<\beta <1$: $\Pa_{\beta,N}\big(\big| \dsp \psi_{\beta,N} -  \int_x^{\infty} u^{-\beta} e^{-u} du\big|>\epsilon\big)\to 0$;
\item $1\ll \Eb_k \ll \ln k$, $\beta>0$: $ \Pa_{\beta,N}\big(\big|\psi_{\beta,N} - e^{-x}\big|>\epsilon\big)\to 0$.
\end{enumerate}
We remark, the limit when $\beta=0$, is similar to Vershik's result, and in some sense, a reflection of the equivalence of ensembles between the canonical measures $\Pa_M$ and $\Pa_{0,N}$ as $M$ and $N$ diverge.

With this background, the purpose of the article is to consider a natural dynamics of these varied shapes and to understand their hydrodynamic limits.
Previously, in \cite{FS1}, Funaki and Sasada studied an evolutional model of the Young diagrams, with respect to the `uniform' grand ensembles $\Pa_{0,N}$, as well as certain `restricted' uniform ensembles when $\beta =0$, providing a dynamical interpretation with respect to the Vershik curve $\psi$ \eqref{vershik curve}.  However, the dynamics that we introduce is more general and different
than that in \cite{FS1}.

Consider the gradient particle system associated with the Young diagrams with generator 
\begin{equation*}
\begin{split}
Lf(\xi)
=&
\sum_{k=1}^{\infty}
\left\{
\lambda_k 
\left [  f\left(\xi^{k,k+1} \right)  - f(\xi)   \right] \chi_{\{\xi(k) > 0\}}
\right.\\
&\quad \quad\quad \quad+
\left. \left [  f\left(\xi^{k,k-1}  \right)   - f(\xi) \right ] \chi_{\{\xi(k) > 0,k>1\}}
\right\}
\end{split}
\end{equation*}
where $ \lambda_k = 
e^{-\beta\left (\Eb_{k+1} - \Eb_k\right) - 1/N }$ (cf. Fig. \ref{gradient dynamics}).  Here, $\xi^{k,k\pm 1}$ is the configuration obtained by moving a particle from $k$ to $k\pm 1$. 

 The interpretation of this dynamics, which preserves particle mass, in the `language of polymers' is as follows:  A monomer is added to a polymer of size $k$ with rate $\lambda_k$ and removed with rate $1$. In this dynamics, the gradients $\xi$ qualitatively tend to states of lower energy $\Eb_\cdot$.   This dynamics is spatially inhomogeneous when $\beta> 0$ in that $\lambda_k\neq \lambda_{k+1}$, and is not translation-invariant in general, being limited to $\Z^+$, rather than $\Z$.  An important feature is that the grand canonical measures $\Pa_{\beta,N}$ are invariant under $L$.
See \cite{Fl}, \cite{KSS} for discussions of related polymerization processes.

 Moreover, in terms of the associated Young diagrams, an `empty' lower left corner, adjacent to three squares, with vertex at $(k,\cdot)$ is filled with a square with rate $\lambda_k$, and a square, with an upper right corner not adjacent to any other square, is removed with rate $1$; for instance, in Fig. 1, turning the empty corner at $(1,3)$ into a square corresponds with the particle at $k=1$ moving to location $k=2$, and removing the square with corner $(2,3)$ means a particle at $k=2$ moves to $k=1$.

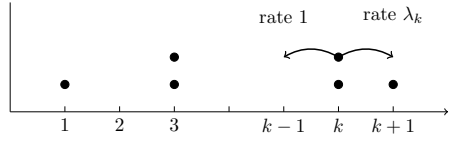
\begin{figure}
\resizebox{6cm}{!}
{
\begin{tikzpicture}% {size = 0.8}
\draw [->] (0,0) -- (8,0);
\draw [-] (0,0) -- (0,2);
%\draw [fill] (0,0.5) circle [radius=0.075];
\draw [fill] (1,0.5) circle [radius=0.075];
\draw [fill] (3,1) circle [radius=0.075];
\draw [fill] (3,0.5) circle [radius=0.075];
\draw [fill] (6,0.5) circle [radius=0.075];
\draw [fill] (6,1) circle [radius=0.075];
\draw [fill] (7,0.5) circle [radius=0.075];

\draw [->, thick] (6,1) to [out=150,in=30] (5,1);
\draw [->, thick] (6,1) to [out=30,in=150] (7,1);

%\coordinate [label=below: {a particle at site $1$ is created} ] () at (2,-0.5);
\draw  (1,0) node[below]{$1$} -- (1,0.1);
\draw  (2,0) node[below]{$2$} -- (2,0.1);
\draw  (3,0) node[below]{$3$} -- (3,0.1);
\draw  (4,0) -- (4,0.1);
\draw  (5,0) node[below]{$k-1$} -- (5,0.1);
\draw  (6,0) node[below]{$k$} -- (6,0.1);
\draw  (7,0) node[below]{$k+1$} -- (7,0.1);

\coordinate [label=above: {rate $1$} ] () at (5,1.5);
\coordinate [label=above: {rate $\lambda_k$} ] () at (7,1.5);
\end{tikzpicture}
}
\caption{Gradient particle system: Particles at sites $k\geq 2$ move to the left with rate $1$, to the right with rate $\lambda_k$; particles at $k=1$, move only to the right with rate $\lambda_1$.}
\label {gradient dynamics}
\end{figure}

Let $\xi_t$ denote the associated Markov process.  
We will be interested in the process $\eta_t = \xi_{N^2t}$ seen in diffusive scale, where time is speeded up by $N^2$ and space by $N$. 
Since $\eta_t$ is viewed as the negative gradient of its corresponding height function $\psi$,
the scaling from $\psi$ to $\psi_{\beta,N}$ (cf. Fig. \ref{before and after scaling}) motivates 
the following definition of the empirical measure
\begin{equation*}
\pi^{N}_t(dx)
=
\frac{N_\beta}{N}
\sum_{k=1}^{\infty}
\eta_t (k) \delta_{k/N}(dx).
\end{equation*}
 Here, $N_\beta = e^{\beta \Eb_N}$ is a choice so that the total mass of $\pi^N_0$ under $\Pa_{\beta,N}$ is of $O(1)$.

We will show (Theorems \ref{thm: beta 0}, \ref{thm: ln k}, and \ref{thm: lnln k}), under diffusive scalings, for a large class of initial conditions supported on configurations with $O(NN_\beta^{-1})$ expected number of particles at level $N$, that
the empirical measures $\pi_{t}^{N}$ converge weakly to a delta mass supported on the unique weak solution of a macroscopic equation, depending on the structure of the energy $\Eb_\cdot$, as $N\to \infty$:
\begin{enumerate}
\item $\beta=0$: 
$
\partial_t \rho
= \partial_x^2 \dfrac{\rho}{\rho+1} + \partial_x \dfrac{\rho}{\rho+1}$;
\item $0<\beta<1$, $\Eb_k \sim \ln k$:
$\partial_t \rho
=
\partial_x^2\rho + \partial_x \left(\dfrac{\beta+x}{x} \rho\right)$;
\item $\beta>0$, $1\ll \Eb_k \ll \ln k$: 
$
\partial_t \rho
=
\partial_x^2\rho + \partial_x \rho$.
\end{enumerate}

Since the particle density is related to the shape function by
$\psi(x) = \int_x^{\infty} \rho(u)du$, we obtain (Corollary \ref{cor: diagrams}) the macroscopic equations for $\psi$:
\begin{enumerate}
\item[(1')] $\beta=0$: 
$
\partial_t  \psi
= \partial_x \left(\dfrac{\partial_x \psi}{1- \partial_x \psi} \right)+ \dfrac{\partial_x \psi}{1-\partial_x \psi}
$;
\item[(2')] $0<\beta<1$, $\Eb_k \sim \ln k$:
$\partial_t \psi
=
\partial_x^2 \psi + \dfrac{\beta+x}{x} \partial_x \psi
$;
\item[(3')] $\beta>0$, $1\ll \Eb_k \ll \ln k$: 
$
\partial_t \psi
=
\partial_x^2 \psi + \partial_x  \psi
$.
\end{enumerate}

To shed light on these limits, the drift $N(\lambda_k -1)$ is quite informative.  When $\beta=0$, or when $1\ll \Eb_k \ll \ln k$, this drift tends to $-1$,
 but when $\Eb_k \sim \ln k$, it converges to a function of the scaled position.  The function $\rho/(1+\rho)$ is in a sense the macroscopic average value of $\chi_{\{\eta_t(k)>0\}}$ with respect to the grand canonical ensemble.   When $\beta =0$, the scaling limit recovers this form.  But, when $\beta>0$, as there is an additional scaling factor involved to obtain a nontrivial limit, what needs to be replaced is $N_\beta \chi_{\{\eta_t(k)>0\}}$, which is close to the linearization of $\rho/(1+\rho)$, namely $\rho$; see Step 1 of Section \ref{section beta not 0} for a more technical discussion.
From a physical perspective, the linear PDE limits reflect an effective transport of mass, which was not immediately apparent to us before deriving them.

The proof strategy is to consider the evolution of the empirical measure $\pi^N_t$ acting on test functions through It\^o's formula with respect to the zero-range process $\eta_\cdot$.  In calculating the generator action, nonlinear functions of $\eta_\cdot$ emerge.  However, because of non translation-invariance and inhomogeneity, standard methods such as `entropy' or `relative entropy' do not apply immediately to replace these terms with averaged expressions in terms of $\pi^N_t$.  We use nontrivial modifications, however, of certain `local' hydrodynamic $1$ and $2$-block replacement estimates, originally introduced in the study of `tagged' particles in \cite{JLS}.  This replacement, in particular, makes use of a spectral gap estimate that we provide and Feynman-Kac and Rayleigh formulas.  Interestingly, only when $\beta=0$, does one need both `local' $1$ and $2$-block replacements.  Otherwise, when $\beta> 0$, a `local' $1$-block replacement suffices.  In the proof of the $1$ and $2$-block estimates, we use that the process is `attractive', a feature which allows a certain coupling to be employed, facilitating truncation and other estimates.  Then, with tightness of the empirical measures, and uniqueness of weak solutions, that we provide, the limits follow.  See Sections \ref{section beta 0}, \ref{section beta not 0}, and \ref{section: diagrams limit} for more detailed proof outlines and remarks.

We note, although equations $(1), (1')$ when $\beta=0$ match that in \cite{FS1}, up to a constant in front of the first order derivative term, our results are different in several ways.
Here, the dynamics that we work with is weakly asymmetric zero-range process (WAZRP) on $\Z^+$, which is in general spatially inhomogeneous, and one
whose evolution preserves the total number of particles.  However, the model in \cite{FS1} is a different WAZRP on $\Z^+$, one which does not conserve particle mass, with a weakly asymmetric reservoir at site $0$.  Importantly, the proof in \cite{FS1} relies on the presence of this reservoir.
Also, \cite{FS1} considers initial profiles $\psi(0,x)$ where $\lim_{x\to 0}\psi(0,x) = \infty$
and obtain scaling limits $\psi(t,x)$ such that also $\lim_{x\to 0}\psi(t,x) = \infty$ and the hydrodynamic equation when $\beta =0$ holds. However, the initial conditions are different in our case:  We consider initial profiles, finite at time $0$ and for all later times $t$, that is $\psi(t,0) = \psi(0,0)<\infty$, by conservation of particles in the dynamics.  Moreover, it seems such profiles are not admissible with respect to the proof in \cite{FS1}, nor it seems are diverging profiles $\psi(0,x)$ amenable to our arguments, which make use that there are a finite number of particles at each level $N$.

From a broader point of view, random growth of Young diagrams also relates with the much studied corner growth model in which only the addition of squares to the diagram is allowed.
Formally, in the study of hydrodynamic limits of the corner growth model, the problem is often converted, by considering gradients, to a totally asymmetric simple exclusion process,
and the scaling is Euler, that is time and space are scaled at the same order.  See \cite{ES} which discusses such and other dynamics.
In contrast, our model of evolutional Young diagrams is studied via their gradient systems which is a WAZRP.
Our analysis is also directly on this WAZRP on $\Z^+$ and no further transformation to simple exclusion processes is employed.

\vskip .1cm
{\bf Organization of the article.}
The precise description of the model and results are given in Section \ref{model_results}.  Then, after preliminary definitions and estimates with respect to basic martingales in Section \ref{section: martingale}, we give the proof outlines of Theorems \ref{thm: beta 0}, \ref{thm: ln k}, and \ref{thm: lnln k}, and Corollary \ref{cor: diagrams} in Sections \ref{section beta 0}, \ref{section beta not 0}, and \ref{section: diagrams limit} respectively.  Main inputs into the proof are tightness and other estimates of the underlying measures given in Section \ref{section: tightness}.  In Section \ref{section: 1 and 2 blocks}, the important $1$ and $2$-block estimates are shown.  Useful properties of the initial measures are given in Section \ref{section: initial measures}.  Uniqueness of weak solution to the hydrodynamic equations is proved in Section \ref{section: uniqueness}.  Finally, in the appendix, some remarks about boundary phenomena of invariant measures are made.

%%%%%%%%%%%%%%%%%%%%%%%%%%%%%%%%%%%%%%%%%%%%%%%%%%%%%%%

\section{Model description and results}
\label{model_results}

We first specify certain Gibbs measures and their `static' limits, which inform and motivate next our dynamical model that we introduce.  Then, after prescribing the initial conditions considered, we give the hydrodynamic limit results.  

\subsection{Grand canonical ensembles and `static' limits}
\label{GC ensembles and static limit}
Let $\N = \{1,2,\ldots\}$ be the natural numbers, and $\Omega = \left\{0,1,2,\ldots \right\}^{\N}$ be the space of particle configurations.
A configuration $\xi = (\xi(k))_{k\in \N}\in \Omega$ specifies that there are $\xi(k)$ particles at sites $k\geq 1$.

Suppose that each particle at site $k$ carries energy $\Eb_k$, with respect to a function $\Eb_\cdot : \{0,1,2, \ldots\} \mapsto \R^+:=[0,\infty)$.
Following \cite{FS2}, we will assume that the energy function $\Eb_k$ has the following structure.  Let $\R_\circ^+:=(0,\infty)$.
\begin{Cond} \label {Condition on energy}
\label{condition}
$\Eb_k  = u(\ln k)$ where $u(\cdot): \R^+\mapsto \R_\circ^+$ is
differentiable and $u'(\cdot)$ is bounded, $\lim_{x\to \infty} u(x) = \infty$, and $\lim_{x\to \infty} u'(x) = 0$ or $1$.
We will say
\begin{itemize}
\item  `$\Eb_k \sim \ln k$' denotes the case $\lim_{x\to \infty} u'(x) =1$ and 
\item `$1\ll \Eb_k \ll \ln k$' stands for the case  $\lim_{x\to \infty} u'(x) =0$.
\end{itemize}
\end{Cond}

In passing, we note the constant $1$ in the limit when $\Eb_k\sim \ln k$ is chosen to be definite, although it could be specified as another positive constant.  Also, as the derivative $u'$ is bounded, that the infimum $\inf \Eb_k/k =0$ is achieved as $k\uparrow\infty$, a specification important in \cite{FS2}.  In addition, the condition allows a comparison, $\Eb_k - \Eb_l = u'(\ln y)\ln(k/l)$, where $y$ is between $k$ and $l$, afforded by the mean value theorem, which will be useful in some later estimates.

\medskip
For fixed $\beta\geq 0$, specify the grand canonical ensemble on $\Omega$, 
\begin{equation*}
\Pa_{\beta,N}(\xi)
=
\dfrac{1}{Z_{\beta,N}}
e^{-\beta \sum_{k\in \N} \xi(k) \Eb_k - N^{-1} \sum_{k\in \N} k \xi(k)}.
\end{equation*}
Observe that $\Pa_{\beta,N}$ has a product structure:  $\Pa_{\beta,N}(\xi) = \prod_{k=1}^{\infty} \Pa_{\beta,N,k} ( \xi(k) )$
where $\Pa_{\beta,N,k}$ is Geometric with parameter 
$$\theta_k = e^{-\beta \Eb_k - k/N},$$
  that is, for $n\geq 0$, 
	$$\Pa_{\beta,N,k}(n) = (1-\theta_k) \theta_k^{n}.$$

Let 
$$c_0 =\min_{k} e^{\beta \Eb_k}.$$
 Trivially $c_0 = 1$ when $\beta = 0$ and $c_0\geq 1$ otherwise.
For fixed $\beta$ and $0\leq c\leq c_0$, we introduce the product measures on $\Omega$,
\begin{equation*}
\Pam_{c,N} (\xi) = \prod_k \Pam_{\beta,c,N,k} (\xi(k)).
\end{equation*}
Here, the marginal $\Pam_{\beta,c,N,k}$ is the Geometric distribution with parameter 
$$\theta_{k,c} = c \theta_k = ce^{-\beta \Eb_k - k/N}$$ 
and mean 
\begin{equation}
\label{rho k c}
\rho_{k,c} = \frac{ \theta_{k,c}}{1-\theta_{k,c}} = \frac{ce^{-\beta \Eb_k - k/N}}{1-ce^{-\beta \Eb_k - k/N}},
\end{equation}
 well-defined when $c\leq c_0$.  

The strength of the parameter $c$ reflects the density of the sizes $\{\xi(k)\}$ in the system.   
Clearly, $\Pa_N = \Pa_{0,N}$ is the special case of $\Pam_{c,N}$ with $\beta =0$ and $c = 1$.  Also, we note the case $c=0$ is trivial, as $\Pam_{0,N}$ puts no particles anywhere.

The family $\{\Pam_{c,N}\}$ will be seen as invariant measures for the dynamics, specified in the next subsection.

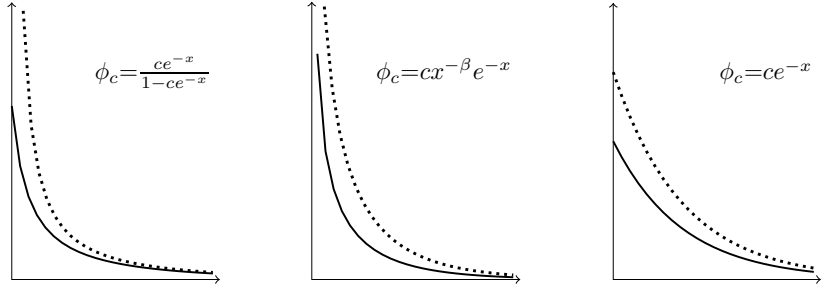
\begin{figure}%[h] % h means here.
\resizebox{3cm}{!}
{
\begin{tikzpicture}
   \draw [<->] (0,4.) -- (0,0) -- (3.,0);
   \draw[dotted,very thick,domain=0.165:2.9] plot (\x, {(1/3)/(e^(\x/2) - 1)}  ); 
   \draw[thick,domain=0:2.9] plot (\x, {(1/3)/(e^((\x+0.25)/2) - 1)}  ); 
   \coordinate [label=left:$\phi_c  {=} \frac{c e^{-x}}{1- ce^{-x}}$] (U) at (3,3);
\end{tikzpicture}
}
\hspace{0.3in}
\resizebox{3cm}{!}
{
\begin{tikzpicture}
   \draw [<->] (0,4.) -- (0,0) -- (3.,0);
   \draw[dotted,very thick,domain=0.18:2.9] plot (\x, {2*\x^(-0.5)* e^(-\x)}  ); 
  \draw[ thick,domain=0.08:2.9] plot (\x, {\x^(-0.5)* e^(-\x)}  ); 
   \coordinate [label=left:$\phi_c {=} c x^{-\beta} e^{-x}$] (U) at (3,3); 
   \end{tikzpicture}
}
\hspace{0.3in}
\resizebox{3cm}{!}
{
\begin{tikzpicture}
   \draw [<->] (0,4.) -- (0,0) -- (3.,0);
   \draw[dotted,very thick,domain=0:2.9] plot (\x, {3* e^(-\x)}  ); 
  \draw[ thick,domain=0:2.9] plot (\x, {2* e^(-\x)}  ); 
   \coordinate [label=left:$\phi_c  {=} ce^{-x}$] (U) at (3,3); 
\end{tikzpicture}
}
\caption{Examples of $\phi_c$ in all the three regimes. The dotted curves represent $c=c_0$ and solid curves are for general $c$'s which are strictly less than $c_0$.}
\label {phi c pictures}
\end{figure}

Recall 
\begin{equation}
\label{Nbeta eqn}
N_{\beta} = e^{\beta \Eb_{N}}.
\end{equation}
We distinguish three regimes depending on the form of $\Eb_k$ and $\beta$:
\begin{itemize}
\item [(1)] $\beta =0$:  $N_{\beta}=1$, 
\item [(2)] $\Eb_k \sim \ln k$ and $0<\beta<1$: $N_\beta=o(N)$ and $\lim_{N\uparrow\infty}N_\beta = \infty$,
\item [(3)] $1\ll \Eb_k \ll \ln k$ and $\beta>0$:  $N_\beta = o(N)$ and $\lim_{N\uparrow\infty}N_\beta = \infty$.
\end{itemize}

When $c<c_0$, in Lemma \ref{Lem: mean variance mu N}, we show the following mean $E_{\Pam_{c,N}}$ and variance $\text{Var}_{\Pam_{c,N}}$ estimates, under $\Pam_{c,N}$, for the number of particles in the system:
 \begin{equation}
\label{mean_var_pam}
E_{\Pam_{c,N}} \sum_{k=1}^\infty \xi(k) =O(NN_\beta^{-1}), \ \ {\rm and \ \ }
\text{Var}_{\Pam_{c,N}} \sum_{k=1}^\infty \xi(k) = o(N^2N^{-2}_\beta).
\end{equation}
  However, when $c=c_0$, we show in Lemma \ref{app_1} in the Appendix that the orders of the expected value and variance are strictly greater.
In a sense, the case $c=c_0$ represents a boundary, avoided for the most part in the sequel, so that we may unify statements and techniques.

In the three cases above, we now associate certain profiles $\phi_c$:
\begin{itemize}
\item [(1)] $\phi_c=\dfrac{c e^{-x}}{1- ce^{-x}}$ when $\beta=0$, 
\item [(2)] $\phi_c=c x^{-\beta} e^{-x}$ when $\Eb_k \sim \ln k$ and $0<\beta<1$, 
\item [(3)] $\phi_c=c e^{-x}$ when $1\ll  \Eb_k\ll \ln k$ and $\beta>0$.
\end{itemize}
When $0\leq c<c_0$, we observe that $\phi_c\in L^1(\R^+)$.
These profiles are the `static' limits of the gradients under the measures $\Pam_{c,N}$.

\begin{Prop} \label {prop: static}
Suppose $\Eb$ and $\beta$ satisfy the conditions of regimes (1), (2) or (3) above.
Fix $0\leq c <c_0$. Then, for any test function $G\in C_c^{\infty}(\R^+_{\circ})$ and $\delta>0$
\begin{equation}
\label{static limit}
\lim_{N\to \infty} \Pam_{c,N}
\left[ \left| \frac{N_\beta}{N}\sum_{k=1}^\infty G(k/N)\xi(k)
-
\int_0^{\infty} G(x) \phi_c(x) dx
\right|
>\delta
 \right]
 =0
\end{equation}
where $\phi_c$ takes the appropriate form in each regime (1), (2) or (3).
\end{Prop}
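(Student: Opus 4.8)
The plan is to reduce the claimed convergence to a law of large numbers for a sum of independent (but non-identically distributed) random variables, which follows from the mean and variance estimates already available in \eqref{mean_var_pam} together with a Riemann-sum approximation. Write $S_N := \frac{N_\beta}{N}\sum_{k=1}^\infty G(k/N)\xi(k)$. Since $G$ is compactly supported in $\R^+_\circ$, say $\operatorname{supp} G\subset[a,b]$ with $0<a<b<\infty$, the sum ranges effectively over $k\in[aN,bN]$, a band of $O(N)$ sites, and the $\xi(k)$ are independent under $\Pam_{c,N}$ with $\xi(k)$ Geometric of parameter $\theta_{k,c}$ and mean $\rho_{k,c}$ given by \eqref{rho k c}. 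The first step is to compute $\E_{\Pam_{c,N}}[S_N]=\frac{N_\beta}{N}\sum_k G(k/N)\rho_{k,c}$ and show it converges to $\int_0^\infty G(x)\phi_c(x)\,dx$. The second step is to show $\operatorname{Var}_{\Pam_{c,N}}(S_N)\to 0$. Then Chebyshev's inequality gives \eqref{static limit}.

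For the mean, the key is the asymptotics of $N_\beta\,\rho_{k,c}$ for $k\approx xN$ in each regime. In regime (1), $\beta=0$, $N_\beta=1$, $\theta_k=e^{-k/N}$, so $\rho_{k,c}=\frac{ce^{-k/N}}{1-ce^{-k/N}}\to \frac{ce^{-x}}{1-ce^{-x}}=\phi_c(x)$ pointwise and boundedly on $[a,b]$ (since $c<c_0=1$ keeps the denominator bounded away from $0$), so $\frac1N\sum_k G(k/N)\rho_{k,c}$ is a Riemann sum converging to $\int G\phi_c$. In regimes (2),(3), $\theta_{k,c}=ce^{-\beta\Eb_k-k/N}\to 0$ uniformly on the band (because $\Eb_k\to\infty$), hence $\rho_{k,c}=\theta_{k,c}(1+o(1))$ uniformly, and one needs $N_\beta\,e^{-\beta\Eb_k}=e^{\beta(\Eb_N-\Eb_k)}\to$ the right profile factor. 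Using the mean-value form $\Eb_N-\Eb_k=u'(\ln y)\ln(N/k)$ from Condition \ref{condition} with $y$ between $k$ and $N$: in regime (2), $u'\to1$ gives $e^{\beta(\Eb_N-\Eb_k)}\to (k/N)^{-\beta}=x^{-\beta}$; in regime (3), $u'\to0$ gives $e^{\beta(\Eb_N-\Eb_k)}\to 1$. Combined with the $e^{-k/N}\to e^{-x}$ factor this yields $N_\beta\rho_{k,c}/N\cdot N \to \phi_c(x)$ in the appropriate sense, and again the sum is a Riemann sum for $\int G\phi_c$. Care is needed to make the convergence of $e^{\beta(\Eb_N-\Eb_k)}$ uniform over $k\in[aN,bN]$ and to control the tail $k>bN$ (there $G=0$, so no issue) and the behavior near $k\sim aN$; uniform convergence on the compact band is what makes the Riemann-sum argument go through, and this is where the precise statement of Condition \ref{condition} (boundedness of $u'$, the limit of $u'$) is used.

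For the variance, $\operatorname{Var}_{\Pam_{c,N}}(S_N)=\frac{N_\beta^2}{N^2}\sum_k G(k/N)^2\operatorname{Var}(\xi(k))$, and $\operatorname{Var}(\xi(k))=\frac{\theta_{k,c}}{(1-\theta_{k,c})^2}=\rho_{k,c}(1+\rho_{k,c})$. Since $\|G\|_\infty^2\sum_k \frac{N_\beta^2}{N^2}\operatorname{Var}(\xi(k))\le \|G\|_\infty^2\,\frac{N_\beta^2}{N^2}\operatorname{Var}_{\Pam_{c,N}}\big(\sum_k\xi(k)\big)$, the estimate \eqref{mean_var_pam}, namely $\operatorname{Var}_{\Pam_{c,N}}(\sum_k\xi(k))=o(N^2N_\beta^{-2})$, immediately gives $\operatorname{Var}_{\Pam_{c,N}}(S_N)\to 0$. (In regime (1) one can alternatively note $\rho_{k,c}(1+\rho_{k,c})$ is bounded on the band and there are $O(N)$ terms, each weighted by $N^{-2}$, so the variance is $O(N^{-1})$.) The restriction $c<c_0$ is exactly what keeps us away from the boundary case treated in Lemma \ref{app_1}, where these orders fail.

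The main obstacle is the uniform control of $N_\beta\,e^{-\beta\Eb_k}=e^{\beta(\Eb_N-\Eb_k)}$ over the band $k\in[aN,bN]$ in regimes (2) and (3): one must convert the hypotheses $u'\to1$ or $u'\to0$ into a genuinely uniform statement that $e^{\beta(\Eb_N-\Eb_k)}\to x^{-\beta}$ (resp. $\to1$) uniformly for $k/N\in[a,b]$, including making sure the intermediate point $y$ from the mean value theorem — which lies between $k\sim aN$ and $N$ — has $\ln y\to\infty$, so that $u'(\ln y)$ is close to its limit. Once this uniform profile convergence is in hand, everything else is a routine Riemann-sum plus Chebyshev argument.
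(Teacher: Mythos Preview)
Your proposal is correct and follows essentially the same approach as the paper: a Chebyshev argument based on mean convergence (via Riemann-sum asymptotics of $N_\beta\rho_{k,c}$, using the mean-value representation $\Eb_N-\Eb_k=u'(\ln y)\ln(N/k)$ from Condition~\ref{condition}) together with the variance bound \eqref{mean_var_pam}. The paper packages this through a slightly more general framework---it shows $\Pam_{c,N}$ satisfies Condition~\ref{condition 1} (specifically the $L^1$ closeness $\frac{1}{N}\sum_k|N_\beta\rho_{k,c}-\overline\rho_{N,k}|\to 0$ of Lemma~\ref{invariant measures are local equli}) and then invokes Proposition~\ref{prop: initial convergence and mass for condition 1}---but the substance of those lemmas is exactly the pointwise profile convergence and variance estimate you sketch.
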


In passing, we remark, when $c=c_0$, the above limit still holds.  See Lemma \ref{app_2} in the Appendix for an argument.

We will state later in Subsection \ref{Initial conditions} that this proposition is a corollary of \eqref{Initial convergence nu N}, which is proved in Proposition \ref{prop: initial convergence and mass for condition 1}.

\subsection{Dynamics}
We now define the gradient evolutions of the Young diagrams.  
Informally, particles at site $k$ jump to its right site $k+1$ with rate 
$\lambda_k := \dfrac{\theta_{k+1}}{\theta_k}$ and to its left site $k$ with rate $1$.
Particles at site $1$ jump only to site $2$.

For each $N\geq 1$, the evolution is a type of zero-range Markov process, $\xi_t = \left( \xi_t(k)\right)_{k\geq 1} \in \Omega$, on $\mathbb{Z}^+$ and generator 
\begin{equation*}
\begin{split}
Lf(\xi)
=&
\sum_{k=1}^{\infty}
\left\{
\lambda_k 
\left [  f\left(\xi^{k,k+1} \right)  - f(\xi)   \right] \chi_{\{\xi(k) > 0\}}
+
\left [  f\left(\xi^{k,k-1}  \right)   - f(\xi) \right ] \chi_{\{\xi(k) > 0,k>1\}}
\right\}
\end{split}
\end{equation*}
where 
\begin{equation} \label {eqn: lambda k}
\lambda_k
=
\dfrac{\theta_{k+1}}{\theta_k}
=
e^{-\beta\left (\Eb_{k+1} - \Eb_k\right) - 1/N }.
\end{equation}
Here, $\xi^{x,y} (k) =
\xi(k) -1$, $\xi(k)+1$, and $\xi(k)$ when respectively $k=x$, $k=y$, and $k\neq x,y$. We note when $\beta> 0$, the process has spatially inhomogeneous rates in that $\lambda_k$ is not constant in $k$.  See \cite{A} for more discussion about zero-range processes.

Under the initial measures we use, there will be a large, but finite number of particles, of order $O(NN_\beta^{-1})$, at all times in the system, and so in fact the process can be seen as a countable state space chain.

In Lemma \ref{lem: invariant measure},
 we verify that $E_{\Pam_{c,N}} (Lf(\xi)) = 0$ for all bounded, test functions $f$ depending only on a finite number of occupation variables $\{\xi(k)\}$.  Therefore, 
the family of measures $\left\{ \Pam_{c,N}\right\}$ is invariant under the dynamics generated by $L$.

We will observe the evolution speeded up by $N^2$, and consider in the sequel the process $\eta_t: = \xi_{N^2t}$, generated by $N^2L$, for times $0\leq t\leq T$, where $T>0$ refers to a fixed time horizon.

\medskip

We will access the space-time structure of the process through the scaled mass empirical measure,
\begin{equation*}
\pi_t^N(dx)
:=
\dfrac{N_\beta}{N} 
\sum_{k=1}^{\infty}
\eta_t (k) \delta_{k/N}(dx).
\end{equation*}
Clearly $\pi_t^{N}$ is a locally finite measure on $\R^+_{\circ}$.
Let $\Mb$ be the space of locally finite measures on $\R^+_{\circ}= (0,\infty)$, and observe that $\pi^N_t\in \Mb$.
Let also $C_c(\R^+_{\circ})$ be the space of compactly supported continuous on $\R^+_{\circ}$, endowed with the topology of uniform convergence on compact sets.
For $\left\{f_k\right\}_{k\in \N}$ a countable dense set in $C_c(\R^+_{\circ})$, we equip $\Mb$ with the distance 
$
d(\mu,\nu)
=
\sum_{k=1}^{\infty}
2^{-k}
\frac{\left|  \int f_k (d\mu - d\nu)  \right|}{1+  \left| \int f_k (d\mu - d\nu)  \right|}
$.

Then, $(\Mb, d)$ is a complete separable metric space
and, for  a sequence of measures in $\Mb$,
convergence in the metric $d$ is equivalent to convergence in the vague topology.
Here, the trajectories $\{\pi^N_t: 0\leq t\leq T\}$ are elements of the Skorokhod space $D([0,T],\Mb )$, endowed with the associated Skorokhod topology.

In the following, for $G\in C_c(\R^+_o)$ and $\pi\in \Mb$, denote $\langle G, \pi\rangle = \int_0^\infty G(u)d\pi(u)$.
Also, for a given measure $\mu$, we denote expectation and variance with respect to $\mu$ by $E_\mu$ and $\text{Var}_\mu$.  Also,
the process measure and associated expectation governing $\eta_\cdot$ starting from $\mu$ will be denoted by $\P_\mu$ and $\E_\mu$. 

\subsubsection{Attractiveness of the dynamics}
\label{subsec: attractiveness}
Since $\chi_{\{\xi(k)>0\}}$ is an increasing function in $\xi$, the dynamics generated by $L$ is `attractive', a fact that allows use of the `basic coupling' in our proofs (cf.\,\cite{A}, Chapter II in \cite{L}):
Let $\mu$, $\nu$ be two probability measures on $\Omega$.  We say that  $\mu\leq \nu$, that is $\mu$ is stochastically dominated by $\nu$, 
if for all $f: \Omega \to \R$ coordinately increasing, we have $E_{\mu}(f) \leq E_{\nu} (f)$. 
Attractiveness asserts that if $\mu\leq \nu$, then we have $\E_{\mu}(f(\xi_t)) \leq \E_{\nu}(f(\xi_t))$ for all $t\geq 0$.

\subsection{Initial conditions}
\label{Initial conditions}

We first specify a set of natural initial conditions, which will be a case of a more general class of initial conditions given later.  Consider an initial density profile $\rho_0:\R^+_{\circ}\to \R^+$ such that $\rho_0\in L^1(\R^+_{\circ})$.
For all $N,k\in \N$, let
$$\overline \rho_{N,k} = N\int_{(k-1)/N}^{k/N} \rho_0(x) dx.$$
Define
a sequence of `local equilibrium' measures $\left\{\mu^N\right\}_{N\in\N}$ corresponding to $\rho_0$:
\begin{enumerate}
\item For all $N\in \N$ and $\eta\in\Omega$, $\mu^N (\eta) = \prod_{k=1} \mu^N_k (\eta(k))$ with $\mu^N_k$ Geometric distributions with parameter $\theta_{N,k}$.
\item $ \lim_{N\to\infty} \dfrac 1N \sum_{k=1}^{\infty} |N_{\beta} \rho_{N,k} - \overline \rho_{N,k}| =0$
where $\rho_{N,k}=\dfrac{\theta_{N,k}}{1- \theta_{N,k}}$ is the mean of $\mu^N_k$.
\item $\mu^N$ is stochastically bounded by $\Pam_{c,N}$ for some $0\leq c<c_0$.
\end{enumerate}
We note that the last condition, given that the marginals of $\mu^N$ are Geometric, is equivalent to $\theta_{N,k}\leq \theta_{k,c}=c \theta_k=ce^{-\beta \Eb_k - k/N}$.

As might be suspected, given the family of profiles $\{\phi_c\}$ are the static limits when the process is started from $\{\Pam_{c,N}\}$ (Proposition \ref{prop: static}), we show in Lemma \ref{invariant measures are local equli}, that the invariant measures $\Pam_{c,N}$, for $0\leq c <c_0$, are local equilibrium measures with $\theta_{N,k} \equiv \theta_{k,c}$ and $\rho_0=\phi_c$.

We now specify a more general class of initial measures $\nu^N$, namely those which satisfy the following condition.  In Proposition \ref{prop: mu N in nu N}, we verify that the local equilibria $\mu^N$ are in fact explicit members of this class.

\begin{Cond} \label {condition 1}
For $N\in \N$, let $\nu^N$ be a sequence of probability measures on $\Omega$.
\begin{enumerate}
\item
Suppose $\rho_0\in L^1(\R^+)$, and for each $N\in \N$, $\nu^N $ is a product measure, $\nu^N(\eta) = \prod_{k=1} \nu^N_k (\eta(k))$ such that marginals
$\nu^N_k$ have mean $m_{N,k}$ 
 where
\begin{equation*}
\lim_{N\to\infty} \dfrac 1N \sum_{k=1}^{\infty} | N_{\beta} m_{N,k} - \overline \rho_{N,k}| =0.
\end{equation*}

\item
We have $\nu^N$ is stochastically bounded by $\Pam_{c,N}$ for a $0\leq c< c_0$.
\item
The relative entropy of $\nu^N$ with respect to $\Pam_{c,N}$ is of order $N N_\beta^{-1}$:  Let $f_0 = d\nu^N/d\Pam_{c,N}$.  Then, $H(\nu^N | \Pam_{c,N}):=  \int f_0 \ln f_0 d\Pam_{c,N} =O(NN_{\beta}^{-1})$.

\end{enumerate}
\end{Cond}

\medskip
 When the process starts from $\{\nu^N\}_{N\in \N}$, in the class satisfying Condition \ref{condition 1}, we will denote by $\P_N:= \P_{\nu^N}$ and $\E_N:= \E_{\nu^N}$, the associated process measure and expectation.
Members of this class have the following properties, useful in later arguments:

\begin{enumerate}
\item[$\bullet$] Total bound on the number of particles  (Lemma \ref{lem: total number}):  For $0\leq t\leq T$,
\begin{equation}
\label{total number}
\E_N \sum_{k=1}^\infty \eta_t (k) = O(N N^{-1}_\beta).
\end{equation}
\item[$\bullet$] Variance bound (Lemma \ref{lem: var bound}):  For $0\leq t\leq T$,
\begin{equation}\label{nu N var bound}
 \sum_{k=1}^{\infty} \text{Var}_{\P_N}(\eta_t(k)) = o\big(N^2N^{-2}_{\beta}\big).
\end{equation}
\item[$\bullet$] Site particle bound (Lemma \ref{lem: site particle bound}):  For $0<a<b$ and $0\leq t\leq T$,
\begin{equation}
\label{site_particle}
\sup_N \sup_{aN\leq k\leq bN} \sup_{0\leq t\leq T} N_\beta \E_N\big[ \eta_t(k)\big] <\infty.
\end{equation}

\item[$\bullet$] Initial convergence (Proposition \ref{prop: initial convergence and mass for condition 1}):  For any $G\in C_c^{\infty}(\R^+_{\circ})$, and $\delta>0$,
\begin{equation} \label {Initial convergence nu N}
\lim_{N\to \infty} \nu^N
\Big[ \Big|
\frac{N_\beta}{N}\sum_{k=1}^\infty G(k/N)\eta(k)
-
\int_0^{\infty} G(x) \rho_0(x) dx
\Big|
>\delta
 \Big]
 =0.
\end{equation}
\end{enumerate}

By the discussion of attractiveness in Subsection \ref{subsec: attractiveness}, and that $\nu^N\leq \Pam_{c,N}$ and $\Pam_{c,N}$ is an invariant measure, we have 
\begin{equation}
\label{attractiveness}
\E_N\left[f(\eta_t)\right] \leq \E_{\Pam_{c,N}}\left[f(\eta_t)\right] = E_{\Pam_{c,N}}\left[f(\eta)\right],
\end{equation}
for all functions $f$ increasing coordinatewise, and all $t\geq 0$.

In addition, we see that Proposition \ref{prop: static} is a corollary of \eqref{Initial convergence nu N}, since the invariant measures $\Pam_{c,N}$, for $c<c_0$, are local equilibrium measures, and in fact satisfy Condition \ref{condition 1}.

We note, as a consequence of the attractiveness and \eqref{Initial convergence nu N}, that $\int_0^\infty G(x)\rho_0(x)dx \leq \int_0^\infty G(x)\phi_c(x)dx$ for nonnegative $G$, and so necessarily $\rho_0\leq \phi_c$.

\subsection{Results}
Following on the discussion of `static' limits, we now arrive at our main results on the evolution of macroscopic density.  These separate into three limits depending on which of the three regimes are in force.  

Let $\C$ be the space of functions $\rho: [0,T]\times \R^+\mapsto \R^+$  such that the map $t\in [0,T]\mapsto \rho(t,x)dx\in \Mb$ is vaguely continuous; that is, for each $G\in C_c^\infty(\R^+_\circ)$, the map $t\in [0,T]\mapsto \int_0^\infty G(x)\rho(t,x)dx$ is continuous.

A standing assumption in the sequel is that the process $\eta_\cdot$ begins from initial measures $\{\nu^N\}_{N\in \N}$ satisfying Condition \ref{condition 1}.

\begin{Thm} \label {thm: beta 0}
Suppose $\beta=0$ and $\rho_0\in L^1(\R^+)$.
Then, for any $t\geq 0$, test function $G\in C_c^{\infty}(\R^+_{\circ})$, and $\delta>0$,
\begin{equation*}
\lim_{N\to \infty} \P_N
\Big[ \Big|
\langle G, \pi^N_t \rangle
-
\int_0^{\infty} G(x) \rho(t,x) dx
\Big|
>\delta
 \Big]
 =0,
\end{equation*}
where $\rho(t,x)$ is the unique weak solution in the class $\C$ of the equation
\begin{equation} \label {eqn: macro beta 0}
\begin{cases}
\partial_t \rho
= \partial_x^2 \dfrac{\rho}{\rho+1} + \partial_x \dfrac{\rho}{\rho+1}\\
\rho(0,\cdot)=\rho_0(\cdot), \quad
\dsp \int_0^{\infty} \rho(t,x) dx=\int_0^{\infty} \rho_0(x) dx\\
\rho(t,\cdot) \leq \phi_c(\cdot)\in L^1(\R^+) \text{ for all } t\in [0,T].
\end{cases}.
\end{equation}
\end{Thm}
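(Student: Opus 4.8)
The plan is to follow the standard architecture of hydrodynamic limit proofs via the entropy method, adapted to this inhomogeneous, boundary-confined WAZRP, using the specialized inputs announced earlier in the paper. First I would establish \emph{tightness} of the laws of $\{\pi^N_\cdot\}_{N\in\N}$ in $D([0,T],\Mb)$, which (by the vague topology on $\Mb$) reduces to tightness of $\{\langle G,\pi^N_t\rangle\}$ for each fixed $G\in C_c^\infty(\R^+_\circ)$; this uses It\^o's formula applied to $\langle G,\pi^N_t\rangle$, the martingale estimates of Section~\ref{section: martingale}, the total-particle bound \eqref{total number}, and Aldous' criterion. Simultaneously, from the martingale decomposition one obtains that for each $G$,
\begin{equation*}
\langle G,\pi^N_t\rangle - \langle G,\pi^N_0\rangle
= \int_0^t \frac{N_\beta}{N}\sum_{k\geq 1} \Big( N^2\, \Delta^N_k G \cdot \lambda_k + \text{drift terms}\Big)\chi_{\{\eta_s(k)>0\}}\, ds + M^N_t,
\end{equation*}
where $\Delta^N_k G$ is a discrete Laplacian-type term of order $N^{-2}$, the drift from the asymmetry $N(\lambda_k-1)\to -1$ contributes a first-order term, and $M^N_t$ is a martingale with variance $o(1)$ by \eqref{total number}. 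The nonlinearity enters through the cylinder function $\chi_{\{\eta_s(k)>0\}}$, which is not a function of the empirical density.

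The heart of the proof is then the \emph{replacement step}: one must show that the time-space average of $\chi_{\{\eta_s(k)>0\}}$ near a macroscopic point $x$ can be replaced by $\tilde{\rho}/(\tilde\rho+1)$ evaluated at the local empirical density, which under the grand canonical ensemble is exactly $E_{\Pam}[\chi_{\{\xi(k)>0\}}] = \theta_k = \rho_{k}/(\rho_k+1)$. Because the process is spatially inhomogeneous and not translation invariant, I would invoke the \emph{local $1$-block and $2$-block estimates} of Section~\ref{section: 1 and 2 blocks} (the modifications of the \cite{JLS} tagged-particle replacement lemmas), which rest on the spectral gap estimate, Feynman--Kac and Rayleigh bounds, and the attractiveness/basic-coupling truncation. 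When $\beta=0$ one genuinely needs both the $1$-block replacement ($\chi_{\{\eta(k)>0\}}\rightsquigarrow$ average over a microscopic box of size $\epsilon N$) and the $2$-block replacement (microscopic average $\rightsquigarrow$ macroscopic average over a box of size $\delta$), because the function $\rho\mapsto \rho/(\rho+1)$ is genuinely nonlinear; this is where the local-equilibrium/relative-entropy hypothesis of Condition~\ref{condition 1}(3) and the variance bound \eqref{nu N var bound} are used to control the Radon--Nikodym cost. After the replacement, any limit point $\mathcal{Q}$ of $\{\text{law of }\pi^N_\cdot\}$ is concentrated on trajectories $\rho(t,x)\,dx$ satisfying the weak formulation
\begin{equation*}
\int_0^\infty G(x)\rho(t,x)\,dx - \int_0^\infty G(x)\rho_0(x)\,dx
= \int_0^t \int_0^\infty \Big(G''(x) + G'(x)\Big)\frac{\rho(s,x)}{\rho(s,x)+1}\,dx\,ds
\end{equation*}
for all $G\in C_c^\infty(\R^+_\circ)$, with $\rho(0,\cdot)=\rho_0$ coming from the initial convergence \eqref{Initial convergence nu N}; conservation of total mass $\int\rho(t,x)\,dx=\int\rho_0(x)\,dx$ comes from the fact that the dynamics conserves particle number (and no mass escapes at the boundary, controlled via \eqref{total number} and \eqref{site_particle}); and the pointwise bound $\rho(t,\cdot)\leq\phi_c$ comes from attractiveness \eqref{attractiveness}, which propagates the initial domination $\rho_0\leq\phi_c$ in time.

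To conclude convergence (not just of subsequences) I would invoke \emph{uniqueness of weak solutions} in the class $\C$ subject to the mass and $L^1$-domination constraints, proved in Section~\ref{section: uniqueness}; uniqueness forces every limit point to be the Dirac mass on the unique $\rho$, hence $\pi^N_\cdot\Rightarrow \delta_{\rho(t,x)dx}$, which is equivalent to the stated convergence in probability of $\langle G,\pi^N_t\rangle$ for each fixed $t$ and $G$. I expect the \textbf{main obstacle} to be the replacement step at the boundary and in the tail: the lattice $\Z^+$ has a reflecting wall at $k=1$, the rate $\lambda_k$ is inhomogeneous, and there are infinitely many sites with only $O(NN_\beta^{-1})$ total particles, so the usual translation-invariant $2$-block argument does not apply directly and one must carefully localize, truncate large occupation numbers (using attractiveness to dominate by $\Pam_{c,N}$), and handle the region $k/N$ near $0$ and near $\infty$ where test functions are supported away from $0$ but mass can still be large near the wall; controlling the martingale and error terms uniformly in these regions, via the site-particle bound \eqref{site_particle} and the variance bound \eqref{nu N var bound}, is the delicate part. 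A secondary technical point is verifying that the first-order drift term really produces exactly $\partial_x\big(\rho/(\rho+1)\big)$ and not a different coefficient — this comes from the precise expansion $N(\lambda_k-1) = N(e^{-1/N}-1)\to -1$ when $\beta=0$, combined with the same $1$- and $2$-block replacement applied to $\chi_{\{\eta(k)>0\}}$ in the drift.
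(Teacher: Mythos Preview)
Your proposal is correct and follows essentially the same architecture as the paper's proof in Section~\ref{section beta 0}: martingale decomposition, vanishing of the quadratic variation, the local $1$- and $2$-block replacement (Lemmas~\ref{one block lem} and~\ref{lem: 2 blocks}) to convert $\chi_{\{\eta_s(k)>0\}}$ into $\eta^{\epsilon N}_s(k)/(1+\eta^{\epsilon N}_s(k))$, identification of limit points via Lemmas~\ref{lem: rho < phi c} and~\ref{lem mass conserve}, and uniqueness from Section~\ref{section: uniqueness}. The only minor deviation is that the paper establishes tightness directly in the \emph{uniform} topology (Lemma~\ref{tightness beta 0}) rather than via Aldous' criterion, which it then uses to pass to fixed-time marginals in Step~5 and in the proofs of Lemmas~\ref{lem: rho < phi c}--\ref{lem mass conserve}; also note the $1$-block goes to a box of fixed size $l$ (not $\epsilon N$) and the $2$-block then bridges from $l$ to $\epsilon N$.
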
 

\begin{Thm} \label {thm: ln k} 
Suppose $\Eb_k \sim \ln k$, $0<\beta<1$ and $\rho_0\in L^1(\R^+)$.
Then, for any $t\geq 0$, test function $G\in C_c^{\infty}(\R^+_{\circ})$, and $\delta>0$,
\begin{equation*}
\lim_{N\to \infty} \P_N
\Big[ \Big| 
\langle G, \pi^N_t \rangle
-
\int_0^{\infty} G(x) \rho(t,x) dx
\Big|
>\delta
 \Big]
 =0,
\end{equation*}
where $\rho(t,x)$ is the unique weak solution  in the class $\C$ of the equation
\begin{equation} \label {eqn: macro ln k} 
\begin{cases}
\partial_t \rho
=
\partial_x^2\rho + \partial_x \Big(   \dfrac{\beta +x}{x} \rho\Big)\\
\rho(0,\cdot)=\rho_0(\cdot), \quad
\dsp \int_0^{\infty} \rho(t,x) dx=\int_0^{\infty} \rho_0(x) dx\\
\rho(t,\cdot) \leq \phi_c(\cdot)\in L^1(\R^+) \text{ for all } t\in [0,T].
\end{cases} .
\end{equation}
\end{Thm}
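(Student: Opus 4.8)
The plan is to follow the standard martingale route for hydrodynamic limits, adapted to the weakly asymmetric, spatially inhomogeneous zero-range dynamics on $\Z^+$, as outlined in the introduction. First I would apply It\^o's formula to $\langle G, \pi^N_t\rangle$ for $G\in C_c^\infty(\R^+_\circ)$, writing it as $\langle G, \pi^N_0\rangle$ plus a drift term $\int_0^t N^2 L\langle G,\pi^N_s\rangle\,ds$ plus a martingale $M^N_t$. A discrete summation by parts on the generator action, using $\lambda_k = e^{-\beta(\Eb_{k+1}-\Eb_k)-1/N}$ and the mean value theorem comparison $\Eb_{k+1}-\Eb_k = u'(\ln y_k)\ln((k+1)/k) \approx u'(\ln k)/k$, produces (i) a discrete Laplacian term $\tfrac{N_\beta}{N}\sum_k (\Delta_N G)(k/N)\,\chi_{\{\eta_s(k)>0\}}$ converging formally to $\int \partial_x^2 G\cdot(\text{something})$, and (ii) a first-order drift term whose coefficient involves $N(\lambda_k-1)\approx -(1 + \beta u'(\ln k))$, which in regime (2), since $u'(x)\to 1$, tends to $-(1+\beta)$ — but the position-dependence enters because, as explained in the introduction, when $\beta>0$ one must track $N_\beta\chi_{\{\eta_s(k)>0\}}$, and the relevant local average is the linearization $\rho$ of $\rho_{k,c}/(1+\rho_{k,c})$ evaluated against the equilibrium with the position-dependent parameter $\theta_k = e^{-\beta\Eb_k - k/N}$, which after rescaling produces the factor $(\beta+x)/x$. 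Careful bookkeeping of these two contributions, together with $\Eb_{k+1}-\Eb_k\to 0$ so that $N_\beta\lambda_k/N_\beta \to 1$ type cancellations, is what yields the announced PDE.

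The heart of the argument is the replacement of the nonlinear local functions $N_\beta\chi_{\{\eta_s(k)>0\}}$ appearing in the drift by functions of the empirical density $\pi^N_s$. Because the dynamics is neither translation invariant nor spatially homogeneous, I would invoke the modified \emph{local} $1$-block estimate from Section \ref{section: 1 and 2 blocks} (which, as noted, suffices when $\beta>0$; the $2$-block replacement is not needed here). Concretely, for $G$ supported in a compact $[a,b]\subset\R^+_\circ$, over the sites $k\in[aN,bN]$ one averages $\chi_{\{\eta_s(k)>0\}}$ over a microscopic box of size $\epsilon N$ and replaces the box average by $\Phi(\text{block density})$, where on such scales the energy $\Eb_k$ is essentially constant (since $\Eb_{k+1}-\Eb_k\to 0$), so locally the process looks like a homogeneous zero-range process with geometric marginals and $\Phi(\rho)=\rho/(1+\rho)$; then one linearizes. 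This step uses the site particle bound \eqref{site_particle}, the variance bound \eqref{nu N var bound}, the spectral gap estimate, Feynman--Kac/Rayleigh bounds, and the attractiveness coupling \eqref{attractiveness} to justify the truncation of large occupation numbers and control the error. Tightness of $\{\pi^N_\cdot\}$ in $D([0,T],\Mb)$, together with the martingale quadratic variation estimate showing $M^N_t\to 0$, is taken from Section \ref{section: tightness}; the initial convergence \eqref{Initial convergence nu N} handles $\langle G,\pi^N_0\rangle \to \int G\rho_0$.

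Having these, any subsequential limit point $\rho(t,x)\,dx$ of $\pi^N_t$ is concentrated on trajectories in $\C$ satisfying the weak formulation of \eqref{eqn: macro ln k}; the conservation of mass $\int\rho(t,x)\,dx = \int\rho_0(x)\,dx$ follows because the dynamics conserves particle number and the total mass is $O(1)$ and uniformly controlled (Lemma \ref{lem: total number}), while the domination $\rho(t,\cdot)\le\phi_c$ follows from \eqref{attractiveness} and Proposition \ref{prop: static}. Uniqueness of the weak solution in the class $\C$ with these side conditions, proved in Section \ref{section: uniqueness}, then upgrades convergence along subsequences to full convergence and promotes convergence in distribution of $\langle G,\pi^N_t\rangle$ to convergence in probability to the deterministic value $\int G\,\rho(t,\cdot)$, which is the claim. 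The main obstacle is the local $1$-block replacement in the inhomogeneous setting: one must show that on mesoscopic blocks the spatial variation of the rates $\lambda_k$ is negligible and that the spectral gap of the localized generator is large enough to run the standard entropy/relative-entropy comparison despite the lack of translation invariance — this is precisely where the mean value theorem bound on $\Eb_{k+1}-\Eb_k$, the bound $\inf_k\Eb_k/k = 0$, and attractiveness are essential.
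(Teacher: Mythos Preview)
Your global strategy---martingale decomposition, $1$-block replacement, tightness, uniqueness---matches the paper's. But there is a genuine error in your identification of the drift and of where the factor $(\beta+x)/x$ comes from.

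You write $N(\lambda_k-1)\approx -(1+\beta u'(\ln k))\to -(1+\beta)$. This is wrong: from $\lambda_k=e^{-\beta(\Eb_{k+1}-\Eb_k)-1/N}$ and $\Eb_{k+1}-\Eb_k=u'(\ln y_k)\ln\frac{k+1}{k}\approx u'(\ln k)\cdot\frac{1}{k}$, one gets
\[
N(\lambda_k-1)\ \approx\ -\beta u'(\ln k)\cdot\frac{N}{k}-1\ \xrightarrow[\,k/N\to x\,]{}\ -\frac{\beta}{x}-1\ =\ -\frac{\beta+x}{x}.
\]
You dropped the factor $N/k$, which is exactly what makes the drift position-dependent. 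This is the content of the paper's computation \eqref{alpha_D}; the spatial inhomogeneity of the limiting PDE comes \emph{entirely} from $N(\lambda_k-1)$, not from the replacement step.

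Correspondingly, your account of the replacement is off. The $1$-block estimate replaces $N_\beta\chi_{\{\eta_s(k)>0\}}$ by $N_\beta\,\eta_s^l(k)/(1+\eta_s^l(k))$ over a block of \emph{fixed} size $l$ (not $\epsilon N$). The crucial observation when $\beta>0$ is that $N_\beta\eta_s^l(k)=O(1)$ forces $\eta_s^l(k)=o(1)$, so one linearizes $N_\beta\,\eta_s^l(k)/(1+\eta_s^l(k))\approx N_\beta\eta_s^l(k)$, and then smoothness of $G$ lets you undo the block average, recovering $N_\beta\eta_s(k)=\langle\delta_{k/N},\pi^N_s\rangle$ directly. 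No position-dependent factor is produced here, and no $\epsilon N$-box (hence no $2$-block estimate) is needed; this is precisely why the paper's Replacement Lemma in Section \ref{section beta not 0} has $N_\beta\eta_t(k)$ on the right-hand side rather than a function of a macroscopic block average. Once you fix the computation of $N(\lambda_k-1)$, the rest of your outline goes through as written.
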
 

\begin{Thm} \label {thm: lnln k} 
Suppose $1\ll \Eb_k \ll \ln k$, $\beta>0$ and $\rho_0\in L^1(\R^+)$.
Then, for any $t\geq 0$, test function $G\in C_c^{\infty}(\R^+_{\circ})$, and $\delta>0$,
\begin{equation*}
\lim_{N\to \infty} \P_N
\Big[ \Big| 
\langle G, \pi^N_t \rangle
-
\int_0^{\infty} G(x) \rho(t,x) dx
\Big|
>\delta
 \Big]
 =0,
\end{equation*}
where $\rho(t,x)$ is the unique weak solution  in the class $\C$ of the equation
\begin{equation} \label {eqn: macro lnln k}
\begin{cases}
\partial_t \rho
=
\partial_x^2\rho + \partial_x \rho \\
\rho(0,\cdot)=\rho_0(\cdot), \quad
\dsp \int_0^{\infty} \rho(t,x) dx=\int_0^{\infty} \rho_0(x) dx\\
\rho(t,\cdot) \leq \phi_c(\cdot)\in L^1(\R^+) \text{ for all }  t\in [0,T].
\end{cases} .
\end{equation}
\end{Thm}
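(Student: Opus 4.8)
The plan is to run the standard empirical-measure martingale scheme: a Dynkin decomposition of $\langle G,\pi^N_t\rangle$, a generator computation tailored to the regime $1\ll\Eb_k\ll\ln k$, a local one-block replacement that linearizes the occupation indicator, and finally tightness together with the uniqueness of Section \ref{section: uniqueness}. Fix $G\in C_c^\infty(\R^+_{\circ})$ with $\mathrm{supp}\,G\subset[a,b]$, $a>0$, so that only sites $k$ with $aN\le k\le bN$ contribute and the boundary site $k=1$ is irrelevant once $N$ is large. By Dynkin's formula,
\[
\langle G,\pi^N_t\rangle-\langle G,\pi^N_0\rangle=\int_0^t N^2L\,\langle G,\pi^N_s\rangle\,ds+M^N_t(G),
\]
where $M^N_t(G)$ is a martingale whose predictable quadratic variation is bounded by $CN_\beta^2N^{-2}\sum_k\eta_s(k)$; by \eqref{total number} its expectation is $O(N_\beta N^{-1})=o(1)$ since $N_\beta=o(N)$, so $M^N_\cdot(G)\to0$ uniformly in $L^2(\P_N)$.

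Inserting $f=\langle G,\cdot\rangle$ into $L$ and Taylor expanding $G((k\pm1)/N)-G(k/N)=\pm N^{-1}G'(k/N)+\tfrac12N^{-2}G''(k/N)+O(N^{-3})$ yields
\[
N^2L\,\langle G,\pi^N_s\rangle=\frac{N_\beta}{N}\sum_{k\ge1}\chi_{\{\eta_s(k)>0\}}\Big[\tfrac12(\lambda_k+1)G''(k/N)+N(\lambda_k-1)G'(k/N)\Big]+O(N^{-1}).
\]
By Condition \ref{condition}, for $k\asymp N$ one has $\beta(\Eb_{k+1}-\Eb_k)=\beta u'(\ln y)\ln(1+1/k)=o(1/N)$ uniformly on $[a,b]$ because $u'(\ln y)\to0$; hence, from \eqref{eqn: lambda k}, $\lambda_k=e^{-1/N}+o(1/N)$, so $\tfrac12(\lambda_k+1)\to1$ and $N(\lambda_k-1)\to-1$ uniformly. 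Therefore the drift equals $\frac{N_\beta}{N}\sum_{k\ge1}\chi_{\{\eta_s(k)>0\}}[\,G''(k/N)-G'(k/N)\,]$ up to an $o(1)$ error in $\P_N$-probability, uniformly in $s\in[0,T]$.

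The crucial step replaces $N_\beta\chi_{\{\eta_s(k)>0\}}$ by $N_\beta\eta_s(k)$. One averages over a microscopic block of size $\ell$ and applies the local $1$-block estimate of Section \ref{section: 1 and 2 blocks} --- which rests on the spectral gap bound, the Feynman--Kac and Rayleigh formulas, and the attractive coupling of Subsection \ref{subsec: attractiveness}, in the spirit of \cite{JLS}. Since here the occupation numbers near level $N$ are of order $N_\beta^{-1}\to0$ (stochastic domination by $\Pam_{c,N}$ and the site bound \eqref{site_particle}), the block averages of $N_\beta\chi_{\{\eta_s(\cdot)>0\}}$ and of $N_\beta\eta_s(\cdot)$ agree to leading order, so --- unlike the $\beta=0$ case --- no separate $2$-block estimate is required. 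As $G''-G'\in C_c(\R^+_{\circ})$, re-summing after the block average turns the drift into $\langle G''-G',\pi^N_s\rangle+o(1)$, giving the approximate weak identity
\[
\langle G,\pi^N_t\rangle-\langle G,\pi^N_0\rangle=\int_0^t\langle G''-G',\pi^N_s\rangle\,ds+o_{\P_N}(1).
\]
I expect this linearizing $1$-block step --- keeping track of the factor $N_\beta$ against a spatially inhomogeneous, non-translation-invariant reference measure --- to be the principal obstacle.

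Finally, tightness of $\{\pi^N_\cdot\}$ in $D([0,T],\Mb)$ follows from the martingale bound above, the estimates \eqref{total number} and \eqref{nu N var bound}, and an Aldous-type criterion (Section \ref{section: tightness}). Passing to the limit, any subsequential limit point is supported on trajectories $t\mapsto\rho(t,x)\,dx\in\C$ satisfying $\int_0^\infty G\rho(t,\cdot)-\int_0^\infty G\rho_0=\int_0^t\!\int_0^\infty(G''-G')\rho(s,\cdot)\,ds$ for every $G\in C_c^\infty(\R^+_{\circ})$, i.e. the weak form of $\partial_t\rho=\partial_x^2\rho+\partial_x\rho$. The initial datum $\rho(0,\cdot)=\rho_0$ comes from \eqref{Initial convergence nu N}; the bound $\rho(t,\cdot)\le\phi_c\in L^1(\R^+)$ from attractiveness \eqref{attractiveness}; and the conservation $\int_0^\infty\rho(t,\cdot)=\int_0^\infty\rho_0$ holds because the dynamics preserves the total particle number (particles at $k=1$ can only move to $k=2$), while domination by $\Pam_{c,N}$, whose profile $\phi_c=ce^{-x}$ has exponential tails, prevents loss of mass to infinity. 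By the uniqueness of weak solutions in $\C$ proved in Section \ref{section: uniqueness}, the subsequential limit is the stated $\rho$, hence $\pi^N_t\to\rho(t,\cdot)\,dx$ in $\P_N$-probability for each $t$, as claimed.
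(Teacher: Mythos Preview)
Your proposal is correct and follows essentially the same route as the paper: Dynkin/martingale decomposition with the quadratic variation bound of Lemma~\ref{lem: martingale bounds}, the generator computation \eqref{gen_comp} with $N(\lambda_k-1)\to\alpha(x,\beta)=-1$ in this regime, the $1$-block linearization of $N_\beta\chi_{\{\eta(k)>0\}}$ to $N_\beta\eta(k)$ (exactly the Replacement Lemma of Section~\ref{section beta not 0}, which indeed avoids a $2$-block estimate), tightness in the uniform topology, and then Lemmas~\ref{lem: rho < phi c}, \ref{lem mass conserve} together with the uniqueness of Subsection~\ref{section uniqueness beta neq 0}. You have also correctly identified the linearizing $1$-block step as the main technical point.
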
 

\medskip 
We now go back to the Young diagrams and explain the results in this context.
For each particle configuration $\eta_t$, the corresponding shape function of the diagram is 
\begin{equation} \label {def of psi N}
\psi_N(t,x) 
=
\dfrac {N_{\beta}}{N} \sum_{k \geq xN} \eta_t(k).
\end{equation}
 The hydrodynamic limits for the diagrams will follow from the hydrodynamic limits of the density profiles.

Let $\Wb$ be the class of continuous functions $\psi:  [0,T] \times \R^+\rightarrow \R^+$ such that, for each $t\in [0,T]$, 
$\psi(t,\cdot):\R^+\rightarrow\R^+$ is absolutely continuous.
\begin{Cor} \label {cor: diagrams}
With respect to the shape functions, the following limits hold.
\begin{enumerate}
\item 
Consider the assumptions of Theorem \ref{thm: beta 0}.
Then, for any $t\geq 0$, test function $G\in C_c^{\infty}(\R^+_{\circ})$, and $\delta>0$,
\begin{equation} \label {eqn: convergence for psi}
\lim_{N\to \infty} \P_N
\Big[ \Big|
\int_0^{\infty} G(x) \psi_N(t,x) dx
-
\int_0^{\infty} G(x) \psi(t,x) dx
\Big|
>\delta
 \Big]
 =0,
\end{equation}
where $\psi(t,x)$ is the unique weak solution in the class $\Wb$ 
of the equation
\begin{equation} \label {eqn: macro beta 0, psi}
\begin{cases}
\partial_t  \psi
= \partial_x \Big(\dfrac{\partial_x \psi}{1- \partial_x \psi} \Big)+ \dfrac{\partial_x \psi}{1-\partial_x \psi}\\
\psi(0,x)=\int_{x}^{\infty} \rho_0(u) du, \quad \lim_{x\to \infty} \psi(t,x) = 0\\
\psi(t,0) = \psi(0,0) , 
\quad
0\leq -\partial_x \psi(t,\cdot) \leq \phi_c(\cdot) \text{ for all } t\in [0,T].
\end{cases}.
\end{equation}
\item 
Consider the assumptions of Theorem \ref{thm: ln k}.
Then, for any $t\geq 0$, test function $G\in C_c^{\infty}(\R^+_{\circ})$, and $\delta>0$,
\begin{equation*}
\lim_{N\to \infty} \P_N
\Big[ \Big|
\int_0^{\infty} G(x) \psi_N(t,x) dx
-
\int_0^{\infty} G(x) \psi(t,x) dx
\Big|
>\delta
 \Big]
 =0,
\end{equation*}
where $\psi(t,x)$ is the unique weak solution in the class $\Wb$ of the equation
\begin{equation}  \label{eqn: macro ln k, psi}
\begin{cases}
\partial_t \psi
=
\partial_x^2 \psi + \dfrac{\beta+x}{x} \partial_x \psi \\
\psi(0,x)=\int_{x}^{\infty} \rho_0(u) du, \quad \lim_{x\to \infty} \psi(t,x) = 0\\
\psi(t,0) = \psi(0,0) ,
\quad
0\leq -\partial_x \psi(t,\cdot) \leq \phi_c(\cdot) \text{ for all } t\in [0,T].
\end{cases}.
\end{equation}
\item 
Consider the assumptions of Theorem \ref{thm: lnln k}.
Then, for any $t\geq 0$, test function $G\in C_c^{\infty}(\R^+_{\circ})$, and $\delta>0$,
\begin{equation*}
\lim_{N\to \infty} \P_N
\Big[ \Big|
\int_0^{\infty} G(x) \psi_N(t,x) dx
-
\int_0^{\infty} G(x) \psi(t,x) dx
\Big|
>\delta
 \Big]
 =0,
\end{equation*}
where $\psi(t,x)$ is the unique weak solution in the class $\Wb$ of the equation
\begin{equation}
\label{eqn: macro lnln k, psi}
\begin{cases}
\partial_t \psi
=
\partial_x^2 \psi + \partial_x  \psi \\
\psi(0,x)=\int_{x}^{\infty} \rho_0(u) du, \quad \lim_{x\to \infty} \psi(t,x) = 0\\
\psi(t,0) = \psi(0,0) ,
\quad
0\leq -\partial_x \psi(t,\cdot) \leq \phi_c(\cdot) \text{ for all } t\in [0,T].
\end{cases}.
\end{equation}
\end{enumerate}
\end{Cor}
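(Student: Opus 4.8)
The plan is to deduce the shape-function limits directly from the density limits of Theorems \ref{thm: beta 0}, \ref{thm: ln k}, and \ref{thm: lnln k} by using the linear relation $\psi(t,x)=\int_x^\infty\rho(t,u)\,du$, together with a "summation by parts" that turns a test-function pairing against $\psi_N(t,\cdot)$ into a pairing against $\pi^N_t$. Concretely, given $G\in C_c^\infty(\R^+_\circ)$, set $\widetilde G(x):=\int_0^x G(y)\,dy$, which is smooth and eventually constant, and write
\begin{equation*}
\int_0^\infty G(x)\,\psi_N(t,x)\,dx
=\frac{N_\beta}{N}\sum_{k\geq 1}\eta_t(k)\int_0^{k/N}G(x)\,dx
=\frac{N_\beta}{N}\sum_{k\geq 1}\widetilde G(k/N)\,\eta_t(k)+o(1),
\end{equation*}
the error coming from replacing $\int_0^{k/N}G$ by a Riemann-type sum being negligible since $\widetilde G$ is Lipschitz and $\E_N\sum_k\eta_t(k)=O(NN_\beta^{-1})$ by \eqref{total number}. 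Now $\widetilde G$ is not compactly supported in $\R^+_\circ$, but it is bounded and, outside a compact set, constant equal to $\widetilde G(\infty)=\int_0^\infty G$; using the total-mass bound \eqref{total number} and the fact that $\rho_0\in L^1$, the tails are controlled, so applying the relevant density theorem (after a truncation/approximation of $\widetilde G$ by $C_c^\infty$ functions plus a constant) yields
\begin{equation*}
\frac{N_\beta}{N}\sum_{k\geq 1}\widetilde G(k/N)\,\eta_t(k)\ \xrightarrow{\ \P_N\ }\ \int_0^\infty\widetilde G(x)\,\rho(t,x)\,dx
=\int_0^\infty G(x)\Big(\int_x^\infty\rho(t,u)\,du\Big)dx
=\int_0^\infty G(x)\,\psi(t,x)\,dx,
\end{equation*}
the middle equality being Fubini. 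This establishes the convergence statement \eqref{eqn: convergence for psi} and its analogues in all three regimes.

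It remains to identify $\psi(t,x):=\int_x^\infty\rho(t,u)\,du$ as the unique weak solution in $\Wb$ of the stated PDE. For the PDE itself, integrate the density equation in $x$ from $x$ to $\infty$: since $\partial_x\psi=-\rho$, in regime (1) we get $\partial_t\psi=-\int_x^\infty\partial_t\rho=-\int_x^\infty\partial_u\big(\partial_u\frac{\rho}{\rho+1}+\frac{\rho}{\rho+1}\big)=\partial_x\frac{\rho}{\rho+1}+\frac{\rho}{\rho+1}=\partial_x\big(\frac{-\partial_x\psi}{1+\partial_x\psi}\cdot(-1)\big)+\cdots$; carefully tracking signs gives exactly \eqref{eqn: macro beta 0, psi}, and similarly regimes (2) and (3) give \eqref{eqn: macro ln k, psi} and \eqref{eqn: macro lnln k, psi} (here the algebra is routine once one notes $\frac{-\partial_x\psi}{1-(-\partial_x\psi)}=\frac{\rho}{\rho+1}$, etc., with $\rho=-\partial_x\psi\geq 0$). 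The boundary/side conditions transcribe directly: $\psi(0,x)=\int_x^\infty\rho_0$ by definition; $\lim_{x\to\infty}\psi(t,x)=0$ since $\rho(t,\cdot)\in L^1$; the mass conservation $\int_0^\infty\rho(t,x)dx=\int_0^\infty\rho_0$ is precisely $\psi(t,0)=\psi(0,0)$; and $0\leq-\partial_x\psi(t,\cdot)=\rho(t,\cdot)\leq\phi_c$ is the $L^1$ domination constraint. Regularity: $\rho(t,\cdot)\leq\phi_c\in L^1$ and vague continuity of $t\mapsto\rho(t,x)dx$ give that $\psi$ is continuous in $(t,x)$ and absolutely continuous in $x$, so $\psi\in\Wb$.

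For uniqueness in $\Wb$, I would argue that the map $\rho\mapsto\psi=\int_\cdot^\infty\rho$ is a bijection between the solution class $\C$ with the side conditions of \eqref{eqn: macro beta 0} and the class $\Wb$ with the side conditions of \eqref{eqn: macro beta 0, psi}: given $\psi\in\Wb$ satisfying the latter, set $\rho:=-\partial_x\psi\geq 0$, which lies in $L^1$ with the correct mass and domination, satisfies the density PDE in the weak sense by differentiating the $\psi$-equation in $x$, and has the right initial data; conversely the forward map was just described. Hence uniqueness of weak solutions in $\C$ — which is asserted in Theorems \ref{thm: beta 0}, \ref{thm: ln k}, \ref{thm: lnln k} and proved in the uniqueness section — transfers to uniqueness in $\Wb$. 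The main obstacle I anticipate is the noncompact-support issue in the first step: $\widetilde G$ is only bounded, not in $C_c^\infty(\R^+_\circ)$, so to apply the density theorems one must approximate $\widetilde G$ near $0$ and near $\infty$ and estimate the resulting errors using \eqref{total number}, \eqref{site_particle}, and integrability of $\rho_0$; the behavior near $x=0$, where $\phi_c$ may blow up (as in regime (2)), needs the mildest care, but is handled since $\phi_c\in L^1$ and $-\partial_x\psi\leq\phi_c$.
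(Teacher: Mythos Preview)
Your proposal is correct and follows essentially the same route as the paper: define $g=\widetilde G=\int_0^{\cdot}G$, rewrite $\int G\,\psi_N\,dx$ via summation by parts as $\frac{N_\beta}{N}\sum_k g(k/N)\eta_t(k)$, extend the density hydrodynamic limit to such eventually-constant test functions by a tail estimate, and identify $\psi=\int_{\cdot}^{\infty}\rho$ with the $\psi$-PDE by integrating the $\rho$-equation. Two minor remarks: the ``$+o(1)$'' in your first display is in fact exactly zero since $\widetilde G(k/N)=\int_0^{k/N}G$ by definition (and $\psi_N$ is an exact step function), and there is no near-$0$ issue because $\widetilde G$ vanishes on $(0,a)$ when $\mathrm{supp}\,G\subset[a,b]$; also, the paper actually proves uniqueness for $\psi$ directly via a maximum principle and deduces uniqueness for $\rho$, whereas you go the other way---both directions are valid once the bijection $\psi\leftrightarrow\rho=-\partial_x\psi$ is set up.
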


%%%%%%%%%%%%%%%%%%%%%%%%%%%%%%%%%%%%%%%%%%%%
%%%%%%%%%%%%%%%%%%%%%%%%%%%%%%%%%%%%%%%%%%%%
\section{Martingale framework} \label{section: martingale}
The proofs of the main results make use of the stochastic differential of $\langle G, \pi^N_t\rangle$, written in terms of certain martingales.  Let $G$ be a compactly supported function on $\R^+ \times  \R^+_\circ$, and let us write $G_t(x) := G(t,x)$, for $t\geq 0$.  Consider the mean zero martingale,
\begin{equation*}
M^{N,G}_t
=
\left\langle G_t,   \pi^N_t \right\rangle
-
\left\langle G_0,  \pi^N_0 \right\rangle
-
\int_0^t \partial_s \left\langle G_s, \pi_s^N\right\rangle + N^2L \left\langle G_s,   \pi^N_s \right\rangle ds.
\end{equation*}
Define the discrete Laplacian $\Delta_{N}$ and discrete gradient $\nabla_N$ as
\begin{equation*}
\begin{split}
\Delta_N G\Big(\frac k N\Big)
:=&
N^2 \Big(G\Big(\frac {k+1}N\Big)+G\Big(\frac {k-1}N\Big)-2G\Big(\frac k N\Big)\Big),\\
\nabla_N G\Big(\frac k N\Big)
:=&
N \Big(G\Big(\frac {k+1}N\Big) - G\Big(\frac k N\Big)\Big).
\end{split}
\end{equation*}

Then, we may compute 
\begin{equation}
\label{gen_comp}
\begin{split}
&L \left\langle G_s,   \pi^N_s \right\rangle = \dfrac1N \sum_{k=2}^{\infty}
\left(
\Delta_{N} G_s\Big(\frac k N \Big) 
+
\dfrac {\lambda_k - 1}{1/N}
\nabla_{N} G_s\Big( \frac k N \Big)
\right)
N_{\beta}\chi_{\{ \eta_s (k)>0\}}\\
&\quad\quad\quad\quad\quad\quad\quad\quad\quad\quad\quad\quad\quad\quad\quad+
N \lambda_1 \nabla_N G_s\Big(\frac1N\Big) N_{\beta}  \chi_{\{ \eta_s(1)>0\}}.
\end{split}
\end{equation}
Since $G_s$ is compactly supported on $\R^+_{\circ}$, we note that the last term vanishes for all $N$ large.  

For later reference, we will call
\begin{equation}
\label{DG}
D^{G,s}_{N,k} := \Delta_{N} G_s\Big(\frac k N \Big) 
+
\dfrac {\lambda_k - 1}{1/N}
\nabla_{N} G_s\Big( \frac k N \Big).
\end{equation}
Define also 
$$\alpha(x,\beta):= \lim_{\substack{N\rightarrow\infty\\k/N\rightarrow x}} \frac{\lambda_k -1}{1/N}.$$
Observing 
$$\lambda_k = e^{-\beta(\Eb_{k+1} - \Eb_k) - 1/N} = e^{-\beta (u(\ln k+1) - u(\ln k)) - 1/N},$$
 we have for all $x>0$ that
\begin{equation}
\label{alpha_D}
\alpha(x,\beta) = \ \left\{\begin{array}{rl}
-1& \ \ {\rm when \ } \beta =0 \ {\rm or \ } 1\ll \Eb_k \ll \ln k \\
-\frac{\beta+x}{x}& \ \ {\rm when \ } \Eb_k \sim \ln k.
\end{array}\right.
\end{equation}
Moreover, for $0<a<b<\infty$, $N$ large, and $aN\leq k\leq bN$, we conclude
\begin{equation}
\label{DG_bound}
\left |D^{G,s}_{N,k}\right | \leq
 2\left(\|\Delta G\|_\infty
 +
 \frac{\beta + b}{a}\|\nabla G\|_\infty
 \right).
\end{equation}

The quadratic variation of $M^{N,G}_t$ is given by
\begin{equation*}
\langle M^{N,G} \rangle_t
=
\int_0^t 
\left\{
N^2L \left(\left\langle G_s,  \pi^N_s \right\rangle ^2 \right)
-
2 \left\langle G_s,  \pi^N_s \right\rangle
 L \left\langle G_s,  \pi^N_s\right\rangle 
\right\}
ds.
\end{equation*}
Straightforward calculation shows that
\begin{equation*} \label {quadratic variation ln k}
\begin{split}
\langle M^{N,G} \rangle_t
=&
\dfrac{N_{\beta}}{N}
\int_0^t 
\left\{
\dfrac1N \sum_{k=1}^{\infty}
\lambda_k 
\left(\nabla_N G_s\left( k/N \right) \right)^2
N_{\beta}\chi_{\{ \eta_s (k)>0\}} 
\right.\\
&\quad\quad\quad\quad\quad\quad\quad\quad\quad
+\left.
\dfrac1N \sum_{k=2}^{\infty}
\left(\nabla_N G_s\left( k/N \right) \right)^2
N_\beta\chi_{\{ \eta_s (k)>0\}}
\right\}
ds.
\end{split}
\end{equation*}

An useful bound on this variation is as follows.  Recall the estimates on $N_\beta$ (cf. \eqref{Nbeta eqn}).

\begin{Lem} \label{lem: martingale bounds} For smooth $G$ with compact support in
$\R^+\times \R^+_o$, there is a constant $C_G$ such that for large $N$,
$$\sup_{0\leq t\leq T}\E_N \langle M^{N,G}\rangle_t \leq C_G T N_{\beta}N^{-1}.$$
\end{Lem}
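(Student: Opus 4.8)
The plan is to estimate the explicit expression for $\langle M^{N,G}\rangle_t$ derived above directly, bounding each ingredient and then taking expectation. First I would pull out the prefactor $N_\beta/N$ and note that the two sums inside the integrand are of the same structural type, differing only by the factor $\lambda_k$ in the first; since $\lambda_k = e^{-\beta(\Eb_{k+1}-\Eb_k)-1/N}$ and $\Eb_\cdot$ satisfies Condition \ref{condition} (so $u'$ is bounded, forcing $\Eb_{k+1}-\Eb_k = u'(\ln y)\ln((k+1)/k)\to 0$ on the relevant range), we have $\lambda_k \leq 1$ for all $k$ large, hence both sums are controlled by the second one up to a harmless constant. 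Because $G$ is smooth with compact support in $\R^+\times\R^+_\circ$, there are $0<a<b<\infty$ with $G_s$ supported in $[a,b]$ uniformly in $s\in[0,T]$, so $(\nabla_N G_s(k/N))^2 \leq \|\nabla G\|_\infty^2$ and the sum over $k$ has at most $(b-a)N+O(1)$ nonzero terms. Thus the integrand is bounded by
\begin{equation*}
\frac{N_\beta}{N}\cdot C\|\nabla G\|_\infty^2 \cdot \frac1N \sum_{aN\leq k\leq bN} N_\beta \chi_{\{\eta_s(k)>0\}}.
\end{equation*}

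The next step is to take $\E_N$ and use $\chi_{\{\eta_s(k)>0\}} \leq \eta_s(k)$ pointwise, so that
\begin{equation*}
\E_N\Big[\frac1N\sum_{aN\leq k\leq bN} N_\beta \chi_{\{\eta_s(k)>0\}}\Big] \leq \frac{N_\beta}{N}\sum_{k=1}^\infty \E_N[\eta_s(k)] = O(1),
\end{equation*}
by the total particle bound \eqref{total number}, which gives $\E_N\sum_k \eta_s(k) = O(NN_\beta^{-1})$ uniformly in $0\leq s\leq T$. (Alternatively, one could invoke attractiveness \eqref{attractiveness} together with the static mean estimate $E_{\Pam_{c,N}}\sum_k\xi(k)=O(NN_\beta^{-1})$ from \eqref{mean_var_pam}, but using \eqref{total number} directly is cleaner.) Substituting back, the integrand has $\E_N$-expectation $O(N_\beta/N)$ uniformly in $s\in[0,T]$, and integrating over $[0,t]\subseteq[0,T]$ and taking the supremum over $t$ produces the claimed bound $C_G T N_\beta N^{-1}$, with $C_G$ depending on $\|\nabla G\|_\infty$, $a$, $b$, and the implied constant in \eqref{total number}.

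I do not anticipate a serious obstacle here; the estimate is essentially a bookkeeping exercise once one observes that $\lambda_k$ is bounded and that the compact support of $G$ localizes the sum to a window of $O(N)$ sites where the site-occupation indicators are dominated by the total particle count. The one point requiring minor care is confirming $\lambda_k \leq C$ (indeed $\lambda_k\to 1$) uniformly on $aN\leq k\leq bN$ in all three regimes, which follows from the mean value theorem form $\Eb_{k+1}-\Eb_k = u'(\ln y)\ln(1+1/k)$ with $y\in(k,k+1)$ and $u'$ bounded; this makes $\beta(\Eb_{k+1}-\Eb_k)=O(1/k)$ on the window, so $\lambda_k = 1 + O(1/N)$ there. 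With that in hand the chain of inequalities above closes.
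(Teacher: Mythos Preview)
Your proposal is correct and follows essentially the same route as the paper: localize the sum to $aN\le k\le bN$ via the compact support of $G$, bound the coefficients $(\nabla_N G_s)^2$ and $\lambda_k$ by constants, and then control $N_\beta\chi_{\{\eta_s(k)>0\}}$ in expectation using the total particle bound \eqref{total number}. The only cosmetic difference is that the paper splits into the case $\beta=0$ (where it uses $\chi\le 1$) and $\beta>0$ (where it uses $\chi\le\eta$), whereas you use $\chi\le\eta$ uniformly; your claim $\lambda_k\le 1$ is not quite right in general (since $u'$ need not be nonnegative), but as you correctly note at the end, the bound $\lambda_k\le C$ on the window $aN\le k\le bN$ is all that is needed and follows from boundedness of $u'$.
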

\begin{proof}
Suppose that $G_t$ is supported on $[a,b]$
with $0<a<b<\infty$ for all $t$.
For $N$ large, we have
\begin{equation*}
\begin{split}
\E_N\langle M^{N,G} \rangle_t
 =&
N_{\beta}N^{-1}
\E_N
 \Big [
 \dsp\int_0^t 
\dfrac1N \sum_{k=aN}^{bN}
\widehat D^{G,s}_{N,k}
 N_{\beta}\chi_{\{ \eta_s (k)>0\}}
 ds  \Big ]\\
 \leq &
 C^1_G
N_{\beta}N^{-1}
\E_N
 \Big [
 \dsp\int_0^t 
\dfrac1N \sum_{k=aN}^{bN}
 N_{\beta}\chi_{\{ \eta_s (k)>0\}}
 ds  \Big ],
\end{split}
\end{equation*}
where
$\widehat D^{G,s}_{N,k} = \lambda_k \left(\nabla_N G_s\left( k/N \right) \right)^2 + \left(\nabla_N G_s\left( k/N \right) \right)^2$ and
$|\widehat D^{G,s}_{N,k}| \leq C^1_G$.

For the case $\beta=0$, since $N_\beta=1$, we bound $\chi_{\{\eta(k)>0\}}$ by $1$.
Then, $\E_N\langle M^{N,G} \rangle_t \leq C^1_G N^{-1} (b-a) t$, from which the lemma follows.

For the other two cases of $\beta>0$, we bound $\chi_{\{\eta(k)>0\}}$ by $\eta(k)$. Then,
\begin{eqnarray*}
\E_N\langle M^{N,G} \rangle_t
 &\leq &
 C_G^1
N_{\beta}N^{-1}
\E_N
 \Big [
 \dsp\int_0^t 
\dfrac1N \sum_{k=1}^{\infty}
 N_{\beta}\eta_s (k)
 ds  \Big ]\\
 &=&
   C_G^1
N_{\beta}N^{-1}
 t\,
\E_N
 \Big [
\dfrac1N \sum_{k=1}^{\infty}
 N_{\beta}\eta_0(k)
  \Big ].
\end{eqnarray*}
We have used that total number of particles is conserved in the last equality. 
Then, by \eqref{total number},
we obtain
$\sup_N \E_N \Big [ \dfrac1N \sum_{k=1}^{\infty} N_{\beta}\eta_0(k)\Big ] < \infty$, thereby finishing the argument.
\end{proof}

%%%%%%%%%%%%%%%%%%%%%%%%%%%%%%%%%%%%%%%%%%%%%%%%%%%%
\section{Proof outline: Hydrodynamic limits when $\beta=0$} \label{section beta 0}
We give the proof of Theorem \ref{thm: beta 0} in outline form, referring to estimates proved in later sections.
Since $N_\beta =1$ for $\beta=0$, we have
\begin{equation*}
\pi_t^N(dx)
=
\dfrac {1} N 
\sum_{k=1}^{\infty}
\eta_t(k) \delta_{k/N}(dx).
\end{equation*}

We denote by $Q^N$ the probability measure on the trajectory space $D([0,T],\Mb )$ 
governing $\pi_{\cdot}^N$ when the process starts from $\nu^N$.
By Lemma \ref{tightness beta 0} the family of measures $\left\{Q^N\right\}_{N\in \N}$ is tight with respect to the uniform topology, stronger than the Skorokhod topology, and all limit measures are supported on vaguely 
continuous trajectories $\pi_\cdot$, that is for each test function $G\in C^\infty_c(\R^+_{\circ})$, the map $t\mapsto \langle G, \pi_t\rangle$ is continuous.

Let now $Q$ be any limit measure. 
We show that $Q$ is supported on weak solutions to the nonlinear PDE \eqref{eqn: macro beta 0}. 
\vskip .1cm

{\it Step 1.}
Take any smooth $G$ with compact support in $[0,T]\times \R^+_{\circ}$.
To obtain the form of the limit equation, recall the martingale $M^{N,G}_t$ and its quadratic variation $\langle M^{N,G}\rangle_t$ introduced in the last section.

Since $G$ is smooth and with compact support, 
by Lemma \ref{lem: martingale bounds}, we have $\E_N \left( M^{N,G}_T\right)^2 = \E_N \left(\langle M^{N,G} \rangle_T\right)$
vanishes as $N\to \infty$.
Then, by Doob's inequality, for each $\delta>0$,
\begin{eqnarray*}
&&
\P_N
\big(
\sup_{0\leq t\leq T} 
\big| 
\big\langle G_t,  \pi^N_t \big\rangle
-
\big\langle G_0,  \pi^N_0 \big\rangle  
-
 \int_0^t \big( \big\langle \partial_s G_s,  \pi^N_s \big\rangle + N^2L  \big\langle G_s,  \pi^N_s \big\rangle   \big)   ds \big|
>\delta
\big)\\
&& \leq  \dfrac{4}{\delta^2}
 \E_N \big( \big\langle M^{N,G}\big\rangle_T\big) \ \rightarrow \ 0 \ \ {\rm as \ } N\to\infty.
\end{eqnarray*}
Recall the computation of $N^2L  \left\langle G_s,  \pi^N_s \right\rangle$ in \eqref{gen_comp}.
Then,
\begin{eqnarray} \label {beta 0 eqn before replace}
&&
\lim_{N\to \infty}
\P_N
\left(
\sup_{0\leq t\leq T} 
\big| 
\left\langle G_t,  \pi^N_t \right\rangle
-
\left\langle G_0,  \pi^N_0 \right\rangle
-
\int_0^t \left( \left\langle \partial_s G_s,  \pi^N_s \right\rangle 
\vphantom{\dfrac{aa}{NN}}
\right. \right. \\
&&
\left. \left.
+
\dfrac1N \sum_{k=aN}^{bN}
\left(
\Delta_{N} G_s\Big(\frac k N \Big) 
+
\dfrac {\lambda_k - 1}{1/N}
\nabla_{N} G_s\Big( \frac k N \Big)
\right)
\chi_{\{ \eta_s (k)>0\}}
   \right)   ds \big|
>\delta
\right)
=0.
\nonumber
\end{eqnarray}
%%%%%%%%%%%%
\vskip .1cm

{\it Step 2.}
We would like to replace the nonlinear term $\chi_{\{\eta_s (k)>0\}}$
by a function of the empirical density of particles within a macroscopically small box.
To be precise,
let $\eta^l (x) = \dfrac{1}{2l+1} \sum_{|y-x|\leq l} \eta(y)$,
that is the average density of particles in the box centered at $x$ with length $2l+1$.  

Recall the coefficient $D^{G,s}_{N,k}$ in \eqref{DG}.
By the triangle inequality, the $1$ and $2$-block estimates (Lemmas \ref{one block lem} and \ref{lem: 2 blocks})
give immediately the following replacement lemma.

\begin{Lem} [Replacement Lemma]
For each $\delta>0$,
\begin{equation*}
\begin{split}
\limsup_{\e\to 0} \limsup_{N\to \infty} 
 \P_N
\Big[
\Big|
\dfrac1N \sum_{aN\leq k \leq bN }
\int_0^T 
D_{N,k}^{G,t}
\Big( \chi_{\{\eta_t (k) >0\}}
-\dfrac{\eta_{t}^{\e N}(k)}{1+ \eta_{t}^{\e N}(k)} \Big) dt
\Big|
 \geq \delta
\Big]
=0.
\end{split}
\end{equation*}
\end{Lem}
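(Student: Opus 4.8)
The plan is to reduce the statement directly to the one- and two-block replacement estimates of Section~\ref{section: 1 and 2 blocks}. Since $G$ is smooth with compact support in $[0,T]\times\R^+_\circ$, there are $0<a<b<\infty$ with $\mathrm{supp}\,G_t\subseteq[a,b]$ for every $t\in[0,T]$, and by \eqref{DG_bound} there is a constant $C_G<\infty$ with $|D^{G,t}_{N,k}|\le C_G$ whenever $aN\le k\le bN$ and $N$ is large. Write $\tilde\chi(\rho):=\rho/(1+\rho)$; since the stationary marginals are geometric, $\tilde\chi(\rho)$ is precisely the equilibrium value of $\chi_{\{\eta(k)>0\}}$ at density $\rho$, and $\tilde\chi$ is bounded by $1$ and $1$-Lipschitz on $\R^+$. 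For an auxiliary microscopic scale $l\in\N$ — to be sent to infinity only at the very end — insert the intermediate term $\tilde\chi(\eta^l_t(k))$:
\begin{equation*}
\chi_{\{\eta_t(k)>0\}}-\tilde\chi\big(\eta^{\e N}_t(k)\big)
=\Big(\chi_{\{\eta_t(k)>0\}}-\tilde\chi\big(\eta^{l}_t(k)\big)\Big)
+\Big(\tilde\chi\big(\eta^{l}_t(k)\big)-\tilde\chi\big(\eta^{\e N}_t(k)\big)\Big).
\end{equation*}

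Multiplying by $D^{G,t}_{N,k}$, averaging over $aN\le k\le bN$, integrating in $t$, and then using the triangle inequality and a union bound, the probability in the statement is at most
\begin{equation*}
\begin{split}
&\P_N\Big[\Big|\tfrac1N\sum_{aN\le k\le bN}\int_0^T D^{G,t}_{N,k}\big(\chi_{\{\eta_t(k)>0\}}-\tilde\chi(\eta^l_t(k))\big)\,dt\Big|\ge\tfrac\delta2\Big]\\
&\qquad+\P_N\Big[\Big|\tfrac1N\sum_{aN\le k\le bN}\int_0^T D^{G,t}_{N,k}\big(\tilde\chi(\eta^l_t(k))-\tilde\chi(\eta^{\e N}_t(k))\big)\,dt\Big|\ge\tfrac\delta2\Big].
\end{split}
\end{equation*}
The first probability — which does not involve $\e$ — is exactly what the one-block estimate, Lemma~\ref{one block lem}, controls (possibly after bounding $|D^{G,t}_{N,k}|\le C_G$ and passing absolute values inside): $\lim_{l\to\infty}\limsup_{N\to\infty}$ of it equals $0$. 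For the second, the Lipschitz bound $|\tilde\chi(\eta^l_t(k))-\tilde\chi(\eta^{\e N}_t(k))|\le|\eta^l_t(k)-\eta^{\e N}_t(k)|$ together with $|D^{G,t}_{N,k}|\le C_G$ reduces it to the density two-block statement, Lemma~\ref{lem: 2 blocks}, whose contribution vanishes under the iterated $\limsup$ in $N$, then $l$, then $\e$.

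It remains to stitch the iterated limits together. The quantity whose probability we estimate depends only on $(\e,N)$, whereas the two-term upper bound above is valid for every $l\in\N$; hence we take $\limsup_{N\to\infty}$ first, then let $l\to\infty$ to annihilate the one-block contribution, and finally let $\e\to0$ to kill the surviving two-block contribution, obtaining $\limsup_{\e\to0}\limsup_{N\to\infty}\P_N[\,\cdot\,]=0$, as claimed. I expect no genuine obstacle at this stage: the substantive analysis — the spectral-gap bound, the Feynman--Kac and Rayleigh-type estimates, the coupling exploiting attractiveness, and the truncation of occupation numbers via the total-mass bound \eqref{total number} — is already packaged in Lemmas~\ref{one block lem} and~\ref{lem: 2 blocks}. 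The only point requiring care is that the auxiliary scale $l$ can be disposed of cleanly precisely because it does not enter the target expression, so that the order of the three iterated limits can be arranged to match the hypotheses of the block estimates.
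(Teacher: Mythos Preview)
Your approach is exactly the paper's: insert the intermediate scale $l$, split by the triangle inequality, and invoke Lemmas~\ref{one block lem} and~\ref{lem: 2 blocks}. One small slip: you write that the two-block contribution ``vanishes under the iterated $\limsup$ in $N$, then $l$, then $\e$,'' and in the final paragraph you send $l\to\infty$ before $\e\to0$. But Lemma~\ref{lem: 2 blocks} is stated with the order $\limsup_{l}\limsup_{\tau}\limsup_{N}$, i.e.\ $N\to\infty$ first, then $\e\to0$, then $l\to\infty$; it does not give control of $\limsup_{l} B(l,\e)$ at fixed $\e$. The fix is the one you already hint at: since the bound $\limsup_N\P_N[\,\cdot\,]\le A(l)+B(l,\e)$ holds for every $l$, take $\limsup_{\e}$ first (Term~1 is unaffected), and only then let $l\to\infty$, which kills both $A(l)$ (one-block) and $\limsup_\e B(l,\e)$ (two-block). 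With this correction the argument is complete and matches the paper.
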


\vskip .1cm
{\it Step 3.}
For each $\e>0$, take $\iota_\e = (2\e)^{-1} \chi_{[-\e,\e]}$.
The average density $\eta_{t}^{\e N}(k)$ is written as a function of the empirical measure $\pi_{N^2 t}$ 
\begin{equation*}
\eta_{t}^{\e N}(k)
=
\dfrac{2\e N}{2\e N+1}
\langle \iota_{\e}(\cdot -  k/N),  \pi^N_t) \rangle.
\end{equation*}
Also, as $\lambda_k = e^{-1/N}$ when $\beta=0$, we have $N(\lambda_k -1)\sim -1$ (cf. \eqref{alpha_D}).  

Then, we get from \eqref{beta 0 eqn before replace}, noting the form of $D^{G,s}_{N,k}$, that
\begin{equation*}
\begin{split}
&
\limsup_{\e \to 0}
\limsup_{N\to \infty}
Q^N
\left(
\Big | 
\left\langle G_T,  \pi^N_T \right\rangle
-
\left\langle G_0,  \pi^N_0 \right\rangle
-
\int_0^T \left( \left\langle \partial_s G_s,  \pi^N_s \right\rangle 
\vphantom{\dfrac{aa}{NN}}
\right. \right. \\
&
\left. \left.
+
\dfrac1N \sum_{k=aN}^{bN}
\left(
\Delta G_s\Big(\frac k N \Big) 
-
\nabla G_s\Big( \frac k N \Big)
\right)
\dfrac{\langle \iota_{\e}(\cdot -  k/N),  \pi^N_s) \rangle}
{ \langle \iota_{\e}(\cdot -  k/N),  \pi^N_s) \rangle  + 1}
   \right)   ds \Big|
>\delta
\right)
=0.
\end{split}
\end{equation*}
Notice that we replaced $\nabla_N$ and $\Delta_N$
by $\nabla$ and $\Delta$,
respectively. 

The error in replacing the Riemann sum by an integral is $o(1)$.  We get
\begin{equation}
\label{step3 beta 0 eqn}
\begin{split}
&
\limsup_{\e \to 0}
\limsup_{N\to \infty}
Q^N
\left(
\Big | 
\left\langle G_T,  \pi^N_T \right\rangle
-
\left\langle G_0,  \pi^N_0 \right\rangle
-
\int_0^T \left( \left\langle \partial_s G_s,  \pi^N_s \right\rangle ds
\vphantom{\dfrac{aa}{NN}}
\right. \right. \\
&
\left. \left.
+\int_0^T
\int_0^{\infty}
\left(
\Delta G_s\left(x \right) 
-
\nabla G_s\left( x \right)
\right)
\dfrac{\langle \iota_{\e}(\cdot - x),  \pi^N_s) \rangle}
{ \langle \iota_{\e}(\cdot -  x),  \pi^N_s) \rangle  + 1}
   dx \right)  ds \Big |
>\delta
\right)
=0.
\end{split}
\end{equation}

Taking $N\to \infty$, along a subsequence, as the set of trajectories in \eqref{step3 beta 0 eqn} is open with respect to the uniform topology, we obtain
\begin{equation*}
\begin{split}
&
\limsup_{\e \to 0}
Q
\left(
\Big| 
\left\langle G_T,  \pi_T \right\rangle
-
\left\langle G_0,  \pi_0 \right\rangle
-
\int_0^T \left( \left\langle \partial_s G_s,  \pi_s \right\rangle 
\vphantom{\dfrac{aa}{NN}}
\right. \right. \\
&
\left. \left.
+
\int_0^{\infty}
\left(
\Delta G_s\left(x \right) 
-
\nabla G_s\left( x \right)
\right)
\dfrac{\langle \iota_{\e}(\cdot - x),  \pi_s) \rangle}
{ \langle \iota_{\e}(\cdot -  x),  \pi_s \rangle  + 1}
  dx \right)ds
  \Big |
>\delta
\right)
=0.
\end{split}
\end{equation*}

\vskip .1cm
{\it Step 4.} We show in
Lemma \ref{lem: rho < phi c} that $Q$ is supported on trajectories $\pi_s(dx) =\rho(s,x) dx $ where $\rho\in L^1([0,T]\times \R)$.
To replace $\langle \iota_{\e}(\cdot - x),  \pi_s \rangle$
by $\rho(s,x)$, it is enough to show, for all $\delta>0$, that
\begin{equation*}
\begin{split}
&
\limsup_{\e \to 0}
Q
\left(
\left| 
\int_0^T 
\int_0^{\infty}
D_{G,s}
\left(
\dfrac{\langle \iota_{\e}(\cdot - x),  \pi_s \rangle}
{ \langle \iota_{\e}(\cdot -  x),  \pi_s \rangle  + 1}
-
\dfrac{\rho(s,x)}{1+\rho(s,x)}
\right) dx 
ds \right|
>\delta
\right)
=0.
\end{split}
\end{equation*}
where $D_{G,s} = \Delta G_s\left(x \right) + \nabla G_s\left( x \right)$.
In fact, considering the Lebesgue points of $\rho$, almost surely with respect to $Q$,
\begin{equation*}
\lim_{\e\to 0}
\int_0^T 
\int_0^{\infty}
D_{G,s}
\dfrac{\langle \iota_{\e}(\cdot - x),  \pi_s \rangle}
{ \langle \iota_{\e}(\cdot -  x),  \pi_s \rangle  + 1}
 dx  ds 
 =
 \int_0^T 
\int_0^{\infty}
D_{G,s}
 \dfrac{\rho(s,x)}{1+\rho(s,x)}
 dx  ds .
\end{equation*}
Now, we have
\begin{equation*}
\begin{split}
&
Q
\left(
\Big | 
\left\langle G_T,  \rho(T,x) \right\rangle
-
\left\langle G_0,  \rho(0,x) \right\rangle
-
\int_0^T \left( \left\langle \partial_s G_s,  \rho(s,x) \right\rangle 
\vphantom{\dfrac{aa}{NN}}
\right. \right. \\
&
\left. \left.
+
\int_0^{\infty}
\left(
\Delta G_s\left(x \right) 
-
\nabla G_s\left( x \right)
\right)
\dfrac{ \rho(s,x) }
{\rho(s,x) + 1}
 dx  \right)  ds \Big|
   =0
\right)
=1.
\end{split}
\end{equation*}

\vskip .1cm
{\it Step 5.}
Hence, each $\rho(t,x)$ solves weakly the 
equation $ \partial_t \rho = \partial_x^2 \dfrac{\rho}{\rho+1} + \partial_x \dfrac{\rho}{\rho+1}$.  As we have already remarked that $Q$ is supported on vaguely continuous trajectories (Lemma \ref{tightness beta 0}), we have that $\rho$ belongs to $\C$.
 
We claim now that $\rho(t,x)$ satisfies the initial value problem  \eqref{eqn: macro beta 0}:  Indeed,
the initial condition $\rho(0,x) = \rho_0(x)$ holds by \eqref{Initial convergence nu N}.
By Lemma \ref{lem: rho < phi c}, we have $\rho(t,x) \leq \phi_c(x)$ for all $0\leq t\leq T$.
The conservation of mass $\int_0^{\infty} \rho(t,x)dx = \int_0^{\infty} \rho_0(x)dx$ is proved in Lemma \ref{lem mass conserve}.

We show in Subsection \ref{section uniqueness beta 0} that there is at most one weak solution $\rho$ to \eqref{eqn: macro beta 0}, subject to these constraints.
 We conclude then that the sequence of $Q^N$ converges weakly to the Dirac measure on $\rho(\cdot,x)dx$.
Finally, as $Q^N$ converges to $Q$ with respect to the uniform topology, we have for each $0\leq t\leq T$ that $\langle G,\pi_t^N\rangle$ weakly converges to the constant $\int G(x)\rho(t,x)dx$, and therefore convergence in probability as stated in Theorem \ref{thm: beta 0}.
\qed

%%%%%%%%%%%%%%%%%%%%%%%%%%%%%%%%%%%%%%%%%%%%%%%%%%%%%%%%

\section{Proof outline: Hydrodynamic limits when $\beta>0$}  \label {section beta not 0}
In this section, we sketch a proof of both Theorems \ref{thm: ln k} and \ref{thm: lnln k}, following the the argument for the $\beta=0$ case.

\vskip .1cm
{\it Step 1.}
The replacement lemma we need here is simpler than for the case $\beta=0$, as it relies only on a $1$-block estimate.  Because of the form of the function $N_\beta \chi_{\{\eta_t(k)>0\}}$, from the $1$-block estimate, it is close to $N_\beta \eta^l_t(k)/(1+\eta^l_t(k))$.  However, as $N_\beta \eta^l_t(k)$ is of order $O(1)$, and therefore $\eta^l_t(k) = o(1)$, we may replace $N_\beta \eta^l_t(k)/\big(1+\eta^l_t(k)\big)$ by its linearization $N_\beta \eta^l_t(k)$.  Then, using smoothness of the test function, $\eta^l_t(k)$ may be replaced by $\eta_t(k)$, so that a $2$-blocks estimate is not needed.  Moreover, we see as a consequence that a linear PDE arises in the hydrodynamic limit.

Recall the expression $D^{G,t}_{N,k}$ in \eqref{DG}.

\begin{Lem}[Replacement Lemma]
For each smooth, compactly supported function $G$ on $[0,T]\times \R^+_\circ$, we have 
\begin{equation*}
\limsup_{N\to \infty}
\E_N \left |
\dfrac 1N \sum_{k= aN}^{ bN }
\int_0^T 
D_{N,k}^{G,t}
\left( N_{\beta}\chi_{\{\eta_t (k) >0\}}
-N_{\beta} \eta_t (k) \right) dt
\right |
=
0.
\end{equation*} 
\end{Lem}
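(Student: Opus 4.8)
The plan is to reduce this "Replacement Lemma'' to the $1$-block estimate (Lemma \ref{one block lem}) plus a linearization step, all localized on the bounded window $aN\le k\le bN$. First I would use the bound \eqref{DG_bound} on the coefficients $D^{G,t}_{N,k}$, which gives $|D^{G,t}_{N,k}|\le C_G$ uniformly for $aN\le k\le bN$ and $N$ large, to pull the coefficient out of the absolute value at the cost of a constant. This reduces the claim to showing
\begin{equation*}
\limsup_{N\to\infty}\ \frac1N\sum_{k=aN}^{bN}\int_0^T \E_N\Big| N_\beta \chi_{\{\eta_t(k)>0\}} - N_\beta\eta_t(k)\Big|\,dt = 0,
\end{equation*}
after a triangle-inequality split through the averaged occupation $\eta^{\e N}_t(k)$.

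The core of the argument is a three-term decomposition: write $N_\beta\chi_{\{\eta_t(k)>0\}} - N_\beta\eta_t(k)$ as the sum of (i) $N_\beta\chi_{\{\eta_t(k)>0\}} - N_\beta\dfrac{\eta^{\e N}_t(k)}{1+\eta^{\e N}_t(k)}$, (ii) $N_\beta\dfrac{\eta^{\e N}_t(k)}{1+\eta^{\e N}_t(k)} - N_\beta\eta^{\e N}_t(k)$, and (iii) $N_\beta\eta^{\e N}_t(k) - N_\beta\eta_t(k)$. Term (i), once multiplied by $D^{G,t}_{N,k}$, summed over the window and integrated in time, is controlled by the $1$-block estimate (Lemma \ref{one block lem}) in the $\limsup_{\e\to0}\limsup_{N\to\infty}$ sense; note the function $\phi(\rho)=\rho/(1+\rho)$ is exactly the grand-canonical expectation of $\chi_{\{\eta(k)>0\}}$, which is why the $1$-block replacement produces this form. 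Term (iii) is the standard "$\eta^{\e N}$ vs $\eta$'' exchange: using summation by parts against the smooth, compactly supported $G$ (transferring the local spatial average onto $D^{G,t}_{N,k}$, which changes by $O(\e)$ uniformly by smoothness of $G$ and \eqref{DG_bound}), together with the total particle bound \eqref{total number} giving $\frac{N_\beta}{N}\sum_k \E_N\eta_t(k)=O(1)$, this term is $O(\e)$ and vanishes as $\e\to0$ after $N\to\infty$.

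Term (ii) is where the linearization enters, and is the step that genuinely uses $\beta>0$: the difference is $N_\beta\dfrac{(\eta^{\e N}_t(k))^2}{1+\eta^{\e N}_t(k)}\le N_\beta(\eta^{\e N}_t(k))^2$. Here I would exploit that on the window $aN\le k\le bN$ the site particle bound \eqref{site_particle} gives $N_\beta\E_N[\eta_t(k)]\le C$, hence $\E_N[\eta^{\e N}_t(k)]\le C N_\beta^{-1}$, and I would need a second-moment control of the same order, namely $\E_N[(\eta^{\e N}_t(k))^2] = O(N_\beta^{-1})$ — this follows from the variance bound \eqref{nu N var bound} together with \eqref{site_particle} and attractiveness \eqref{attractiveness} (dominating $\eta_t(k)$ by a geometric variable under $\Pam_{c,N}$ whose parameter $\theta_{k,c}$ is of order $N_\beta^{-1}$ on this window, so its second moment is also $O(N_\beta^{-1})$), perhaps after first passing through a truncation $\eta_t(k)\wedge A$ and using the $1$-block/attractiveness machinery to discard the tail. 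Then the contribution of term (ii) is bounded by $\frac{C_G}{N}\sum_{k=aN}^{bN}\int_0^T N_\beta\cdot O(N_\beta^{-1})\cdot\E_N[\eta^{\e N}_t(k)]\,dt = O(N_\beta^{-1})\to 0$, since $N_\beta\to\infty$ in regimes (2) and (3). The main obstacle is precisely this second-moment estimate for $\eta^{\e N}_t(k)$ uniformly on the window: the dynamics is not translation invariant and $\nu^N$ is only stochastically dominated by $\Pam_{c,N}$, so one cannot compute moments directly and must combine attractiveness, the geometric tail of $\Pam_{\beta,c,N,k}$ on scales $k\sim N$ (where $\theta_{k,c}\asymp N_\beta^{-1}$), and the variance bound \eqref{nu N var bound}; everything else is routine summation by parts and Riemann-sum control.
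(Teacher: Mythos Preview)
Your three-term decomposition and the linearization idea are exactly right and match the paper's strategy, but you are working at the wrong scale: you route everything through the macroscopic block average $\eta^{\e N}_t(k)$, whereas the paper uses the microscopic block average $\eta^{l}_t(k)$ with $l$ fixed (and only afterwards sends $l\to\infty$). This distinction matters precisely at your term~(i).

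The $1$-block estimate, Lemma~\ref{one block lem}, controls
\[
\E_N\Big|\int_0^T D^{G,s}_{N,k}\,N_\beta\Big(\chi_{\{\eta_s(k)>0\}}-\frac{\eta^{l}_s(k)}{1+\eta^{l}_s(k)}\Big)\,ds\Big|
\]
for \emph{fixed} $l$, with the limits taken as $\limsup_{l\to\infty}\limsup_{N\to\infty}$. Its proof goes through the Rayleigh expansion and the spectral-gap bound of Lemma~\ref{spectral gap ln k}, in which the inverse gap $C_{k,l,j}\sim (2l+1)^2$ enters; if you set $l=\e N$, this blows up like $N^2$ and the argument collapses. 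Passing from $\eta^{l}$ to $\eta^{\e N}$ is exactly the content of a $2$-block estimate, and the paper proves that only for $\beta=0$ (Lemma~\ref{lem: 2 blocks}) and explicitly remarks that it may fail when $\beta>0$. So your term~(i) as written is not covered by Lemma~\ref{one block lem}.

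The fix is to run your decomposition with $\eta^{l}$ in place of $\eta^{\e N}$, which is what the paper does. Your term~(iii) then becomes a comparison of $N_\beta\eta^{l}_t(k)$ with $N_\beta\eta_t(k)$, handled by the same summation-by-parts against the smooth coefficient $D^{G,t}_{N,k}$ (the change in $D^{G,t}_{N,k}$ over a window of width $l$ is $O(l/N)$, and then one uses \eqref{total number}). Your term~(ii) becomes, after attractiveness \eqref{attractiveness} and invariance of $\Pam_{c,N}$,
\[
N_\beta\,E_{\Pam_{c,N}}\big[(\eta^{l}(k))^2\big]
=\frac{N_\beta}{(2l+1)^2}\sum_{|j-k|\le l}(\rho_{j,c}^2+\rho_{j,c})
+\frac{N_\beta}{(2l+1)^2}\sum_{\substack{j\neq m\\|j-k|,|m-k|\le l}}\rho_{j,c}\rho_{m,c}
=O\big(l^{-1}+N_\beta^{-1}\big),
\]
using $N_\beta\rho_{j,c}=O(1)$ on the window from \eqref{site_particle}; this vanishes as $N\to\infty$ then $l\to\infty$. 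With $\eta^{l}$ the three pieces close, and no $2$-block input is needed. Your moment estimates for term~(ii) via attractiveness are on the right track; just note that the clean bound comes from the product structure of $\Pam_{c,N}$ directly, without any truncation.
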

\begin{proof}
By smoothness of the test function $G$, it suffices to show
$$\limsup_{l\to \infty} \limsup_{N\to \infty}
\E_N \left |
\dfrac 1N \sum_{k= aN}^{ bN }
\int_0^T 
D_{N,k}^{G,t}
\left( N_{\beta}\chi_{\{\eta_t (k) >0\}}
-N_{\beta} \eta^l_t(k) \right) dt
\right |
=
0,$$
and in turn enough to show that
\begin{equation}\label {eqn: replacement for ln k case}
\limsup_{l\to \infty} \limsup_{N\to \infty}
\sup_{aN\leq k \leq bN }
\E_N \left |
\int_0^T 
D_{N,k}^{G,t}
\left( N_{\beta}\chi_{\{\eta_t (k) >0\}}
-N_{\beta} \eta^l_t(k) \right) dt
\right |
=
0.
\end{equation}
By the $1$-block estimate (Lemma \ref{one block lem}),
\begin{equation*}
\limsup_{l\to \infty}
\limsup_{N\to \infty}
\sup_{aN\leq k \leq bN }
\E_N \left |
\int_0^T D_{N,k}^{G,t}
\left( N_{\beta}\chi_{\{\eta_t (k) >0\}}
-
\dfrac{N_{\beta} \eta^l_t(k)}{1+ \eta^l_t(k)}
 \right) dt
\right |
=
0.
\end{equation*}
Adding and subtracting $N_{\beta} \eta^l_t(k)$, noting the uniform bound on $D^{G,t}_{N,k}$ after \eqref{DG},
\eqref {eqn: replacement for ln k case} will follow if we have
$$\limsup_{l\to \infty}
\limsup_{N\to \infty}
\sup_{aN\leq k \leq bN }
\E_N
\int_0^T 
\left(
\dfrac
{N_{\beta} \left( \eta^l_t(k)\right)^2 }
{1+ \eta^l_t(k)}
 \right) dt
=
0.$$
In fact,
by attractiveness \eqref{attractiveness}, noting that $\Pam_{c,N}$ is an invariant measure,
it will be enough to verify that 
\begin{equation*}
\limsup_{l\to \infty}
\limsup_{N\to \infty}
\sup_{aN\leq k \leq bN }
E_{\Pam_{c,N}}
\left(
\dfrac
{N_{\beta} \left( \eta^l(k)\right)^2 }
{1+ \eta^l(k)}
 \right) 
=
0.
\end{equation*}

To this end, for any $l$, $N$, $aN\leq k \leq bN$, noting that $\Pam_{c,N}$ is a product measure, we have
\begin{eqnarray} 
 \label {eqn: error linear to Nlinear}
&&E_{\Pam_{c,N}}
\left(
\dfrac
{N_{\beta} \left( \eta^l(k)\right)^2 }
{1+ \eta^l(k)}
 \right)
\leq
E_{\Pam_{c,N}}
\left(
N_{\beta} \left( \eta^l(k)\right)^2
 \right) \\
&& =
 \dfrac{N_{\beta}} {(2l +1)^2}
 \sum_{|j-k|\leq l}
 E_{\Pam_{c,N}} \left( \eta(j)\right)^2.
 +
 \dfrac{N_{\beta}}{(2l +1)^2}
  \sum_{\substack{ j\neq m, |j-k|\leq l\\|m-k|\leq l }}
  E_{\Pam_{c,N}}\big(\eta(j)\big)
  E_{\Pam_{c,N}} \big(\eta(m)\big).\nonumber
\end{eqnarray}
Recall, under $\Pam_{c,N}$, that $\left\{\eta(j)\right\}$ is a sequence of Geometric variables 
with parameters $\theta_{j,c}=ce^{-\beta \Eb_j-j/N}$.
We may calculate that \eqref{eqn: error linear to Nlinear} equals
\begin{equation}
\label{step99}
\frac{N_\beta}{(2l+1)^2} \sum_{|j-k|\leq l} \big[\rho^2_{j,c} + \rho_{j,c}\big]
  + \frac{N_\beta}{(2l+1)^2} \sum_{\substack{ j\neq m, |j-k|\leq l\\|m-k|\leq l }}
\rho_{j,c}\rho_{m,c}.
\end{equation}

By the site particle bound
\eqref{site_particle}, we have
$$\sup_{N}\sup_{aN-l\leq j\leq bN+l} N_\beta \rho_{j,c}<\infty.$$
Also, as $\beta> 0$, we have $N_\beta=e^{\beta\Eb_N}\to \infty$.

Hence, we see that \eqref{step99} is of order $O(N_\beta^{-1}l^{-1}+ l^{-1} + N_\beta^{-1})$, which vanishes as $N\to\infty$ and then $l\to \infty$.
\end{proof}

\vskip .1cm
{\it Step 2.}
Now, with the help of this replacement lemma and following Steps 1 and 2 in the proof of Theorem \ref{thm: beta 0}, we readily have 
\begin{equation} \label {ln k eqn before replace}
\begin{split}
&
\lim_{N\to \infty}
Q^N
\left(
| 
\left\langle G_T,  \pi^N_T \right\rangle
-
\left\langle G_0,  \pi^N_0 \right\rangle
-
\int_0^T \left( \left\langle \partial_s G_s,  \pi^N_s \right\rangle 
\vphantom{\dfrac{aa}{NN}}
\right. \right. \\
&
\left. \left.
+
\dfrac1N \sum_{k=aN}^{bN}
\left(
\Delta_{N} G_s\left(\frac k N \right) 
+
\dfrac {\lambda_k - 1}{1/N}
\nabla_{N} G_s\left( \frac k N \right)
\right)
N_{\beta} \eta_s (k)
   \right)   ds |
>\delta
\right)
=0.
\end{split}
\end{equation}

Recall $\alpha(x,\beta)=\lim_{\substack{N\to \infty\\k/N\to x}} \dfrac {\lambda_k - 1}{1/N}$ equals $-(\beta+x)/x$ when $\Eb_k \sim \ln k$ and equals $-1$ when $1\ll \Eb_k \ll \ln k$ (cf. \eqref{alpha_D}).
Then, we may replace $\nabla_N$, $\Delta_N$, and $N(\lambda_k -1)$ by $\nabla$, $\Delta$, and $a(x,\beta)$
respectively, in \eqref{ln k eqn before replace}.  We obtain
\begin{equation*} 
\begin{split}
&
\lim_{N\to \infty}
Q^N
\Big [
\big| 
\left\langle G_T,  \pi^N_T \right\rangle
-
\left\langle G_0,  \pi^N_0 \right\rangle
-
\int_0^T \left( \left\langle \partial_s G_s,  \pi^N_s \right\rangle 
\vphantom{\dfrac{aa}{NN}}
\right.\\
&\ \ \ \ \ \ \ \ \ \ \ \ \ \ \ \ \ \ \ \ \ \ \ \ \ \ +
\left.
\vphantom{\dfrac{aa}{NN}}
\left \langle 
\Delta G_s
+
a(x,\beta)
\nabla G_s , \pi^N_s
\right\rangle
   \right)   ds \big |
>\delta
\Big ]
=0.
\end{split}
\end{equation*}

\vskip .1cm
{\it Step 3.}
Now, the sequence $\{Q^N\}$ is tight with respect to the uniform topology by Lemma \ref{tightness beta 0}.  Let $Q$ be a limit point.  Then,
\begin{equation*}
Q
\Big[
\left\langle G_T,  \pi_T \right\rangle
-
\left\langle G_0,  \pi_0 \right\rangle
-
\int_0^T \left(
 \left\langle \partial_s G_s,  \pi_s \right\rangle 
 \vphantom{\dfrac{aa}{NN}}
\right.
+
\left.
\vphantom{\dfrac{aa}{NN}}
\left \langle 
\Delta G_s
+
a(x,\beta)
\nabla G_s , \pi_s
\right\rangle
   \right)   ds
   =0
\Big ]
=1.
\end{equation*}
Since $Q$ is supported on absolutely continuous trajectories $\pi_t(dx) = \rho(t,x)dx$, where $\rho\in L^1([0,T]\times \R^+)$ 
by Lemma \ref{lem: rho < phi c},
we have that each $\rho(t,x)$ is a weak solution of \eqref {eqn: macro ln k} or \eqref {eqn: macro lnln k}, depending on the choice of energy $\Eb_k$.
Using the uniqueness results when $\beta> 0$ shown in Subsection \ref{section uniqueness beta neq 0}, we now follow exactly Step 5 of the proof given in $\beta=0$ case, to obtain the full statements of Theorems \ref{thm: ln k} and \ref{thm: lnln k}.
 \qed

%%%%%%%%%%%%%%%%%%%%%%%%%%%%%%%%%%%%%%%%%%%%%%%%%%%%%
\section{Proof outline:  Hydrodynamic limits for the diagrams} \label{section: diagrams limit}
In this section, we prove Corollary \ref{cor: diagrams}.
We will only prove the $\beta = 0$ case. The other two cases follow from similar arguments.

\vskip .1cm
{\it Step 1.}
We will assume the hydrodynamic limit result Theorem \ref {thm: beta 0} holds.
 First, we show that we may extend the limit
\begin{equation} \label {convergence of density general g}
\lim_{N\to \infty} \P_N
\Big[ \Big|
\dfrac {1} N 
\sum_{k=1}^{\infty}
g\Big(\dfrac k N\Big) \eta_t(k) 
-
\int_0^{\infty} g(x) \rho(t,x) dx
\Big|
>\delta
 \Big]
 =0
\end{equation}
to all $g\in C^{\infty}(\R^+_{\circ})$ supported on $[a,\infty)$ and satisfying $g(x)=g(b)$ for all $x\geq b$ for some $0<a < b <\infty$.
Indeed, fix such a $g$ and take $g_n\in C_c^{\infty}(\R^+_{\circ})$ such that $g_n = g$ on $(0,n)$.
Then,
\begin{equation*}
\begin{split}
&\P_N
\Big[ \Big|
\dfrac {1} N 
\sum_{k=1}^{\infty}
g\Big(\dfrac k N\Big) \eta_t(k) 
-
\int_0^{\infty} g(x) \rho(t,x) dx
\Big|
>\delta
 \Big]\\
 \leq&
 \P_N
\Big[ \Big|
\dfrac {1} N 
\sum_{k=1}^{\infty}
g_n\Big(\dfrac k N\Big) \eta_t(k) 
-
\int_0^{\infty} g_n(x) \rho(t,x) dx
\Big|
>\dfrac{\delta}2
 \Big]\\
 &+
 \P_N
\Big[ \Big|
\dfrac {1} N 
\sum_{k=1}^{\infty}
\Big(g\Big(\dfrac k N\Big) - g_n\Big(\dfrac k N\Big) \Big)\eta_t(k) 
-
\int_0^{\infty} (g(x) -g_n(x)) \rho(t,x) dx
\Big|
>\dfrac{\delta}2
 \Big]
\end{split}
\end{equation*}
Since $g_n$ is compacted supported, by Theorem \ref {thm: beta 0}, the first term vanishes as $N\to \infty$.

As $\rho\leq \phi_c$ and $\phi_c \in L^1(\R^+)$, for $n$ large enough,
the second term is bounded from above by
\begin{equation*}
\begin{split}
 \P_N
\Big[ \Big|
\dfrac {1} N 
\sum_{k=1}^{\infty}
\Big(g\Big(\dfrac k N\Big) - g_n\Big(\dfrac k N\Big) \Big)\eta_t(k) 
\Big|
>\dfrac{\delta}4
 \Big]
 \leq
  \P_N
\Big[ 
\dfrac {2\|g\|_\infty} N 
\sum_{k=nN}^{\infty}
\eta_t(k) 
>\dfrac{\delta}4
 \Big].
\end{split}
\end{equation*}

By attractiveness \eqref{attractiveness} and the Markov inequality, the right-hand side probability is bounded by
$(8\|g\|_\infty/\delta) N^{-1} \sum_{k\geq n} E_{\Pam_{c,N}}(\eta(k))$.
   By \eqref{mean_var_pam}, we observe $\sum_{k\geq 1} E_{\Pam_{c,N}}(\eta(k)) = O(N)$.  Hence, the above display vanishes as $n\to \infty$ uniformly for $N\geq 1$, and \eqref{convergence of density general g} is proved.

\vskip .1cm
{\it Step 2.}
Define $\psi(t,x)= \int_x^{\infty} \rho(t,x) dx$. 
 Then, $\psi(t,x)$ belongs to $\Wb$ and is the unique weak solution of  \eqref{eqn: macro beta 0, psi} as shown in Subsection \ref{section uniqueness beta 0}. 
Now, fix any $G\in C_c^{\infty}(\R^+_{\circ})$ and define $g(x) = \int_0^x G(u)du$ for all $x\in \R^+_{\circ}$.
By integration by parts, we have $\int_0^{\infty} G(x) \psi(t,x) dx = \int_0^{\infty} g(x) \rho(t,x) dx$.

Recall $\psi_N$ from \eqref{def of psi N}.  Using summation by parts, we have
\begin{equation*}
\begin{split}
 \int_0^{\infty} G(x) \psi_N(t,x) dx 
 =&
 \sum_{k=1}^{\infty}
 \Big[g\Big(\frac k N\Big) - g\Big(\frac {k-1} N\Big)\Big]  \psi_N(t,k/N)\\
 =&
  \dfrac1N \sum_{k=1}^{\infty} g\Big(\dfrac k N\Big)  \eta_t(k).
\end{split}
\end{equation*} 
Then, we obtain \eqref{eqn: convergence for psi} from \eqref{convergence of density general g}
and Corollary \ref{cor: diagrams} is proved.
\qed

%%%%%%%%%%%%%%%%%%%%%%%%%%%%%%%%%%%%%%%%%%%%%%%%%%%%%%%

\section{Tightness and properties of limit measures} \label {section: tightness}
In this section, we obtain tightness of the family of probability measures $\left\{Q^N\right\}_{N\in \N}$ on the trajectory space $D([0,T],\Mb )$.
Then, we show some properties of the limit measures $Q$.

\subsection{Tightness}
We show that $\{Q^N\}$ is tight with respect to the uniform topology, stronger than the Skorokhod topology on $D([0,T], \Mb)$.

\begin{Lem} \label {tightness beta 0}
$\left\{Q^N\right\}_{N\in \N}$ is relatively compact with respect to the uniform topology.  As a consequence, all limit points $Q$ are supported on vaguely continuous trajectories $\pi$, that is for $G\in C^\infty_c(\R^+_\circ)$ we have $t\in [0,T]\mapsto\langle G, \pi_t\rangle$ is continuous.
\end{Lem}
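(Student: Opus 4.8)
The plan is to establish tightness in the uniform topology on $D([0,T],\Mb)$ by verifying the two standard criteria for a family of $\Mb$-valued processes: (i) for each fixed $t$ and each test function $G\in C_c^\infty(\R^+_\circ)$, the real-valued random variables $\langle G,\pi^N_t\rangle$ are tight in $\R$; and (ii) an equicontinuity-in-time (modulus of continuity) estimate for $t\mapsto\langle G,\pi^N_t\rangle$ holds uniformly in $N$. Since $\Mb$ is metrized by $d(\mu,\nu)=\sum_k 2^{-k}\frac{|\langle f_k,\mu-\nu\rangle|}{1+|\langle f_k,\mu-\nu\rangle|}$ for a countable dense family $\{f_k\}\subset C_c(\R^+_\circ)$, and because $d$-convergence is vague convergence, it suffices to control each coordinate $\langle f_k,\pi^N_\cdot\rangle$ separately and then combine via the weighted sum; a diagonal/Arzel\`a–Ascoli argument over the countable family then yields relative compactness of $\{Q^N\}$ in $C([0,T],\Mb)$, which is exactly relative compactness in the uniform topology, and forces all limit points to be supported on (vaguely) continuous trajectories.

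For (i), fix $G$ supported in $[a,b]$ with $0<a<b<\infty$. Then $|\langle G,\pi^N_t\rangle|\le \|G\|_\infty\,\frac{N_\beta}{N}\sum_{k=aN}^{bN}\eta_t(k)$, and by attractiveness \eqref{attractiveness} together with the total particle bound \eqref{total number} (or, more simply, $\E_N\big[\frac{N_\beta}{N}\sum_k\eta_t(k)\big]=O(1)$ and Markov's inequality), the family $\{\langle G,\pi^N_t\rangle\}_N$ is tight; in fact its first moment is bounded uniformly in $N$ and $t\in[0,T]$. For (ii), I would use the martingale decomposition from Section \ref{section: martingale}: for smooth $G$ compactly supported in $\R^+_\circ$ (time-independent, say),
\begin{equation*}
\langle G,\pi^N_t\rangle-\langle G,\pi^N_s\rangle
=\int_s^t N^2L\langle G,\pi^N_r\rangle\,dr + \big(M^{N,G}_t-M^{N,G}_s\big).
\end{equation*}
From \eqref{gen_comp}, \eqref{DG}, and the bound \eqref{DG_bound}, the drift integrand $N^2L\langle G,\pi^N_r\rangle = \frac1N\sum_{aN\le k\le bN} D^{G,r}_{N,k}\,N_\beta\chi_{\{\eta_r(k)>0\}}$ is bounded in absolute value by $C_G\,\frac{N_\beta}{N}\sum_{aN\le k\le bN}\chi_{\{\eta_r(k)>0\}}\le C_G\,\frac{N_\beta}{N}\sum_k\eta_r(k)$, whose expectation is $O(1)$ uniformly in $r$; hence the drift contribution to $|\langle G,\pi^N_t\rangle-\langle G,\pi^N_s\rangle|$ is $O(|t-s|)$ in expectation, uniformly in $N$. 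For the martingale part, Doob's inequality plus Lemma \ref{lem: martingale bounds} gives $\E_N\sup_{0\le t\le T}(M^{N,G}_t)^2\le 4\,\E_N\langle M^{N,G}\rangle_T\le C_G T\,N_\beta N^{-1}\to 0$; so the martingale is uniformly small in $N$ and in particular equicontinuous in the relevant sense. Combining, for every $\varepsilon>0$,
\begin{equation*}
\lim_{\gamma\to 0}\limsup_{N\to\infty}\P_N\Big(\sup_{\substack{|t-s|\le\gamma\\0\le s,t\le T}}\big|\langle G,\pi^N_t\rangle-\langle G,\pi^N_s\rangle\big|>\varepsilon\Big)=0,
\end{equation*}
which is the desired modulus-of-continuity estimate.

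With (i) and (ii) in hand for each $G=f_k$, the Arzel\`a–Ascoli-type criterion (e.g. Kipnis–Landim, or Billingsley adapted to the metric $d$) shows $\{Q^N\}$ is relatively compact in $C([0,T],\Mb)$ equipped with the uniform metric; since $C([0,T],\Mb)\subset D([0,T],\Mb)$ and uniform convergence is stronger than Skorokhod convergence, $\{Q^N\}$ is relatively compact in the uniform topology. Any limit point $Q$ is then supported on $C([0,T],\Mb)$, i.e. on trajectories $\pi$ for which $t\mapsto\langle G,\pi_t\rangle$ is continuous for every $G\in C_c(\R^+_\circ)$, and by density for every $G\in C_c^\infty(\R^+_\circ)$, which is the stated conclusion. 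The main obstacle is simply bookkeeping the factors of $N_\beta$ correctly so that all moment bounds are genuinely uniform in $N$ across the three regimes — this is handled precisely by \eqref{total number}, \eqref{DG_bound}, and Lemma \ref{lem: martingale bounds}, all of which are already available; no genuinely new estimate beyond these is required.
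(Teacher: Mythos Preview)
Your proof is correct and follows essentially the same route as the paper: compact containment via first-moment bounds and Markov's inequality, then the modulus-of-continuity condition via the martingale decomposition, with the drift controlled through \eqref{DG_bound} and \eqref{total number} and the martingale through Lemma \ref{lem: martingale bounds} and Doob. One point worth making explicit: from ``the drift integrand has $O(1)$ expectation uniformly in $r$'' you cannot directly control $\sup_{|t-s|\le\gamma}\big|\int_s^t\cdots\,dr\big|$; the missing observation (which the paper uses) is that $\tfrac{N_\beta}{N}\sum_k\eta_r(k)$ is \emph{conserved} in $r$, so your pathwise bound on the integrand is $C_G$ times a random variable independent of $r$, giving $\sup_{|t-s|\le\gamma}\big|\int_s^t\big|\le C_G\gamma\cdot\tfrac{N_\beta}{N}\sum_k\eta_0(k)$, after which Markov's inequality finishes.
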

\begin{proof}
Recall the distance $d$ and space of measures $\Mb$ in the introduction.
To show that $\{Q^N\}$ is relatively compact with respect to uniform topology, 
we show the following items (cf.\,p.\,51 \cite{KL}).
\begin{enumerate} 
\item For each $t\in[0,T]$, $\e>0$, there exists compact set $K_{t,\e}\subset \Mb$ such that
\begin{equation}\label {eqn1 compactness}
\sup_{N} Q^{N} \left[\pi^N_\cdot: \pi^N_t\notin K_{t,\e}\right ] \leq \e.
\end{equation}
\item For every $\e>0$, 
\begin{equation}\label {eqn2 compactness}
\lim_{\gamma \to 0} 
\lim_{N\to \infty}
Q^N \Big [  \pi^N_\cdot : \sup_{|t-s|<\gamma} 
d(\pi^N_t,\pi^N_s ) >\e \Big]
=0.
\end{equation}
\end{enumerate}
We now argue the first condition \eqref{eqn1 compactness}. Indeed, since the dynamics is attractive (cf.\,\eqref{attractiveness}), we have
\begin{equation*}
Q^N\Big[\langle 1, \pi^N_t\rangle > A\Big]
\leq \Pam_{c,N}
\Big[ N^{-1}N_\beta \sum_{k\geq 1} \eta(k)> A\Big]
.
\end{equation*}
Applying Markov's inequality and using the mean particle estimate \eqref{mean_var_pam}, we obtain $$Q^N\left[\langle 1, \pi^N_t\rangle > A\right] \leq \frac{C}{A}$$ for some constant $C$ independent of $N$ and $A$. Notice that the set $\left\{ \mu\in \Mb: \langle 1,\mu\rangle \leq A  \right\}$ is compact in $\Mb$, then the first condition \eqref{eqn1 compactness} is checked by taking $A$ large. 

To show the second condition \eqref {eqn2 compactness}, it is enough to show a counterpart of the condition for the distributions of $\langle G, \pi^N_{\cdot}\rangle$ where $G$ is any smooth test function with compact support in $\R^+_{\circ}$
(cf.\,p.\,54, \cite{KL}).
In other words, we need to show,
 for every $\e>0$, 
\begin{equation}\label {eqn2 compactness with G}
\lim_{\gamma \to 0} 
\lim_{N\to \infty}
Q^N \Big [  \pi^N_\cdot : \sup_{|t-s|<\gamma} \Big| \langle G,  \pi^N_t \rangle  -\langle G, \pi^N_s \rangle\Big| >\e \Big]
=0.
\end{equation}

We now show the condition \eqref{eqn2 compactness with G}.
Since
\begin{equation*}
\left\langle G,  \pi^N_t \right\rangle
=
\left\langle G,  \pi^N_0 \right\rangle
+
\int_0^t N^2L \left\langle G,   \pi^N_s \right\rangle ds
+
M^{N,G}_t
\end{equation*}
we only need to consider the oscillations of $\dsp\int_0^t N^2L \left\langle G,   \pi^N_s \right\rangle ds$
and $M^{N,G}_t$ respectively.

Suppose that $G$ has support $[a,b]$ with $0<a<b<\infty$.  Recall the generator computation \eqref{gen_comp}.
For $N$ large, we have
\begin{equation*}
\begin{split}
&\sup_{|t-s| < \gamma}
 \Big|
 \dsp\int_s^t N^2L \left\langle G,  \pi^N_{\tau} \right\rangle d\tau
     \Big|\\
 =&
 \sup_{|t-s| < \gamma}
 \Big|
 \dsp\int_s^t 
\left\{
\dfrac{N_\beta}N \sum_{k=aN}^{bN}
\Big(
\Delta_{N} G\left(k/N \right) 
+
\dfrac {\lambda_k - 1}{1/N}
\nabla_{N} G\left( k/N \right)
\Big)
\chi_{\{ \eta_{\tau}(k)>0\}}
\right\}  d\tau
     \Big|\\
 \leq &
 C_G
 \sup_{|t-s| < \gamma}
 \dsp\int_s^t 
\left\{
\dfrac{N_\beta}N \sum_{k=aN}^{bN}
\chi_{\{ \eta_{\tau}(k)>0\}}
\right\}  d\tau.
\end{split}
\end{equation*}
When $\beta=0$, we have $N_\beta=1$.  Since $\chi_{\{ \eta(k)>0\}} \leq 1$, then
$
\sup_{|t-s| < \gamma}
 \Big|
 \dsp\int_s^t N^2L \left\langle G,  \pi^N_{\tau} \right\rangle d\tau
     \Big|
 \leq
  C_G (b-a)
\gamma
$
vanishes as $\gamma \to 0$.

For other case $\beta>0$, we bound $\chi_{\{ \eta(k)>0\}}\leq \eta(k)$.  Then, by conservation of mass,
\begin{equation*}
\begin{split}
\sup_{|t-s| < \gamma}
 \left|
 \dsp\int_s^t N^2L \left\langle G,  \pi^N_{\tau} \right\rangle d\tau
     \right|
 \leq
  C_G
 \sup_{|t-s| < \gamma}
 \dsp\int_s^t 
\left\{
\dfrac1N \sum_{k=1}^{\infty}
N_{\beta}\eta_{\tau}(k)
\right\}  d\tau\\
=
  C_G
\gamma 
\dfrac1N \sum_{k=1}^{\infty}
N_{\beta}\eta_0(k).
\end{split}
\end{equation*}
Recall the total expected number of particles is of order $NN_\beta^{-1}$ (cf. \eqref{total number}).
By Markov inequality, 
$
Q^N \Big [ 
\sup_{|t-s|<\gamma} 
 \left|
 \dsp\int_s^t N^2L \left\langle G,  \pi^N_{\tau} \right\rangle d\tau
     \right|
>\e \Big]
\leq
\dfrac{C_G \gamma}{\e}
\E_N \Big(  N^{-1} \sum_{k=1}^{\infty}
N_{\beta}\eta_0(k) \Big), 
$
vanishes as $N\uparrow\infty$ and $\gamma\downarrow 0$.

Next, we treat the martingale $M^{N,G}_t$.
Trivially, by $\big| M^{N,G}_t - M^{N,G}_s  \big| \leq  \big| M^{N,G}_t \big| + \big| M^{N,G}_s  \big|$, we have
$$
\P_N
\Big(
\sup_{|t-s|<\gamma} 
\big| 
M^{N,G}_t - M^{N,G}_s  
\big|
>\e
\Big)
\leq
2\P_N
\Big(
\sup_{0\leq t \leq T} 
\big| 
M^{N,G}_t  
\big|
>\e /2
\Big)
$$
which, by Chebychev and Doob's inequality, is bounded by
$$
\dfrac{8}{\e^2}
\E_N
\Big[
\Big(
\sup_{0\leq t \leq T} 
\big| 
M^{N,G}_t  
\big|
\Big)^2
\Big]
\leq
\dfrac{32}{\e^2}
\E_N
\Big[
\big(
M^{N,G}_T
\big)^2
\Big]
= 
\dfrac{32}{\e^2}\E_N
\langle
M^{N,G}
\rangle_T .
$$

Now, by Lemma \ref{lem: martingale bounds}, 
 $\langle M^{N,G}\rangle_T$ is of order $O(N_{\beta}N^{-1}) = o(1)$ (cf. \eqref{Nbeta eqn}). Then, we conclude
\begin{equation*}
\begin{split}
\lim_{\gamma \to 0}
\lim_{N\to \infty}
\P_N
\Big(
\sup_{|t-s|<\gamma} 
\left| 
M^{N,G}_t - M^{N,G}_s  
\right|
>\e
\Big)
=0.
\qedhere
\end{split}
\end{equation*}
\end{proof}
%%%%%%%%%%%%%%%%%%%%%%%%%%%%%

%%%%%%%%%%%%%%%
%%%%%%%%%

\subsection{Properties of limit measures.}
By Lemma \ref{tightness beta 0}, the sequence $\left\{ Q^N\right\}$ is relatively compact with respect to the uniform topology.  Consider any convergent subsequence of $Q^N$ and relabel so that $Q^N \Rightarrow Q$.  

We now show some properties of $Q$.

\begin{Lem}\label{lem: rho < phi c}
$Q$ is supported on absolutely continuous trajectories whose densities satisfy certain bounds:
\begin{equation*} 
Q \left[ \pi_\cdot :
\pi_t(dx) = \pi(t,x)dx \text{ with }
\pi(t,\cdot)\leq \phi_c(\cdot)
 \text{ for all } 0\leq t\leq T
\right]
=
1.
\end{equation*}
\end{Lem}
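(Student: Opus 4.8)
The plan is to show first that $Q$-almost surely the limit trajectory $\pi_t$ is, for each $t$, absolutely continuous with respect to Lebesgue measure with density dominated by $\phi_c$, and then to upgrade this to the stated ``for all $t$'' statement using the vague continuity already established in Lemma \ref{tightness beta 0}. The key input is the attractiveness bound \eqref{attractiveness} together with the static limit of the invariant (really, the stochastically dominating) measures $\Pam_{c,N}$ from Proposition \ref{prop: static}.

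First I would fix a nonnegative test function $G\in C_c^\infty(\R^+_\circ)$ and a time $t\in[0,T]$. By attractiveness \eqref{attractiveness}, $\langle G,\pi^N_t\rangle = \frac{N_\beta}{N}\sum_k G(k/N)\eta_t(k)$ is stochastically dominated, under $\P_N$, by $\frac{N_\beta}{N}\sum_k G(k/N)\xi(k)$ under $\Pam_{c,N}$. By Proposition \ref{prop: static}, the latter converges in probability to the constant $\int_0^\infty G(x)\phi_c(x)\,dx$. Hence for every $\delta>0$,
\begin{equation*}
\limsup_{N\to\infty}\P_N\Big[\langle G,\pi^N_t\rangle > \int_0^\infty G(x)\phi_c(x)\,dx + \delta\Big] = 0,
\end{equation*}
and since $\langle G,\pi^N_t\rangle$ is a bounded-below continuous functional of $\pi^N_t$ (for fixed $G$, $\mu\mapsto\langle G,\mu\rangle$ is vaguely continuous), passing to the weak limit $Q^N\Rightarrow Q$ gives $Q[\langle G,\pi_t\rangle \leq \int_0^\infty G(x)\phi_c(x)\,dx]=1$ (using a portmanteau-type argument and letting $\delta\downarrow 0$; one should be slightly careful and instead test against $\langle G,\pi_t\rangle \geq \int G\phi_c + \delta$, an open condition, to apply the portmanteau inequality cleanly). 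Taking a countable family $\{G_j\}$ dense in $C_c(\R^+_\circ)$ in the appropriate sense and intersecting the corresponding full-measure events, we conclude that $Q$-a.s., $\langle G,\pi_t\rangle\leq\int G\phi_c$ for all nonnegative $G\in C_c(\R^+_\circ)$. This forces $\pi_t(dx)\leq \phi_c(x)\,dx$ as measures, in particular $\pi_t$ is absolutely continuous with a density $\pi(t,\cdot)\leq\phi_c$.

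The remaining point is the order of quantifiers: the above gives the bound for each fixed $t$, hence (Fubini) for Lebesgue-a.e.\ $t\in[0,T]$, $Q$-a.s. To get it for \emph{all} $t\in[0,T]$ simultaneously, I would use that $Q$ is supported on vaguely continuous trajectories (Lemma \ref{tightness beta 0}): for fixed nonnegative $G\in C_c^\infty(\R^+_\circ)$, the map $t\mapsto\langle G,\pi_t\rangle$ is continuous $Q$-a.s., and it is $\leq\int G\phi_c$ on a dense set of $t$'s, hence on all of $[0,T]$. Intersecting over the countable dense family $\{G_j\}$ again yields $\pi_t\leq\phi_c\,dx$ for all $t\in[0,T]$, $Q$-a.s., which is exactly the claim. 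I would also record that the total mass bound $\langle 1,\pi_t\rangle\leq\int\phi_c<\infty$ (finite since $c<c_0$ implies $\phi_c\in L^1$) follows from the same argument by monotone approximation of $1$ by compactly supported $G$'s, if needed elsewhere.

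The main obstacle is the measure-theoretic bookkeeping around passing the one-sided inequality through the weak limit $Q^N\Rightarrow Q$ and then through the vague-continuity argument: one must phrase the event as open (or closed) appropriately to invoke the portmanteau theorem, and must be careful that $\mu\mapsto\langle G,\mu\rangle$ for compactly supported continuous $G$ is genuinely continuous on $(\Mb,d)$ — which it is, since $d$ metrizes vague convergence — and that ``$\pi_t(dx)\leq\phi_c\,dx$ for all $G$ in a countable dense family'' upgrades to the measure inequality. None of these steps is deep, but they are where care is required; the probabilistic content is entirely carried by attractiveness plus Proposition \ref{prop: static}.
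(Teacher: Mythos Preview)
Your proposal is correct and follows essentially the same approach as the paper: attractiveness plus Proposition~\ref{prop: static} to get the one-sided bound $\langle G,\pi_t\rangle\le\int G\phi_c$ for each fixed nonnegative $G$ and $t$, a portmanteau argument to pass through $Q^N\Rightarrow Q$, a countable dense family of test functions to upgrade to the measure inequality $\pi_t\le\phi_c\,dx$, and vague continuity (Lemma~\ref{tightness beta 0}) together with a countable dense set of times to extend to all $t\in[0,T]$. The only cosmetic difference is that the paper works directly with a dense set $\{t_k\}$ from the start rather than mentioning the Fubini detour.
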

\begin{proof}
Let $C_c^+(\R^+_{\circ})$ be the space of nonnegative continuous functions with compact support on $\R^+_{\circ}$
and we equip it with the topology of uniform convergence on compact sets. 
Take $\left\{G_n\right\}_{n\in \N}$ be a dense sequence of $C_c^+(\R^+_{\circ})$.
The lemma is equivalent to
$$
Q \Big[
\langle G_n, \pi_t \rangle
\leq 
\int_{\R^+_{\circ}} G_n(x)\phi_c(x) dx
 \text{ for all } 0\leq t\leq T \text{ and } n\in \N
\Big]
=
1.
$$

Fix a dense set $\{t_k\}_{k\in \N}$ of $[0,T]$. 
Assume for this moment, for any $n,k\in \N$ and $\e>0$, that
\begin{equation} \label{eqn: rho < phi c for G}
Q \Big[
\langle G_n, \pi_{t_k} \rangle
\leq 
\int_{\R^+_{\circ}} G_n(x)\phi_c(x) dx + \e
\Big]
=
1.
\end{equation}
Since $Q$ is supported on vaguely continuous trajectories by Lemma \ref{tightness beta 0}, we obtain for all $\e>0$,
$$
Q \Big[
\langle G_n, \pi_{t} \rangle
\leq 
\int_{\R^+_{\circ}} G_n(x)\phi_c(x) dx + \e
 \text{ for all } 0\leq t\leq T, n\in \N 
\Big]
=
1.
$$
Then, we conclude the lemma by taking $\e\to 0$.
 
It remains to prove \eqref{eqn: rho < phi c for G}. Fix $k,n$, and $\e$ and observe
\begin{equation*}
\begin{split}
Q^N \Big[
\langle G_n, \pi_{t_k}^N \rangle
\leq 
\int_{\R^+_{\circ}} G_n\phi_c dx + \e
\Big]
=&
\P_N \Big[
 \frac{N_\beta}{N}\sum_{j=1}^\infty G_n(j/N)\eta_{t_k}(j)
\leq 
\int_{\R^+_{\circ}} G_n\phi_c dx + \e
\Big]
\end{split}
\end{equation*}
By attractiveness (cf. Subsection \ref{subsec: attractiveness}) and the assumption $\nu^N \leq \Pam_{c,N}$, the above display is bounded from below by
$$
\Pam_{c,N} \Big[
 \frac{N_\beta}{N}\sum_{j=1}^\infty G_n(j/N)\eta(j)
\leq 
\int_{\R^+_{\circ}} G_n\phi_c dx + \e
\Big],
$$
which approaches $1$ as $N\to \infty $ by Proposition \ref {prop: static}.
Then, we have
$$
\limsup_{N\to \infty} Q^N \left[
\langle G_n, \pi_t^N \rangle
\leq 
\int_{\R^+_{\circ}} G_n\phi_c dx + \e
\right]
=
1.
$$
As compactness of $\{Q^N\}$ was shown in the uniform topology in Lemma \ref{tightness beta 0}, the distribution of $\langle G_n,\pi^N_t\rangle$ under $Q^N$ converges weakly to $\langle G_n,\pi_t\rangle$ under $Q$. 
Hence, \eqref{eqn: rho < phi c for G} follows.
\end{proof}

\begin{Lem}  \label {lem mass conserve}
 $Q$ is supported on trajectories with constant total mass:
\begin{equation*}
Q \Big[
\pi_\cdot:
 \langle 1,\pi_t\rangle 
 = \dsp \int_0^\infty \rho_0 dx 
 \text{ for all } 0\leq t\leq T
\Big]
=
1.
\end{equation*}
\end{Lem}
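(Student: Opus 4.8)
The plan is to exploit that the generator $N^2L$ conserves the total number of particles, so that the scaled total mass $\langle 1,\pi^N_t\rangle$ is \emph{literally the same random variable} as $\langle 1,\pi^N_0\rangle$ for every $t$; to show this variable concentrates at $m:=\int_0^\infty\rho_0\,dx$; and then to pass the concentration to a limit point $Q$ by sandwiching $\langle 1,\cdot\rangle$ between compactly supported test functions, the truncation error being controlled with the invariant measures $\Pam_{c,N}$.

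First I would record that $\sum_{k\ge1}\eta_t(k)=\sum_{k\ge1}\eta_0(k)$ $\P_N$-a.s.\ for all $t$ (the jumps of $N^2L$ only move particles between adjacent sites), so $\langle 1,\pi^N_t\rangle=\langle 1,\pi^N_0\rangle=\tfrac{N_\beta}{N}\sum_k\eta_0(k)$. Under the product measure $\nu^N$ this has mean $\tfrac{N_\beta}{N}\sum_k m_{N,k}$, which tends to $m$ since $\tfrac1N\sum_k\overline\rho_{N,k}=\int_0^\infty\rho_0\,dx$ and $\big|\tfrac{N_\beta}{N}\sum_k m_{N,k}-m\big|\le\tfrac1N\sum_k|N_\beta m_{N,k}-\overline\rho_{N,k}|\to 0$ by Condition \ref{condition 1}(1); and it has variance $(N_\beta/N)^2\sum_k\text{Var}_{\P_N}(\eta_0(k))=(N_\beta/N)^2\,o(N^2N_\beta^{-2})=o(1)$ by the variance bound \eqref{nu N var bound} at $t=0$. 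Hence $\langle 1,\pi^N_t\rangle\to m$ in $L^2(\P_N)$, in particular in $\P_N$-probability, for every $t\in[0,T]$.

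Next I would transfer this to $Q$. Fix $G_n\in C_c^\infty(\R^+_\circ)$ with $0\le G_n\le 1$ and $G_n\equiv 1$ on $[1/n,n]$. Since $Q^N\Rightarrow Q$ in the uniform topology (Lemma \ref{tightness beta 0}) and $\pi_\cdot\mapsto\langle G_n,\pi_t\rangle$ is continuous, the law of $\langle G_n,\pi^N_t\rangle$ converges weakly to that of $\langle G_n,\pi_t\rangle$ under $Q$. From $\langle G_n,\pi^N_t\rangle\le\langle 1,\pi^N_t\rangle\to m$ and the portmanteau theorem on the open set $\{\langle G_n,\pi_t\rangle>m+\e\}$, I get $\langle G_n,\pi_t\rangle\le m$ $Q$-a.s.; letting $n\uparrow\infty$ (so $\langle G_n,\pi_t\rangle\uparrow\langle 1,\pi_t\rangle$ by monotone convergence) yields $\langle 1,\pi_t\rangle\le m$ $Q$-a.s. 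For the reverse inequality I would write $\langle G_n,\pi^N_t\rangle=\langle 1,\pi^N_t\rangle-\langle 1-G_n,\pi^N_t\rangle$ and estimate the tail term: by attractiveness \eqref{attractiveness} and invariance of $\Pam_{c,N}$, $\E_N[\langle 1-G_n,\pi^N_t\rangle]\le\tfrac{N_\beta}{N}\sum_{k:\,k/N\notin[1/n,n]}\rho_{k,c}=:\kappa_n(N)$, and a direct computation in each of the three regimes should give $\kappa_n:=\limsup_N\kappa_n(N)\to 0$ as $n\to\infty$: using $c<c_0$ (so $\theta_{k,c}\le c/c_0<1$ and $\rho_{k,c}\le(1-c/c_0)^{-1}\theta_{k,c}$) together with the mean value estimate $\Eb_N-\Eb_k=u'(\ln y)\ln(N/k)$ from Condition \ref{condition}, one bounds $N_\beta\rho_{k,c}$ by a fixed profile integrable near $0$ and exponentially decaying at $\infty$, so that $\kappa_n(N)$ is asymptotically dominated by $\int_{[0,1/n]\cup[n,\infty)}\phi_c\,dx$. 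Combining Markov's inequality for $\langle 1-G_n,\pi^N_t\rangle$, the convergence $\langle 1,\pi^N_t\rangle\to m$, and portmanteau on the open set $\{\langle G_n,\pi_t\rangle<m-\e\}$ gives $Q[\langle 1,\pi_t\rangle<m-\e]\le Q[\langle G_n,\pi_t\rangle<m-\e]\le 2\kappa_n/\e$; sending $n\to\infty$ and then $\e\downarrow0$ yields $\langle 1,\pi_t\rangle\ge m$ $Q$-a.s. Thus $\langle 1,\pi_t\rangle=m$ $Q$-a.s.\ for each fixed $t$.

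Finally I would upgrade "for each $t$" to "for all $t$ simultaneously": on a countable dense $D\subset[0,T]$ one has $\langle 1,\pi_t\rangle=m$ for all $t\in D$, $Q$-a.s.; and since $Q$ is supported on trajectories with $\pi_t(dx)=\pi(t,x)dx$, $\pi(t,\cdot)\le\phi_c$ (Lemma \ref{lem: rho < phi c}), we have $0\le\langle 1,\pi_t\rangle-\langle G_n,\pi_t\rangle=\int(1-G_n)\pi(t,\cdot)\,dx\le\int_{[0,1/n]\cup[n,\infty)}\phi_c\,dx\to0$ uniformly in $t$, so $t\mapsto\langle 1,\pi_t\rangle$ is a uniform limit of the continuous maps $t\mapsto\langle G_n,\pi_t\rangle$ (Lemma \ref{tightness beta 0}) and hence continuous $Q$-a.s., forcing equality on all of $[0,T]$. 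I expect the main obstacle to be the truncation estimate $\kappa_n\to0$: because $1\notin C_c(\R^+_\circ)$, neither the hydrodynamic limit theorems nor the tightness bounds apply to $\langle 1,\pi^N_\cdot\rangle$ directly, and one must control the particles with $k/N$ near $0$ or $\infty$ uniformly in $N$, which is exactly where the hypothesis $c<c_0$ and the precise form $\Eb_k=u(\ln k)$ are used.
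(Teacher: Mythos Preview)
Your proposal is correct and follows essentially the same route as the paper: conservation reduces $\langle 1,\pi^N_t\rangle$ to $\langle 1,\pi^N_0\rangle$; Condition~\ref{condition 1}(1) together with the variance bound~\eqref{nu N var bound} gives concentration at $m$; one approximates $1$ by $G_n\in C_c(\R^+_\circ)$, controls the truncation $\langle 1-G_n,\pi^N_t\rangle$ via attractiveness and the means $\rho_{k,c}$ of $\Pam_{c,N}$ (this is exactly the content of Lemma~\ref{invariant measures are local equli}, which gives your $\kappa_n\to0$), passes to $Q$ by portmanteau, and extends from a countable dense time set by vague continuity. The only organizational differences are that you treat the upper and lower deviations separately while the paper bounds $|\langle G_n,\pi^N_{t_k}\rangle-m|>1/n$ in one stroke, and that in the final step you invoke Lemma~\ref{lem: rho < phi c} to get uniform-in-$t$ convergence $\langle G_n,\pi_t\rangle\to\langle 1,\pi_t\rangle$, whereas the paper simply carries the inequality $|\langle G_n,\pi_t\rangle-m|\le 1/n$ through to all $t$ before letting $n\to\infty$; neither change is substantive.
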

\begin{proof} 
Fix a dense set $\{t_k\}_{k\in \N}$ of $[0,T]$.
By compactness in the uniform topology, we have that as $N\to \infty$, the distribution of $\pi^N_t$ under $Q^N$ converges weakly to $\pi_t$ under $Q$. 
We will show that there exist an increasing sequence of $\left\{G_n\right\}_{n\geq 1} \subset C_c(\R^+_{\circ})$ such that
  $\dsp \lim_{n\to \infty} G_n (x)= 1$ and for all $n,k$, 
\begin{equation} \label {case beta=0 deviation of mass}
\liminf_{N\to \infty} Q^N
\Big[
\Big| \langle G_n,\pi^N_{t_k}\rangle - \dsp \int_0^\infty  \rho_0 dx  \Big|
>\dfrac 1n
\Big]
=0.
\end{equation}

Since $Q^N$ converges to $Q$ with respect to the uniform topolgy (cf. Lemma \ref{tightness beta 0}), we have $\pi^N_{t_k}$ converges weakly to $\pi_{t_k}$.  Then, assuming \eqref{case beta=0 deviation of mass}, we conclude for all $n,k$ that
\begin{equation*}
Q\Big[
\Big| \langle G_n,\pi_{t_k}\rangle - \dsp \int_0^\infty \rho_0 dx  \Big|
>\dfrac 1n
\Big]
=0,
\end{equation*}
and therefore
\begin{equation*}
Q\Big[
\Big| \langle G_n,\pi_{t_k} \rangle - \dsp \int_0^\infty \rho_0 dx  \Big|
\leq \dfrac 1n, \text{ for all } n,k
\Big]
=1.
\end{equation*}
Since also $Q$ is supported on vaguely continuous $\pi_\cdot$, we have
\begin{equation*}
Q\Big[
\Big| \langle G_n,\pi_t \rangle - \dsp \int_0^\infty \rho_0 dx  \Big|
\leq \dfrac 1n, \text{ for all } n,0\leq t\leq T
\Big]
=1,
\end{equation*}
which clearly implies the lemma.

Now, we focus on proof of \eqref{case beta=0 deviation of mass}.
For $G\geq 0$
\begin{equation} \label {eqn: mass bound beta not 0}
\begin{split}
&Q^N
\Big[
\Big| \langle G,\pi_{t_k}\rangle - \dsp \int_0^\infty \rho_0 dx  \Big|
>\dfrac 1n
\Big]\\
\leq&
Q^N
\Big[
\langle 1- G,\pi_{t_k}\rangle
>\dfrac 1{2n}
\Big]
+
Q^N
\Big[
\Big| \langle 1,\pi_{0}\rangle - \dsp \int_0^\infty \rho_0 dx  \Big|
>\dfrac 1{2n}
\Big].
\end{split}
\end{equation}
By \eqref{nu N var bound}, the variance 
$\lim_{N\to \infty} \dfrac{N_\beta^2}{N^2} \sum_{k=1}^{\infty} \text{Var}_{\nu^N}(\eta(k)) =0$.  Also, by part (1) of Condition \ref{condition 1}, $\lim_{N\to \infty}\frac{1}{N}\sum_{k\geq 1}\big|N_\beta m_{N,k} - \overline{\rho}_{N,k}\big| = 0$. Therefore, by adding and substracting the mean $m_{N,k}$ inside the absolute value, the second term on the right-hand side of \eqref{eqn: mass bound beta not 0} vanishes.

We now specify $G_n\in C_c(\R^+_{\circ})$ as follows:
\begin{equation*}
0\leq G_n \leq 1,
\quad
G_n =1 \text{ on } [a,b]
\text{ where }
\int_{(0,a) \cup (b,\infty)} \phi_c dx < \dfrac 1 {3n^2}. 
\end{equation*} 
Since $\nu^N\leq \Pam_{c,N}$, by attractiveness (cf. Subsection \ref{subsec: attractiveness}), for each $t\in [0,T]$, we have
\begin{equation} 
\label{9.1_eq}
Q^N
\Big[
\langle 1- G_n,\pi_{t_k}\rangle
>\dfrac 1{2n}
\Big]
\leq
\Pam_{c,N}\Big[\frac{1}{N} \sum_{\frac kN <a \text{ or } >b}
N_{\beta}\eta(k)
>\dfrac 1{2n}
\Big].
\end{equation}
Recall that $\rho_{k,c} = E_{\Pam_{c,N}}\eta(k)$ (cf. \eqref{rho k c}).
In Lemma \ref{invariant measures are local equli}, it is shown that $\frac{1}{N}\sum_{k\geq 1} \big| N_\beta\rho_{k,c} - \overline{\rho}_{N,k}\big|$, where $\overline{\rho}_{N,k} \sim \phi_c(k/N)$, vanishes as $N\rightarrow\infty$.   
 Note also that
$\int_{(0,a) \cup (b,\infty)} \phi_c dx < 1/(3n^{2})<2/n$, for all $n\geq 1$.  Then, by subtracting and adding the mean $N_\beta \rho_{k,c}$, we conclude by Markov inequality and straightforward manipulation that \eqref{9.1_eq} vanishes as $N\rightarrow \infty$.
\end{proof}

%%%%%%%%%%%
%%%%%%%%%%%
%%%%%%%%%%%%%
%%%%%%%%%%%%

\section{$1$- and $2$-blocks estimates} \label {section: 1 and 2 blocks}

In this section, we prove the $1$- and $2$-block estimate.
The statement and proof for the $1$-block estimate is written for all three cases of $\beta$ and $\Eb_k$,
while the $2$-block estimate assumes $\beta =0$. 
In passing, although it is not consequential in this work, we remark that the $2$-block estimate may not hold for the other cases.

%%%%%%%%%%%%%%%%%%%%%%%%%%%%%%%%%%%%%%%%%%%%%%%%%%%%%%

The plan is now to show in the succeeding subsections, a spectral gap bound, and then the $1$ and $2$-block estimates.

%%%%%%%%%%%%%%%%%%%%
%%%%%%%%%%%%%%%%%%%%%
%%%%%%%%%%%%%%%%%%%%%%%%
\subsection{Spectral gap bound for $1$-block estimate}
We obtain now a spectral gap bound to prepare for the $1$-block estimate.
Define, for $k,l\geq 1$ such that $k-l\geq 1$, the set $\L_{k,l}=\left\{k-l,k-l+1,\ldots,k+l\right\}\subset \N$.   Recall that $\theta_k = e^{-\beta \Eb_k -k/N}$ and $\lambda_k =\dfrac{\theta_{k+1}}{\theta_k}$ (cf. \eqref{eqn: lambda k}).  Consider the process restricted to $\L_{k,l}$ generated by
$L_{k,l}$ where
\begin{equation*}  \label {eqn: L_k l}
\begin{split}
L_{k,l}f(\eta)
=&
\sum_{x,x+1\in \L_{k,l}}
\left\{
\lambda_k
\left [  f\left(\eta^{x,x+1} \right)  - f(\eta)   \right] \chi_{\{\eta(x) > 0\}}\right.\\
&
\quad \quad \quad \quad \quad \quad \left. +
\left [  f\left(\eta^{x+1,x}  \right)   - f(\eta) \right ] \chi_{\{\eta(x+1) > 0\}}
\right\}.
\end{split}
\end{equation*}

We will obtain the spectral gap estimate by showing a Poincar\'e inequality.  To state this bound, we need a few more definitions.
With respect to product measure $\mu:=\Pam_{c,N}$, let $\mu_{k,l}$ be its restriction to
  $\Omega_{l,k} = \left\{0,1,2,\ldots \right\}^{\L_{k,l}}$, 
that is 
\begin{equation}
\label{doublestar}
\mu_{k,l} (\eta)= \prod_{x\in \L_{k,l}} (1-\theta_{x,c})\theta_{x,c}^{\eta(x)}, \ \ {\rm where \ }
\theta_{x,c} = ce^{-\beta\Eb_x - x/N}.
\end{equation}
Let $\mu_{k,l,j}$ be the associated canonical measure on $\Omega_{k,l,j} = \{\eta \in \Omega_{l,k}: \sum_{x\in \L_{k,l} }\eta(x) = j\}$, that is $\mu_{k,l}$ is conditioned so that there are exactly $j$ particles counted in $\Omega_{k,l}$.
   
The corresponding Dirichlet form is written
\begin{equation} \label {d form asymm}
\begin{split}
E_{\mu_{k,l,j}}
\left[ f(-L_{k,l} f) \right]
=
\sum_{x,x+1\in \L_{k,l}}
 E_{\mu_{k,l,j}}
 \left[ 
 \chi_{\{\eta(x+1) > 0\}}
\left [  f\left(\eta^{x+1,x}  \right)   - f(\eta) \right ]^2
 \right].
\end{split}
\end{equation}

The primary method will be to compare with the spectral gap for the standard translation-invariant localized process.  
Consider the generator $L_l$ on $\Omega_{l,k}$ given by
\begin{equation}
\label{onestar}
\begin{split}
L_lf(\eta)
=&
\dfrac12
\sum_{x,x+1\in \L_{k,l}}
\left\{
\left [  f\left(\eta^{x,x+1} \right)  - f(\eta)   \right] \chi_{\{\eta(x) > 0\}} \right. \\
&\quad \quad \quad \quad \quad \quad +
\left.
\left [  f\left(\eta^{x+1,x}  \right)   - f(\eta) \right ] \chi_{\{\eta(x+1) > 0\}}
\right\}.
\end{split}
\end{equation}
Let $\nu_\rho$ be the product measure on $\Omega$ with common Geometric marginal on each site $k\in \N$ with mean $\rho$,
and let $\nu^\rho_{l}$ be its restriction to $\Omega_{k,l}$. 

Consider $\nu_{l,j}$, the associated canonical measure on $\Omega_{k,l,j}$, with respect to $j$ particles in $\L_{k,l}$, which does not depend on $\rho$.  It is well-known that $\nu^\rho_l$ and $\nu_{l,j}$ are both invariant measures with respect to the localized $L_l$ (cf. \cite{A}).
The corresponding Dirichlet form is given by
\begin{equation} \label {d form symm}
\begin{split}
E_{\nu_{l,j}}
\left[ f(-L_l f) \right]
=
\dfrac12
\sum_{x,x+1\in \L_{k,l}}
E_{\nu_{l,j}}
 \left[ 
 \chi_{\{\eta(x+1) > 0\}}
\left [  f\left(\eta^{x+1,x}  \right)   - f(\eta) \right ]^2
 \right].
\end{split}
\end{equation}

Finally, let $x_1=\argmax_{x\in \L_{k,l}} \Eb_x$
and $x_2= \argmin_{x\in \L_{k,l}}\Eb_x$.  Also, for convenience, let $\e = e^{-1/N}$.

\begin{Lem} \label {spectral gap ln k}
We have the following estimates:

\begin{enumerate}
\item Uniform bound:  For all $\eta \in \Omega_{k,l,j}$, we have
\begin{equation} \label {eqn: RN for klj ln k case}
r_{k,l,\e}^{-1}
\leq
\dfrac{\mu_{k,l,j} (\eta) }{\nu_{l,j}(\eta)}
\leq
r_{k,l,\e}
\end{equation}
where 
$
r_{k,l,\e} := 
\left(
 \dfrac
 {1- c e^{-\beta\Eb_{x_1}}  \e^{k+l}}
{1- c e^{-\beta\Eb_{x_2}}  \e^{k-l}}
\right) ^{2l+1} 
\left( e^{-\beta (\Eb_{x_2} -\Eb_{x_1} )}  \e^{-2l}\right)^{j}  .
$
\item Poincar\'e inequality:  We have
\begin{equation*}
{\rm Var}_{\mu_{k,l,j}} (f)
\leq
 C_{k,l,j} 
 E_{\mu_{k,l,j}}
 \left[ f(-L_{k,l}f) \right]
\end{equation*}
where 
$
 C_{k,l,j}
:=
\dfrac C2
(2l+1)^2 \Big(1+\dfrac{j}{2l+1}\Big)^2 r_{k,l,\e}^2
$ bounds the inverse of the spectral gap of $-L_{k,l}$ on $\Omega_{k,l,j}$
and $C$ is an universal constant.

\item For each $0<a<b<\infty$, $l$ and $j$, we have
$$\lim_{N\uparrow\infty} \sup_{aN\leq k\leq bN} r_{k,l,\e} = 1,$$
and hence $\sup_{N\geq 1}\sup_{aN\leq k\leq bN} C_{k,l,j}<\infty$.
\end{enumerate}
\end{Lem}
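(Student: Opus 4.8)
The plan is to prove the three items in sequence, with item (1) being the technical heart. For item (1), I would write the Radon--Nikodym derivative $\mu_{k,l,j}(\eta)/\nu_{l,j}(\eta)$ explicitly using the product form \eqref{doublestar}: on the hyperplane $\{\sum_{x\in\L_{k,l}}\eta(x)=j\}$ the factors $\theta_{x,c}^{\eta(x)} = (ce^{-\beta\Eb_x})^{\eta(x)}\e^{x\eta(x)}$ contribute $\prod_x (ce^{-\beta\Eb_x})^{\eta(x)}\cdot \e^{\sum_x x\eta(x)}$, while $\nu_{l,j}$ is the canonical measure of a product of Geometrics with a common parameter, so the ratio reduces, after the normalizing constants cancel appropriately, to a quotient of $\prod_{x}(1-\theta_{x,c})$ against $\prod_x(1-\theta)$ raised to the power $2l+1$, times a factor of the form $\prod_x (e^{-\beta\Eb_x}\e^x)^{\eta(x)}$ suitably renormalized. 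To bound this uniformly over $\eta\in\Omega_{k,l,j}$, I would bound each occupation-dependent factor $e^{-\beta\Eb_x}\e^x$ between its value at the worst site: since $x$ ranges over $\L_{k,l}=\{k-l,\dots,k+l\}$ we have $\e^{k+l}\le \e^x\le \e^{k-l}$ (recall $\e=e^{-1/N}<1$), and $e^{-\beta\Eb_{x_1}}\le e^{-\beta\Eb_x}\le e^{-\beta\Eb_{x_2}}$ by definition of $x_1,x_2$ as the argmax/argmin of $\Eb_\cdot$ on $\L_{k,l}$. Since $\sum_x\eta(x)=j$, the occupation-dependent part is squeezed between $(e^{-\beta\Eb_{x_1}}\e^{k+l})^j$ and $(e^{-\beta\Eb_{x_2}}\e^{k-l})^j$; dividing through gives a ratio controlled by $(e^{-\beta(\Eb_{x_2}-\Eb_{x_1})}\e^{-2l})^j$, and the $\prod(1-\theta_{x,c})$ factors against $\prod(1-\theta)$ give the $(2l+1)$-power prefactor $\bigl((1-ce^{-\beta\Eb_{x_1}}\e^{k+l})/(1-ce^{-\beta\Eb_{x_2}}\e^{k-l})\bigr)^{2l+1}$ after choosing the reference mean $\rho$ of $\nu_{l,j}$ conveniently (the freedom in $\rho$ is exactly what makes $\nu_{l,j}$ independent of $\rho$, so one can pick $\rho$ to absorb a common factor). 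Keeping careful track of which direction each inequality goes yields the two-sided bound $r_{k,l,\e}^{-1}\le \mu_{k,l,j}/\nu_{l,j}\le r_{k,l,\e}$.

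For item (2), I would invoke the classical spectral gap estimate for the symmetric localized zero-range process with Geometric marginals on $\Omega_{k,l,j}$: the inverse spectral gap of $-L_l$ is bounded by $C(2l+1)^2(1+j/(2l+1))^2$ for a universal constant $C$ (this is the standard zero-range spectral gap bound, e.g.\ via the martingale/degree-argument or the results cited in \cite{A}). Then I would transfer this to $-L_{k,l}$ by a two-step comparison: first, since the rates $\lambda_k$ in $L_{k,l}$ are bounded below and above on $\L_{k,l}$ (indeed one may extract a factor comparing $L_{k,l}$ with the symmetric generator $L_l$, picking up at worst the ratio $r_{k,l,\e}$ in the Dirichlet forms \eqref{d form asymm} vs.\ \eqref{d form symm}), and second, using item (1) to compare the variances and Dirichlet forms computed under $\mu_{k,l,j}$ with those under $\nu_{l,j}$ — each comparison costs a factor of $r_{k,l,\e}$ — we obtain $\mathrm{Var}_{\mu_{k,l,j}}(f)\le \tfrac{C}{2}(2l+1)^2(1+j/(2l+1))^2 r_{k,l,\e}^2\, E_{\mu_{k,l,j}}[f(-L_{k,l}f)]$, which is the claimed Poincar\'e inequality with $C_{k,l,j}$ as stated. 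I would be slightly careful here that the asymmetry factor and the measure-change factor each contribute one power of $r_{k,l,\e}$, giving the $r_{k,l,\e}^2$.

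For item (3), the point is simply that $r_{k,l,\e}\to 1$ as $N\to\infty$ uniformly for $aN\le k\le bN$. Here $\e=e^{-1/N}\to 1$, so $\e^{k-l},\e^{k+l}\to e^{-x}$ with $x\in[a,b]$ bounded away from $0$ and $\infty$; and by Condition \ref{condition}, $\Eb_{x_2}-\Eb_{x_1}=u'(\ln y)\ln((k+l)/(k-l))$ for some $y\in\L_{k,l}$ by the mean value theorem, and since $\ln((k+l)/(k-l))\to 0$ as $N\to\infty$ (as $k\sim xN$, $l$ fixed) while $u'$ is bounded, we get $\Eb_{x_2}-\Eb_{x_1}\to 0$ uniformly; also $\e^{-2l}=e^{2l/N}\to 1$. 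Hence $(e^{-\beta(\Eb_{x_2}-\Eb_{x_1})}\e^{-2l})^j\to 1$, and the prefactor $\bigl((1-ce^{-\beta\Eb_{x_1}}\e^{k+l})/(1-ce^{-\beta\Eb_{x_2}}\e^{k-l})\bigr)^{2l+1}\to 1$ as well, using $c<c_0$ so that the denominators stay bounded away from $0$ (indeed $ce^{-\beta\Eb_x}\e^x<1$ uniformly). All limits are uniform over $aN\le k\le bN$ since the $x$-range is the fixed compact $[a,b]$. Combined with the continuity of $C_{k,l,j}$ in $r_{k,l,\e}$ and the fact that $l,j$ are fixed, this gives $\sup_{N\ge1}\sup_{aN\le k\le bN}C_{k,l,j}<\infty$.

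The main obstacle I anticipate is item (1): getting the exact form of $r_{k,l,\e}$ requires careful bookkeeping of the normalizing constants when changing from the canonical measure $\mu_{k,l,j}$ (inhomogeneous Geometrics conditioned on total mass $j$) to $\nu_{l,j}$ (homogeneous Geometrics conditioned on total mass $j$), and one must verify that the $\rho$-independence of $\nu_{l,j}$ is used to its fullest so that the only surviving discrepancy is the product of single-site parameter ratios raised to powers controlled by $\eta(x)$ and by $2l+1$. Once the explicit ratio is in hand, items (2) and (3) are comparatively routine, modulo citing the standard homogeneous zero-range spectral gap and applying the mean value theorem via Condition \ref{condition}.
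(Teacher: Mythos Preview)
Your proposal is correct and follows essentially the same route as the paper: compute the Radon--Nikodym derivative via the product structure \eqref{doublestar} with a convenient choice of reference parameter (the paper takes $\rho/(1+\rho)=\e$), bound the site-wise factors using $x_1,x_2$, then transfer the known Poincar\'e inequality \cite{LSV} for $L_l$ under $\nu_{l,j}$ to $L_{k,l}$ under $\mu_{k,l,j}$, and finally use the mean value theorem on $u$ for item (3).

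One small clarification on item (2): the two powers of $r_{k,l,\e}$ do \emph{not} split into an ``asymmetry factor'' and a ``measure-change factor'' as you suggest. The point is that the Dirichlet form \eqref{d form asymm} already has the rates $\lambda_k$ absorbed (since $\mu_{k,l,j}$ is reversible for $L_{k,l}$), so \eqref{d form asymm} and \eqref{d form symm} differ only by the factor $\tfrac12$ and by the underlying measure. Hence both powers of $r_{k,l,\e}$ come from the measure comparison of item (1): one when bounding $E_{\nu_{l,j}}[f(-L_lf)]\le \tfrac{r_{k,l,\e}}{2}E_{\mu_{k,l,j}}[f(-L_{k,l}f)]$, and one when bounding $\mathrm{Var}_{\mu_{k,l,j}}(f)\le r_{k,l,\e}\,\mathrm{Var}_{\nu_{l,j}}(f)$ via $\mathrm{Var}=\inf_a E[(f-a)^2]$. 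No separate bound on $\lambda_k$ is needed.
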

\begin{proof}
First, the spectral gap for one dimensional localized symmetric zero range process with rate function $\chi_{\{\cdot >0\}}$ is well known (cf. \cite{LSV}): 
For all $j$, with respect to an universal constant $C$,
\begin{equation}
\label {eqn: spetral gap for nu}
{\rm Var}_{\nu_{l,j}} (f)
\leq
C (2l+1)^2 \Big(1+\dfrac{j}{2l+1}\Big)^2
E_{\nu_{l,j}}
 [ f(-L_{l}f) ].
\end{equation}
Therefore, the inverse of the spectral gap is bounded below by $\Big[C (2l+1)^2 \Big(1+\dfrac{j}{2l+1}\Big)^2\Big]^{-1}$.

To get an estimate with respect to $-L_{k,l}$, we will compare $\mu_{k,l,j}$ with $\nu_{l,j}$.  The canonical measure $\nu_{l,j}$ is the measure $\nu_\rho$ conditioned on $j$ particles in $\L_{k,l}$ for any $\rho$.  It will be convenient now to choose $\rho$ such that $\dfrac{\rho}{1+\rho} = \e$, that is, $\e$ is the common parameter of the Geometric marginals  of $\nu_\rho$.

For $\eta\in \Omega_{k,l,j}$, we have
\begin{equation*}
\dfrac{\mu_{k,l,j} (\eta) }{\nu_{l,j}(\eta)}
=
\dfrac{\mu_{k,l} (\eta) }{\nu^\rho_{l}(\eta)}
\dfrac{\nu^\rho_{l}(\Omega_{k,l,j})}{\mu_{k,l} (\Omega_{k,l,j})}.
\end{equation*}
Since $\mu_{k,l}$ (cf. \eqref{doublestar}) and $\nu^\rho_{l}$ are product measures,
\begin{equation}
\label{mu nu prod}
\dfrac{\mu_{k,l} (\eta) }{\nu^\rho_{l}(\eta)}
=
\dfrac  {\prod_{x\in \L_{k,l}} (1-\theta_{x,c}) \theta_{x,c}^{\eta(x)}}  
 {\prod_{x\in \L_{k,l}} (1-\e) \e ^{\eta(x)}}.
\end{equation}
Now, for $\eta\in \Omega_{k,l,j}$, recalling the definitions of $x_1$ and $x_2$ given above, we have
\begin{equation*}
\begin{split}
&\left( 1-c e^{-\beta\Eb_{x_2}} \e^{k-l} \right)^{2l+1} \left( c e^{-\beta\Eb_{x_1}} \e^{k+l}  \right)^j\\
&\quad \quad\quad \quad\quad \quad \leq 
\mu_{k,l} (\eta)
\leq
\left( 1- c e^{-\beta\Eb_{x_1}} \e^{k+l} \right)^{2l+1} 
\left( c e^{-\beta\Eb_{x_2}} \e^{k-l}  \right)^j.
\end{split}
\end{equation*}

Inputting into \eqref{mu nu prod}, we obtain 
\begin{equation*}
\dfrac  {( 1 - c e^{-\beta\Eb_{x_2}} \e^{k-l} )^{2l+1} c^j e^{-\beta\Eb_{x_1} j } \e^{(k+l)j}  }   { (1-\e)^{2l+1}\e^j}
\leq
\dfrac{\mu_{k,l} (\eta) }{\nu^\rho_{l}(\eta)}
\leq
\dfrac  {(1- c e^{-\beta\Eb_{x_1}} \e^{k+l})^{2l+1} c^j e^{-\beta\Eb_{x_2} j } \e^{(k-l)j} }   { (1-\e)^{2l+1}\e^j}. 
\end{equation*}
Noting $\mu_{k,l}(\Omega_{k,l,j}) = \sum_{\eta\in \Omega_{k,l,j}}\big[\mu_{k,l}(\eta)/\nu^\rho_{l}(\eta)\big]\nu_{l,\e}(\eta)$, we have
\begin{align*}
&\dfrac  {( 1 - c e^{-\beta\Eb_{x_2}} \e^{k-l} )^{2l+1} c^j e^{-\beta\Eb_{x_1} j } \e^{(k+l)j}  }   { (1-\e)^{2l+1}\e^j}\\
&\ \ \ \ \ \ \ \ \ \ \ \ \leq
\dfrac{\mu_{k,l} (\Omega_{k,l,j}) }{\nu^\rho_{l}(\Omega_{k,l,j})}
\leq
\dfrac  {(1- c e^{-\beta\Eb_{x_1}} \e^{k+l})^{2l+1} c^j e^{-\beta\Eb_{x_2} j } \e^{(k-l)j} }   { (1-\e)^{2l+1}\e^j}. 
\end{align*}
Then, rearranging the formulas establishes \eqref{eqn: RN for klj ln k case}:
$
r_{k,l,\e}^{-1}
\leq
\big[\mu_{k,l,j} (\eta)/\nu_{l,j}(\eta)\big]
\leq
r_{k,l,\e}$.

Turning now to the Poincar\'e inequality, from \eqref{d form symm} and \eqref{d form asymm}, using \eqref{eqn: RN for klj ln k case}, we have
\begin{equation} \label {eqn: d from comparison}
E_{\nu_{l,j}}
\left[ f(-L_{l}f) \right]
\leq
\dfrac {r_{k,l,\e}} 2
E_{\mu_{k,l,j}}
\left[ f(-L_{k,l}f) \right].
\end{equation}
Now, since
\begin{equation*}
\begin{split}
{\rm Var}_{\mu_{k,l,j}}(f)
=
\inf_a E_{\mu_{k,l,j}}\big[ (f-a)^2 \big]
\leq&\ 
r_{k,l,\e}
\inf_a E_{\nu_{l,j}}\big[ (f-a)^2 \big]\\
=&\ 
r_{k,l,\e}
{\rm Var}_{\nu_{l,j}}(f),
\end{split}
\end{equation*}
the desired Poincar\'e inequality follows from \eqref{eqn: spetral gap for nu} and \eqref{eqn: d from comparison}.

For the last item, we observe that $\e \rightarrow 1$ as $N\uparrow\infty$.  Also, $\Eb_{x_i} = u(\ln(x_i))\rightarrow \infty$ as $N\uparrow\infty$ given that $aN-l\leq x_i\leq bN+l$ for $i=1,2$.  Finally, $\Eb_{x_2}-\Eb_{x_1} = u(\ln(x_2))-u(\ln(x_1)) = u'(y)\ln(x_2/x_1)$ where $y$ is between $\ln(x_2)$ and $\ln(x_1)$ and so $u'(y)\rightarrow 0$ or $1$, by assumption, as $N\uparrow\infty$.  Hence, as $k-l\leq x_1,x_2\leq k+l$ and $aN\leq k\leq bN$, we have that $\ln(x_2/x_1)\rightarrow 0$ as $N\uparrow\infty$.  All these comments immediately lead to the claim that $r_{k,l,\e}\rightarrow 1$, uniformly over $aN\leq k\leq bN$, as $N\uparrow\infty$.  Moreover, as a consequence, we see that $C_{k,l,j}$ is uniformly bounded for $aN\leq k\leq bN$ and $N\geq 1$, by the form of $C_{k,l,j}$.  \end{proof}

%%%%%%%%%%%%%%%%%%%%%%%%%%
\subsection{$1$-block estimate}
Recall $D^{G,s}_{N,k}$ from \eqref{DG}.  Define
\begin{equation*} \label {eqn: V_kl}
V_{k,l}(s,\eta)
:=
D_{N,k}^{G,s} 
\left(h(\eta(k))
- 
E_{\nu_{\eta^l(k)}}[h]
\right)
\end{equation*}
where $h(x):=\chi_{\{x >0\}}$ and 
$E_{\nu_\rho}[ h]:=E_{\nu_\rho}[ h(\eta(k))]=\dfrac{\rho}{1+\rho}$.

The $1$-block estimate is the following limit.
\begin{Lem}[$1$-block estimate] \label {one block lem}
We have
\begin{equation*}
\limsup_{l\to \infty} \limsup_{N\to \infty}
\sup_{aN\leq k \leq bN }
\E_N
\Big[
\Big |
\int_0^T
N_{\beta} V_{k,l}(s,\eta_s) ds
\Big |
\Big]
=
0.
\end{equation*}
\end{Lem}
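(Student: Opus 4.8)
The plan is to run the classical entropy/one-block scheme, adapted to the spatially inhomogeneous and, when $\beta>0$, dilute setting, with the spectral gap bound of Lemma~\ref{spectral gap ln k} as the key input. First I would reduce the estimate to a stationary variational problem. Set $\mu:=\Pam_{c,N}$, which is invariant for $N^2L$, and recall $H(\nu^N\,|\,\mu)=O(NN_\beta^{-1})$ from Condition~\ref{condition 1}(3). Since $e^{|x|}\le e^x+e^{-x}$ it suffices to bound $\E_N\big[\int_0^T N_\beta V_{k,l}(s,\eta_s)\,ds\big]$ and the analogous quantity with $-V_{k,l}$. Applying the entropy inequality with parameter $\delta^{-1}NN_\beta^{-1}$ and then the Feynman--Kac bound for the process generated by $N^2L$ gives
\[
\E_N\Big[\int_0^T N_\beta V_{k,l}(s,\eta_s)\,ds\Big]
\le
C\delta+N_\beta T\sup_{0\le s\le T}\sup_{f}\Big\{E_\mu\big[V_{k,l}(s,\cdot)f\big]-\delta N\,E_\mu\big[\sqrt f\,(-L\sqrt f)\big]\Big\},
\]
the inner supremum taken over densities $f$ with respect to $\mu$.

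Since $V_{k,l}$ is supported on the block $\L_{k,l}$, I would project $f$ onto the block, bound the Dirichlet form from below by the localized form of Lemma~\ref{spectral gap ln k} (up to a factor $c_N\to1$ uniform for $aN\le k\le bN$, by the comparison given there), and decompose $\mu_{k,l}=\sum_{j\ge0}\mu_{k,l}(\Omega_{k,l,j})\mu_{k,l,j}$ by the number $j=\sum_{x\in\L_{k,l}}\eta(x)$ of particles in the block. By convexity the inner supremum is at most $\sum_{j\ge0}\mu_{k,l}(\Omega_{k,l,j})R_{k,l,j}$, where $R_{k,l,j}:=\sup_{\bar f_j}\big\{E_{\mu_{k,l,j}}[V_{k,l}\bar f_j]-c_N\delta N\,E_{\mu_{k,l,j}}[\sqrt{\bar f_j}\,(-L_{k,l}\sqrt{\bar f_j})]\big\}$, the supremum over densities $\bar f_j$ on $\Omega_{k,l,j}$. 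On each $\Omega_{k,l,j}$ the block average $\eta^l(k)=j/(2l+1)$ is constant, so I would split $V_{k,l}=V_j^{(1)}+V_j^{(2)}$ where $V_j^{(2)}:=D^{G,s}_{N,k}\big(E_{\mu_{k,l,j}}[h(\eta(k))]-\tfrac{j/(2l+1)}{1+j/(2l+1)}\big)$ is constant on $\Omega_{k,l,j}$ and $V_j^{(1)}$ has zero $\mu_{k,l,j}$-mean, with $\|V_j^{(1)}\|_\infty\le2\|D^{G,s}_{N,k}\|_\infty$ uniformly in $s$ and $aN\le k\le bN$ by \eqref{DG_bound}.

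For the fluctuation term $V_j^{(1)}$, the Rayleigh--Ritz bound and the Poincar\'e inequality of Lemma~\ref{spectral gap ln k}(2) give $\sup_{\bar f_j}\big\{E_{\mu_{k,l,j}}[V_j^{(1)}\bar f_j]-c_N\delta N\,E_{\mu_{k,l,j}}[\sqrt{\bar f_j}(-L_{k,l}\sqrt{\bar f_j})]\big\}=O\big(C_{k,l,j}/(\delta N)\big)$; summing against $N_\beta\mu_{k,l}(\Omega_{k,l,j})$ and using $\sum_j\mu_{k,l}(\Omega_{k,l,j})C_{k,l,j}=O(l^2)$ uniformly for $aN\le k\le bN$ (from $r_{k,l,\e}\to1$, Lemma~\ref{spectral gap ln k}(3), and the boundedness of $E_{\mu_{k,l}}[\sum_{x\in\L_{k,l}}\eta(x)]$), this contribution is $O\big(N_\beta l^2/(\delta N)\big)$, which vanishes as $N\to\infty$ for fixed $l,\delta$ since $N_\beta=o(N)$. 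For the equivalence-of-ensembles term $V_j^{(2)}$, discarding the nonnegative Dirichlet form and taking $\bar f_j\equiv1$ bounds $R_{k,l,j}$ by $|V_j^{(2)}|\le\|D^{G,s}_{N,k}\|_\infty\,\big|E_{\mu_{k,l,j}}[h(\eta(k))]-\tfrac{j/(2l+1)}{1+j/(2l+1)}\big|$, which vanishes for $j=0$; for $j\ge1$ one uses that, by Condition~\ref{condition} and the mean value theorem, the parameters $\{\theta_{x,c}\}_{x\in\L_{k,l}}$ are asymptotically equal uniformly for $aN\le k\le bN$, so that $\mu_{k,l,j}\to\nu_{l,j}$ and $\limsup_N|V_j^{(2)}|\le\|D\|_\infty\,e_{l,j}$ with $e_{l,j}:=\big|\nu_{l,j}(\eta(k)>0)-\tfrac{j/(2l+1)}{1+j/(2l+1)}\big|$, where $e_{l,1}=\tfrac1{(2l+1)(2l+2)}=O(l^{-2})$ and $e_{l,j}\le C/l$ for all $j$.

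Finally I would sum the $V^{(2)}$-contribution over $j$: by the site particle bound \eqref{site_particle}, $N_\beta\mu_{k,l}(\Omega_{k,l,1})\le Cl$ uniformly for $aN\le k\le bN$ and large $N$, while $N_\beta\sum_{j\ge2}\mu_{k,l}(\Omega_{k,l,j})\le N_\beta\,\mu_{k,l}\big(\sum_{x\in\L_{k,l}}\eta(x)\ge2\big)=O\big(l^2N_\beta^{-1}\big)\to0$, so that $\limsup_N$ of this contribution is at most $Cl\cdot e_{l,1}=O(1/l)$. Sending $N\to\infty$, then $l\to\infty$, then $\delta\to0$ makes all error terms vanish, and the same bound holds with $-V_{k,l}$, which proves the lemma. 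I expect the main obstacle to be exactly this last bookkeeping when $\beta>0$: since the block typically contains no particles (density $O(N_\beta^{-1})$), after multiplying by $N_\beta$ only the single-particle sector $j=1$ survives the limit $N\to\infty$, and it carries a factor $O(l)$ that is absorbed only because the one-particle equivalence-of-ensembles error is $O(l^{-2})$ rather than the generic $O(l^{-1})$; this has to be combined with the uniform-in-$k$ asymptotic homogeneity of $\{\theta_{x,c}\}$ on the block supplied by Condition~\ref{condition}. When $\beta=0$ one has $N_\beta=1$, the density is $O(1)$, all sectors $j$ contribute, and the argument reduces to the classical one-block estimate.
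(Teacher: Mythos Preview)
Your proposal is correct and follows the same overall scheme as the paper: entropy inequality, Feynman--Kac, localization to the block, canonical decomposition by the particle number $j$, and the Rayleigh/spectral-gap bound of Lemma~\ref{spectral gap ln k}. The organizational difference is that the paper first introduces a large-density cutoff $\chi_{\{\eta^l(k)\le A\}}$ via attractiveness (Step~1) and then replaces $E_{\nu_{\eta^l(k)}}[h]$ by $E_{\mu_{k,l,\Lambda_{k,l}(\eta)}}[h]$ \emph{before} Feynman--Kac (Steps~2--3), splitting the recentering error into $A_1$ (inhomogeneous vs.\ homogeneous canonical measure, controlled by $r_{k,l,\e}-1\to0$) and $A_2$ (equivalence of ensembles); after the cutoff only finitely many $j\le A(2l+1)$ remain, so $C_{k,l,j}$ is uniformly bounded and the Rayleigh expansion applies directly (Steps~4--7). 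You instead skip the cutoff, carry all $j$ through the variational problem weighted by $\mu_{k,l}(\Omega_{k,l,j})$, and absorb the recentering error into the constant piece $V_j^{(2)}$. Both routes work; yours is more direct but needs the extra remark that for large $j$ the trivial bound $R_{k,l,j}^{(1)}\le\|V_j^{(1)}\|_\infty$ is already $O(C_{k,l,j}/(\delta N))$, so the Rayleigh estimate effectively holds for every $j$. Your explicit bookkeeping for $\beta>0$---that after multiplication by $N_\beta$ only the sector $j=1$ survives as $N\to\infty$, and that $e_{l,1}=O(l^{-2})$ beats the factor $N_\beta\mu_{k,l}(\Omega_{k,l,1})=O(l)$---is in fact more detailed than the paper's rather terse Step~3.
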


\begin{proof}
We separate the argument into steps.

\vskip .1cm
{\it Step 1.}
We first introduce a cutoff of large densities:
For any $l$ and $\epsilon>0$, we may find an $A$ such that for all $t\geq 0$, large $N$,  and $aN\leq k \leq bN$, we have
$\E_N (\chi_{\{\eta_s^l (k)>A\}}) <\epsilon N_{\beta}^{-1}$.  Indeed, as $\nu^N \leq \Pam_{c,N}$, by attractiveness \eqref{attractiveness}, $\E_N(\chi_{\{\eta_t^l(k)>A\}}) \leq E_{\Pam_{c,N}}(\chi_{\{\eta^l (k)>A\}})$.
By Markov's inequality,
\begin{equation*}
E_{\Pam_{c,N}}(\chi_{\{\eta^l (k)>A\}})
\leq \dfrac{1}{A(2l+1)} \sum_{j=k-l}^{k+l} E_{\Pam_{c,N}}( \eta(k) ).
\end{equation*}
Since $N_\beta E_{\Pam_{c,N}}( \eta(k) )$
 is uniformly bounded for all $aN \leq k \leq bN$ and $N\in \N$ by \eqref{site_particle}, it suffices to take $A$ large enough.

Note that $|D^{G,s}_{N,k}|\leq C(a,b,G)$ (cf. \eqref{DG_bound}).  Then,
\begin{equation*}
\begin{split}
&\E_N
\Big[
\Big |
\int_0^T
V_{k,l}(s,\eta_s) ds
\Big |
\Big]\\
\leq&
\E_N
\Big[
\Big |
\int_0^T
V_{k,l}(s,\eta_s) \chi_{\{\eta_s^l(k)\leq A\}} ds
\Big |
\Big]
+
\E_N
\Big[
\Big |
\int_0^T
V_{k,l}(s,\eta_s) \chi_{\{\eta_s^l(k)>A\}} ds
\Big |
\Big]\\
\leq&
\E_N
\Big[
\Big |
\int_0^T
V_{k,l}(s,\eta_s) \chi_{\{\eta_s^l(k)\leq A\}} ds
\Big |
\Big]
+
C(a,b,G)
\E_N
\Big[
\int_0^T
\chi_{\{\eta_s^l(k)>A\}} ds
\Big]\\
=&
\E_N
\Big[
\Big |
\int_0^T
V_{k,l}(s,\eta_s) \chi_{\{\eta_s^l(k)\leq A\}} ds
\Big |
\Big]
+
C(a,b,G)
\int_0^T
\E_N
\big[
\chi_{\{\eta_s^l(k)>A\}} 
\big]
ds\\
\leq&
\E_N
\Big[
\Big |
\int_0^T
V_{k,l}(s,\eta_s) \chi_{\{\eta_s^l(k)\leq A\}} ds
\Big | 
\Big]
+
C(a,b,G) T \epsilon N_{\beta}^{-1}.
\end{split}
\end{equation*}
Hence,
\begin{equation*}
\begin{split}
&
\limsup_{N\to \infty}
\sup_{aN\leq k \leq bN }
\E_N
\Big[
\Big |
\int_0^T
N_{\beta} V_{k,l}(s,\eta_s) ds
\Big |
\Big]\\
\leq&
\limsup_{N\to \infty}
\sup_{aN\leq k \leq bN }
\E_N
\Big[
\Big |
\int_0^T
N_{\beta} V_{k,l}(s,\eta_s) \chi_{\{\eta_s^l(k)\leq A\}} ds
\Big |
\Big]
+
C(a,b,G) T \epsilon.
\end{split}
\end{equation*}

For convenience, we write 
\begin{equation*}
\tilde V_{k,l,A} (s,\eta)
:=
V_{k,l}(s,\eta) \chi_{\{\eta^l_s (k)\leq A\}}.
\end{equation*}
Then, to prove the lemma, it will be enough to show
\begin{equation*}
\limsup_{l\to \infty} \limsup_{N\to \infty}
\sup_{aN\leq k \leq bN }
\E_N
\Big[
\Big |
\int_0^T
N_{\beta} \tilde V_{k,l,A} (s,\eta_s) ds
\Big |
\Big]
=
0.
\end{equation*}

\vskip .1cm
{\it Step 2.}
Define $\L_{k,l} (\eta)$ be the number of particles in $\L_{k,l}$, that is $\L_{k,l} (\eta): = (2l+1) \eta^l(k)$. 
We would like to replace $\tilde V_{k,l,A} (s,\eta)$ by its `centering':
\begin{equation*}
V_{k,l,A}(s,\eta)
:=
D_{N,k}^{G,s} 
\big(h(\eta(k))
- 
E_{\mu_{k,l, \L_{k,l}(\eta)}}[h(\eta(k))]
\big)\chi_{\{\eta^l (k)\leq A\}}.
\end{equation*}
The advantage of working with $ V_{k,l,A}$ is that $E_{\mu_{k,l,j}}  V_{k,l,A}= 0$ for all $k,l,j$.
The difference in making such a replacement is less than
\begin{equation}
\label{step2eqn}
\E_N
\Big[
\int_0^T
N_{\beta} 
\chi_{\{0<\eta^l_s (k)\leq A\}}
\Big |
E_{\mu_{k,l,\L_{k,l}(\eta_s)}}[h(\eta(k))]
-
E_{\nu_{\eta^l_s(k)}}[h]
\Big |
ds
\Big].
\end{equation}
In the above, we replaced $\chi_{\{\eta^l (k)\leq A\}}$ by $\chi_{\{0<\eta^l (k)\leq A\}}$, since $h$ vanishes when $\eta^l (k) =0$.

By adding and subtracting, \eqref{step2eqn} is bounded by
\begin{eqnarray*}
&&
\E_N
\Big[
\int_0^T
N_{\beta} 
\chi_{\{0<\eta^l_s (k)\leq A\}}
\Big |
E_{\mu_{k,l, \L_{k,l}(\eta_s)}}[ h(\eta(k))]
-
E_{\nu_{k,l,\L_{k,l}(\eta_s)}}[ h(\eta(k))]
\Big |
ds
\Big]\\
&&+
\E_N
\Big[
\int_0^T
N_{\beta} 
\chi_{\{0<\eta_s^l (k)\leq A\}}
\Big|
E_{\nu_{k,l,\L_{k,l}(\eta_s)}} [h(\eta(k))]
-
E_{\nu_{\eta^l_s (k)}}[ h]
\Big |
ds
\Big]
=:A_1+A_2.
\end{eqnarray*}

\vskip .1cm
{\it Step 3.}
Now, by \eqref{eqn: RN for klj ln k case} and $0\leq h \leq 1 $, we have
\begin{equation*}
\begin{split}
&\left |
E_{\mu_{k,l,\L_{k,l}(\eta)}}[ h(\eta(k))]
-
E_{\nu_{k,l,\L_{k,l}(\eta)}} [h(\eta(k))]
\right |\\
&\quad\quad\quad\quad\quad\quad\quad\quad\leq
E_{\nu_{k,l,\L_{k,l}(\eta)}}[ h(\eta(k))] (r_{k,l,\e} - 1)
\leq
r_{k,l,\e} - 1.
\end{split}
\end{equation*}
Then, by $\nu^N\leq \Pam_{c,N}$ and attractiveness (cf. Subsection \ref{subsec: attractiveness}), and $\chi_{\{\eta^l(k)>0\}}\leq \eta^l(k)$, the term $A_1$ is bounded by
\begin{equation*}
\begin{split}
 &(r_{k,l,\e} - 1)
\E_N
\Big[
\int_0^T
N_{\beta} 
\chi_{\{0<\eta^l_s (k)\leq A\}}
ds
\Big]
\leq
 (r_{k,l,\e} - 1)
\E_{\Pam_{c,N}}
\Big[
\int_0^T
N_{\beta} 
\chi_{\{\eta^l_s (k)>0\}}
ds
\Big]\\
&\quad\quad\quad\quad\quad\quad\quad\quad\quad\quad\leq
 (r_{k,l,\e} - 1)TN_\beta 
 E_{\Pam_{c,N}} \big[\eta^l (k)\big].
\end{split}
\end{equation*}
By \eqref{site_particle},
$N_\beta 
 E_{\Pam_{c,N}} \big[\eta^l (k)\big] \leq N_\beta \sup_{k-l\leq j\leq k+l}\rho_{j,c}$ is uniformly bounded for each $l\geq 1$, and $aN\leq k\leq bN$ for all $N$ large.
Hence, for each $l$, $\sup_{aN\leq k\leq bN} A_1$ vanishes as $N\uparrow\infty$, as $r_{k,l,\e}\rightarrow 1$ by item (3) in Lemma \ref{spectral gap ln k}.

On the other hand, by equivalence of ensembles (cf.\,p.355, \cite{KL}),  
the absolute value in $A_2$ vanishes as $l\to \infty$, uniformly in $k$ as $\nu_{k,l,j}$ and $\nu_{j/{2l+1}}$ are translation-invariant and do not depend on $k$. 
Therefore, the term $A_2$ vanishes as well as we take $N\to\infty$, $l\to \infty$ in order.

\vskip .1cm
{\it Step 4.}
Now, the proof of the lemma is reduced to prove
\begin{equation*}
\limsup_{l\to \infty} \limsup_{N\to \infty}
N_{\beta}
\sup_{aN\leq k \leq bN }
\E_N
\Big[
\Big |
\int_0^T
V_{k,l,A} (s,\eta_s) ds
\Big |
\Big]
=
0.
\end{equation*}

By the entropy inequality $E_\mu[f] \leq H(\mu|\nu) + \log E_\nu[e^f]$ (cf.\,p.338 \cite{KL}) and the assumption $H(\nu^N| \Pam_{c,N})\leq CNN^{-1}_\beta$, we have
\begin{equation*}
\begin{split}
\E_N
\Big[
\Big |
\int_0^T
V_{k,l,A} (s,\eta_s) ds
\Big |
\Big]
\leq 
\dfrac{C_0}{\gamma  N_\beta}
+
\dfrac{1}{\gamma N} \ln \E_{\Pam_{c,N}}
\Big[
\exp
 \left\{
  \gamma N  
  \Big |
\int_0^T
V_{k,l,A} (s,\eta_s) ds
\Big |
  \right\}
\Big].
\end{split}
\end{equation*}
The absolute value in the right hand side of last inequality can be dropped by using $e^{|x|} \leq e^x + e^{-x}$.
By Feynman-Kac formula (cf.\,p.336, \cite{KL}),
\begin{equation*}
\begin{split}
\dfrac{1}{\gamma N} \ln \E_{\Pam_{c,N}}
\Big[
\exp
 \left\{
  \gamma N  
\int_0^T
V_{k,l,A} (s,\eta_s)
 ds
  \right\}
\Big]
\leq
\dfrac 1 {\gamma N} 
\int_0^T \lambda_{N,l}(s) ds
\end{split}
\end{equation*}
where $\lambda_{N,l}(s)$ is the largest eigenvalue of 
$N^2 L + \gamma N V_{k,l,A}(s,\eta)$.

\vskip .1cm
{\it Step 5.}
Fix $s\in [0,T]$; we will omit the argument $s$ to simply notation.
Note the variational formula for $\lambda_{N,l}$: 
\begin{equation*}
(\gamma N)^{-1} \lambda_{N,l}
=
\sup_f
\left\{
E_{\Pam_{c,N}}\left[  V_{k,l,A} f  \right]
-\gamma^{-1} N 
E_{\Pam_{c,N}}\left [
\sqrt f (-L \sqrt f)
\right]
\right\},
\end{equation*}
where the supremum is over all $f$ which are densities with respect to $\Pam_{c,N}$.

Let $f_{k,l}= E_{\Pam_{c,N}}\big[f|\Omega_{k,l}\big]$, be the conditional expectation of $f$ given the variables on $\L_{k,l}$. Recall that $\mu_{k,l}$ is the restriction of $\Pam_{c,N}$ to $\L_{k,l}$, and that $L_{k,l}$ is the localized generator.  Since the Dirichlet form $E_{\Pam_{c,N}}\left [
\sqrt f (-L_N \sqrt f)
\right]$ is convex, we have
\begin{equation*}
(\gamma N)^{-1} \lambda_{N,l}
\leq
\sup_{f_{k,l}}
\left\{
E_{\mu_{k,l}}\left [ V_{k,l,A} f_{k,l}  \right ]
-\gamma^{-1} N 
E_{\mu_{k,l}}\left [
\sqrt {f_{k,l}} (-L_{k,l} \sqrt {f_{k,l}})
\right]
\right\}.
\end{equation*}\

\vskip .1cm
{\it Step 6.}
We now decompose $f_{k,l}  d\mu_{k,l}$ with respect to sets $\Omega_{k,l,j}$ of configurations with total particle number $j$ on $\L_{k,l}$:
\begin{equation} \label {beta 0, c klj}
E_{\mu_{k,l}}\left[  V_{k,l,A} f_{k,l}  \right]
=
\sum_{j\geq 0}
c_{k,l,j}(f) \int V_{k,l,A} f_{k,l,j} d\mu_{k,l,j},
\end{equation}
where
$ c_{k,l,j}(f) = \int_{\Omega_{k,l,j}} f_{k,l} d\mu_{k,l}$, and 
$ f_{k,l,j} = c_{k,l,j}(f)^{-1} \mu_{k,l} \left( \Omega_{k,l,j} \right) f_{k,l}$.  Here, $\sum_{j\geq 0} c_{k,l,j} = 1 $ and $f_{k,l,j}$ is a density with respect to $\mu_{k,l,j}$.

Straightforwardly, on $\Omega_{k,l,j}$, we have
\begin{equation*}
\begin{split}
\dfrac{L_{k,l} \sqrt {f_{k,l}}}{\sqrt {f_{k,l}}}
=
\dfrac{L_{k,l} \sqrt {f_{k,l,j}}}{\sqrt {f_{k,l,j}}}.
\end{split}
\end{equation*}
Using \eqref{beta 0, c klj}, we write
\begin{equation*}
E_{\mu_{k,l}}\left[
\sqrt {f_{k,l}} (-L_{k,l} \sqrt {f_{k,l}})
\right]
=
\sum_{j\geq 0}
c_{k,l,j}(f) 
E_{\mu_{k,l,j}}\left[
\sqrt {f_{k,l,j}} (-L_{k,l} \sqrt {f_{k,l,j}})
\right ].
\end{equation*}
Then, we get
\begin{equation*}
(\gamma N)^{-1} \lambda_{N,l}
\leq
\sup_{0\leq j\leq A(2l+1)}
\sup_f
\left\{
E_{\mu_{k,l,j}}\left[ V_{k,l,A} f  \right]
-\gamma^{-1} N 
E_{\mu_{k,l,j}}\left [
\sqrt f (-L_{k,l} \sqrt f)
\right]
\right\},
\end{equation*}
where the second supremum is on densities $f$ with respect to $\mu_{k,l,j}$.

\vskip .1cm
{\it Step 7.}
We now use the Rayleigh expansion (see p.375, \cite{KL}), 
where $C_{k,l,j}$ is the uniformly bounded inverse spectral gap estimate of $L_{k,l}$ (cf. Lemma \ref {spectral gap ln k}) and $ \|V_{k,l,A}\|_{\infty} \leq  |D^{G,s}_{N,k}| \leq C(a,b,G)$.  We have
\begin{equation} \label {eqn: rayleigh expansion}
\begin{split}
&E_{\mu_{k,l,j}}\left[   V_{k,l,A} f  \right]
-\gamma^{-1} N 
E_{\mu_{k,l,j}}\left [
\sqrt f (-L_{k,l} \sqrt f)
\right ]\\
\leq&
\dfrac{\gamma N^{-1}}{1-2C(a,b,G) C_{k,l,j}\,\gamma N^{-1}}
E_{\mu_{k,l,j}}\left[
  V_{k,l,A}  (-L_{k,l} )^{-1}
  V_{k,l,A}
\right ].
\end{split}
\end{equation}

The spectral gap estimate of $L_{k,l}$ in Lemma \ref{spectral gap ln k} also implies that $\|L_{k,l}^{-1}\|_2$, the $L^2(\mu_{k,l,j})$ norm of the operator $L_{k,l}^{-1}$ on 
mean zero functions, is less than or equal to $C_{k,l,j}$.

Now, by Cauchy-Schwartz and the estimate of $\|L_{k,l}^{-1}\|_2$, we have 
\begin{equation*} 
\begin{split}
E_{\mu_{k,l,j}}\left[
  V_{k,l,A}  (-L_{k,l} )^{-1}
  V_{k,l,A}
\right ]
\leq
C_{k,l,j}
E_{\mu_{k,l,j}}\left[
  V_{k,l,A} ^2
\right ].
\end{split}
\end{equation*}

Accordingly, retracing our steps, noting \eqref{eqn: rayleigh expansion}, we have
\begin{equation*}
\begin{split}
\E_N
\Big[
\Big |
\int_0^T
N_{\beta} V_{k,l,A} (\eta_s) ds
\Big |
\Big]
\leq 
\dfrac{C_0}{\gamma}
+
\sup_{0\leq j\leq A(2l+1)}
\dfrac{\gamma N_\beta N^{-1} C_{k,l,j}}{1-2C(a,b,G) C_{k,l,j}\,\gamma N^{-1}}
E_{\mu_{k,l,j}}\left [
  V_{k,l,A} ^2
\right ].
\end{split}
\end{equation*}
The last expression vanishes uniformly as $N\to \infty$ 
for $aN\leq k \leq bN$ and $j\leq A(2l+1)$.
The lemma now is proved by letting $\gamma \to \infty$ and $\epsilon \to 0$.
\end{proof}

%%%%%%%%%%%%%%%%%%%%%%%%%%%%%%%%%%%%%%%%%%

\subsection{ $2$-block estimate}
In this subsection, we will restrict to the case $\beta=0$, since a $2$-block estimate is not needed for the other cases, and as remarked earlier may not hold when $\beta> 0$.

Recall the notation $\L_{k,l}$ from the $1$-block estimate.  For $l\geq 1$ and $l< k<k'$, let  $\L_{k,k',l} = \L_{k,l} \cup \L_{k',l}$ for $|k-k'|>2l$.
We introduce the following localized generator $L_{k,k',l}$ governing the coordinates $\Omega_{k,k',l} = \left\{0,1,2,\ldots \right\}^{\L_{k,k',l}}$.
 Inside each block, the process moves as before, but we add an extra bond interaction between sites $k+l$ and $k'-l$:
\begin{equation*}
\begin{split}
&L_{k,k',l}f(\eta)\\
=&
\sum_{x,x+1\in \L_{k,k',l}}
\left\{
\lambda_x
\left [  f\left(\eta^{x,x+1} \right)  - f(\eta)   \right] \chi_{\{\eta(x) > 0\}}
+
\left [  f\left(\eta^{x+1,x}  \right)   - f(\eta) \right ] \chi_{\{\eta(x+1) > 0\}}
\right\}\\
&+
 \dfrac{\theta_{k'-l}}{\theta_{k+l}}
\left [  f\Big(\eta^{k+l,k'-l} \Big)  - f(\eta)   \right] \chi_{\{\eta(k+l) > 0\}}
+
\left [  f\Big(\eta^{k'-l,k+l}  \Big)   - f(\eta) \right ] \chi_{\{\eta(k'-l) > 0\}}.
\end{split}
\end{equation*}
Here, as $\beta=0$, we have $\theta_{k} = e^{-k/N}$ and $\lambda_k = e^{-1/N}$. As before, the localized measure $\mu_{k,k',l}$ defined by $\mu=\Pam_{c,N}$ limited to sites in $\L_{k,k',l}$, 
as well as the canonical measure $\mu_{k,k',l,j}$ on $\Omega_{k,k',l,j}: = \{\eta\in \Omega_{k,k',l}: \sum_{x\in \L_{k,k',l}} \eta(x) = j\}$, that is $\mu_{k,k',l}$ is conditioned so that there are exactly $j$ particles counted in $\Omega_{k,k',l}$, are both invariant for the dynamics.

The corresponding Dirichlet form, with measure $\kappa$ given by $\mu_{k,k',l}$ or $\mu_{k,k',l,j}$, is given by
\begin{equation*}
\begin{split}
E_{\kappa} \left[ f(-L_{k,k',l} f) \right ]
=&
\sum_{x,x+1\in \L_{k,k',l}}
 E_\kappa \left[ 
 \chi_{\{\eta(x+1) > 0\}}
\left [  f\left(\eta^{x+1,x}  \right)   - f(\eta) \right ]^2
 \right ]\\
 &+
  E_\kappa \Big [
 \chi_{\{\eta(k'-l) > 0\}}
\left [  f\left(\eta^{k'-l,k+l}  \right)   - f(\eta) \right ]^2
 \Big ].
\end{split}
\end{equation*}

Recall that $\e = e^{-1/N}$.
Corresponding to the set-up of the gap bound Lemma \ref{spectral gap ln k},
let $\nu^\rho_{l,l}$ be the product of $4l+2$ Geometric distributions with common parameter $\e$ and mean $\rho$ such that $\e = \dfrac{\rho}{1+\rho}$, 
and $\nu_{l,l,j}$ be $\nu^\rho_{l,l}$ conditioned on that the total number of particles in the $4l+2$ sites is $j$.
Note that $\nu_{l,l,j}$ is independent of $\rho$.

\begin{Lem} \label {spectral gap of 2 blocks}
We have the following estimates:
\begin{enumerate}
\item Uniform bound:  For all $\eta\in \Omega_{k,k', l, j}$, we have
\begin{equation} \label {eqn: RN for kk'lj}
r_{k,k',l,\e}^{-1}
\leq
\dfrac{\mu_{k,k',l,j} (\eta) }{\nu_{l,l,j}(\eta)}
\leq
r_{k,k',l,\e}
\end{equation}
 where 
$r_{k,k',l,\e} :=  \left( \dfrac{1- c \e^{k'+l}}{1- c \e^{k-l}}\right) ^{4l+2} \e^{-2lj}\e^{(k-k')j}$.

\item Poincar\'e inequality:
For $1\leq l<k<k'$ and fixed $j\geq 0$, we have
\begin{equation} 
\label{spectral gap, 2 blocks}
{\rm Var}_{\mu_{k,k',l,j}}(f)
\leq
C_{k,k',l,j} E_{\mu_{k,k',l,j}} \big [ f(-L_{k,k',l}) \big ]
\end{equation}
where
$
C_{k,k',l,j}
=
\dfrac C2
(4l+2)^2 \Big(1+\dfrac{j}{4l+2}\Big)^2 r^2_{k,k',l,\e}
$
for an universal constant $C$.

\item 
For fixed $j$, $l$, and $0<a<b<\infty$, we have
\begin{equation}
\label{4stars}
\limsup_{\tau\downarrow 0}\limsup_{N\uparrow\infty}\sup_{\stackrel{aN<k<k'\leq bN}{2l+1\leq |k'-k|\leq \tau N}}r_{k,k',l,\e} \leq 1,
\end{equation}
and so 
$\limsup_{\tau\downarrow 0}\limsup_{N\uparrow\infty} 
 \sup_{\substack{aN\leq k<k' \leq bN  \\ 2l+1\leq |k'-k| \leq \tau N }}
C_{k,k',l,j}<\infty$. 
\end{enumerate}
\end{Lem}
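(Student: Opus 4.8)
The plan is to carry over the proof of Lemma~\ref{spectral gap ln k} with only cosmetic changes, the one structural observation being that the two-block generator $L_{k,k',l}$ — once the extra bond joining $k+l$ to $k'-l$ is included — is a nearest-neighbor zero-range dynamics with jump rate $\chi_{\{\cdot>0\}}$ on a graph which is just a path of $4l+2$ vertices (the $2l+1$ sites of $\L_{k,l}$, then the $2l+1$ sites of $\L_{k',l}$, joined along that one bond), and that it is reversible for $\mu_{k,k',l}$ precisely because the rate of the $(k+l)\to(k'-l)$ jump was chosen to be $\theta_{k'-l}/\theta_{k+l}$. Since $\beta=0$ here we simply have $\theta_{x,c}=c\e^x$ with $\e=e^{-1/N}$, so all the computations are elementary bookkeeping.

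For item (1) I would bound $\mu_{k,k',l}(\eta)$ and $\mu_{k,k',l}(\Omega_{k,k',l,j})$ separately, exactly as in the $\beta=0$ case of Lemma~\ref{spectral gap ln k}. Writing $\mu_{k,k',l}(\eta)=\prod_{x\in\L_{k,k',l}}(1-c\e^x)\prod_{x}(c\e^x)^{\eta(x)}$ and using $c\e^{k'+l}\le c\e^x\le c\e^{k-l}$ together with $(k-l)j\le\sum_x x\,\eta(x)\le(k'+l)j$ on $\L_{k,k',l}$, one gets
\[
(1-c\e^{k-l})^{4l+2}(c\e^{k'+l})^{j}\ \le\ \mu_{k,k',l}(\eta)\ \le\ (1-c\e^{k'+l})^{4l+2}(c\e^{k-l})^{j},
\]
and the identical bounds for $\mu_{k,k',l}(\Omega_{k,k',l,j})$. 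Since $\nu^\rho_{l,l}(\eta)=(1-\e)^{4l+2}\e^{j}$ is constant on $\Omega_{k,k',l,j}$, forming $\dfrac{\mu_{k,k',l,j}(\eta)}{\nu_{l,l,j}(\eta)}=\dfrac{\mu_{k,k',l}(\eta)}{\nu^\rho_{l,l}(\eta)}\cdot\dfrac{\nu^\rho_{l,l}(\Omega_{k,k',l,j})}{\mu_{k,k',l}(\Omega_{k,k',l,j})}$ and dividing the extreme cases gives \eqref{eqn: RN for kk'lj}, the exponent collapsing to $\e^{(k-k'-2l)j}=\e^{-2lj}\e^{(k-k')j}$.

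For item (2) I would introduce the symmetrized localized generator $L_{l,l}$, namely $L_{k,k',l}$ with every rate replaced by $\tfrac12$ (so the $(k+l)\leftrightarrow(k'-l)$ bond becomes one of the $4l+1$ bonds of the path), in analogy with $L_l$ from \eqref{onestar}. The one-dimensional symmetric zero-range spectral gap bound of \cite{LSV}, applied to a segment of $4l+2$ sites, gives ${\rm Var}_{\nu_{l,l,j}}(f)\le C(4l+2)^2(1+j/(4l+2))^2\,E_{\nu_{l,l,j}}[f(-L_{l,l}f)]$, as in \eqref{eqn: spetral gap for nu}. Comparing Dirichlet forms bond by bond through \eqref{eqn: RN for kk'lj} yields $E_{\nu_{l,l,j}}[f(-L_{l,l}f)]\le\tfrac12 r_{k,k',l,\e}\,E_{\mu_{k,k',l,j}}[f(-L_{k,k',l}f)]$, and ${\rm Var}_{\mu_{k,k',l,j}}(f)=\inf_a E_{\mu_{k,k',l,j}}[(f-a)^2]\le r_{k,k',l,\e}\,{\rm Var}_{\nu_{l,l,j}}(f)$; chaining the three estimates produces \eqref{spectral gap, 2 blocks} with $C_{k,k',l,j}=\tfrac C2(4l+2)^2(1+j/(4l+2))^2 r_{k,k',l,\e}^2$.

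For item (3), using $\e=e^{-1/N}\to1$ and $c<c_0=1$ (so that $1-c\e^x\ge1-c>0$ uniformly), I would estimate
\[
1\ \le\ \frac{1-c\e^{k'+l}}{1-c\e^{k-l}}\ =\ 1+\frac{c\,\e^{k-l}\bigl(1-\e^{\,k'-k+2l}\bigr)}{1-c\e^{k-l}}\ \le\ 1+\frac{c}{1-c}\cdot\frac{k'-k+2l}{N},
\]
and similarly $1\le\e^{-2lj}\e^{(k-k')j}=e^{(2l+k'-k)j/N}$; under the constraints $2l+1\le|k'-k|\le\tau N$ and $aN<k<k'\le bN$, letting $N\uparrow\infty$ and then $\tau\downarrow0$ forces both factors to $1$, which is \eqref{4stars} (and $r_{k,k',l,\e}\ge1$ trivially), whence uniform boundedness of $C_{k,k',l,j}$ is immediate from its explicit form. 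There is no real obstacle here; the only point that needs care is recognizing that $L_{l,l}$ lives on a path of $4l+2$ vertices so that the interval bound of \cite{LSV} applies unchanged, and keeping the exponent arithmetic in $r_{k,k',l,\e}$ consistent (in particular that the added bond contributes nothing beyond the range $[k-l,k'+l]$ of possibly occupied sites). Everything else is the $2l+1\mapsto 4l+2$ analogue of the one-block proof.
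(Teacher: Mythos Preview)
Your proposal is correct and follows essentially the same route as the paper: the same product-measure bounds for item~(1), the same introduction of the symmetrized path generator $L_{l,l}$ on $4l+2$ sites together with the \cite{LSV} gap and the Radon--Nikodym comparison for item~(2), and the same $\e=e^{-1/N}\to1$ asymptotics for item~(3). The only cosmetic difference is that in item~(3) you give explicit one-line inequalities (using $1-\e^m\le m/N$ and $1-c\e^{k-l}\ge 1-c$), whereas the paper passes directly to the limit $\sup_{a\le x\le b}\bigl((1-ce^{-\tau}e^{-x})/(1-ce^{-x})\bigr)^{4l+2}e^{\tau j}$ and then lets $\tau\downarrow0$; both arrive at the same conclusion.
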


\begin{proof}
We will compare $\mu_{k,k',l,j}$ with $\nu_{l,l,j}$ and make use of the known Poincar\'e bound, as in the proof of Lemma \ref{spectral gap ln k}:
\begin{equation} \label {eqn: spetral gap for nu k k'}
{\rm Var}_{\nu_{l,l,j}}( f) 
\leq
C
(4l+2)^2 \Big(1+\dfrac{j}{4l+2}\Big)^2
 E_{\nu_{l,l,j}} [ f(-L_{l,l}f) ],
\end{equation}
where $C$ is some universal constant.

For $\eta \in \Omega_{k,k',l,j}$, we have
\begin{equation*}
\dfrac{\mu_{k,k',l,j} (\eta) }{\nu_{l,l,j}(\eta)}
=
\dfrac{\mu_{k,k',l} (\eta) }{\nu^\rho_{l,l}(\eta)}
\dfrac{\nu^\rho_{l,l}(\Omega_{k,k',l,j})}{\mu_{k,k',l} (\Omega_{k,k',l,j})}.
\end{equation*}
Since $\mu_{k,k',l}$ and $\nu^\rho_{l,l}$ are product measures, and $\beta =0$, that is, 
\begin{equation}
\label{triplestar}
\mu_{k,k',l} (\eta) = \prod_{x\in \L_{k,k',l}} (1-c\e^x) c ^{\eta(x)}\e ^{x\eta(x)}, 
\ {\rm and \ }
\nu^\rho_{l,l}(\eta)= \prod_{x\in \L_{k,k',l}} (1-\e) \e ^{\eta(x)},
\end{equation}
we have
\begin{equation*}
\dfrac  {(1-c\e^{k-l})^{4l+2} c^j \e ^{(k'+l)j} }   { (1-\e)^{4l+2}\e^j}
\leq
\dfrac{\mu_{k,k',l} (\eta) }{\nu^\rho_{l,l}(\eta)}
\leq
\dfrac  {(1-c \e^{k'+l})^{4l+2}c^j \e ^{(k-l)j} }   { (1-\e)^{4l+2}\e^j}. 
\end{equation*}
 Consequently,
\begin{equation*}
\dfrac  {(1-c\e^{k-l})^{4l+2} c^j \e ^{(k'+l)j} }   { (1-\e)^{4l+2}\e^j}
\leq
\dfrac{\mu_{k,k',l} (\Omega_{k,k',l,j})} {\nu^\rho_{l,l}(\Omega_{k,k',l,j})}
\leq
\dfrac  {(1-c \e^{k'+l})^{4l+2} c^j \e ^{(k-l)j} }   { (1-\e)^{4l+2}\e^j} .
\end{equation*}
Therefore,
$r_{k,k',l,\e}^{-1}
\leq
\dfrac{\mu_{k,k',l,j} (\eta) }{\nu_{l,j}(\eta)}
\leq
r_{k,k',l,\e}$ and \eqref{eqn: RN for kk'lj} holds.

Recall the generator of symmetric zero-range $L_l$ with respect to $\Lambda_{k,l}$ (cf. \eqref{onestar}).  Let $L'_l$ be the generator with respect to $\Lambda_{k',l}$.    Define, noting $1\leq l<k<k'$, the generator $L_{l,l}$ with respect to $\L_{l,l}$ given by
\begin{align*}
L_{l,l}f(\eta) &= L_l + L'_l \\
&\ \ \ \ + \frac{1}{2}\left[f\big(\eta^{k+l, k'-l}\big) - f(\eta)\right]\chi_{\{\eta(k+l)>0\}} +
\frac{1}{2}\left[f\big(\eta^{k'-l,k+l}\big) - f(\eta)\right]\chi_{\{\eta(k'-l)>0\}}.
\end{align*}
When $|k-k'|$ is large, the process governed by $L_{l,l}$ in effect treats the blocks as adjacent.  The canonical measure $\nu_{l,l,j}$ is invariant to the dynamics.  The corresponding Dirichlet form is given by
$$E_{\nu_{l,l,j}}\left[f(-L_{l,l} f)\right] = \frac{1}{2} \sum_{x,x+1\in \Lambda_{k,k',l}} E_{\nu_{l,l,j}}\left\{ \left[f\big(\eta^{x,x+1}\big)-f(\eta)\right\}^2\chi_{\{\eta(x)>0\}}\right],$$
where we interpret that $k+l$ and $k'-l$ are neighbors in the above formula.

From \eqref{eqn: RN for kk'lj}, we have
\begin{equation} \label {eqn: d from comparison, kk'}
E_{\nu_{l,l,j}}\left[ f(-L_{l,l}f) \right ]
\leq
\dfrac12
r_{k,k',l,\e}
E_{\mu_{k,k',l,j}}\left[ f(-L_{k,k',l}f) \right].
\end{equation}

Also, in turn,
\begin{equation*}
\begin{split}
{\rm Var}_{\mu_{k,k',l,j}}( f) 
=
\inf_a E_{\mu_{k,k',l,j}} \left [(f-a)^2 \right ]
\leq&
r_{k,k',l,\e}
\inf_a E_{\nu_{l,l,j}} \left [(f-a)^2 \right ]\\
=&
r_{k,k',l,\e}
{\rm Var}_{\nu_{l,l,j}}( f).
\end{split}
\end{equation*}
The spectral gap estimate \eqref{spectral gap, 2 blocks} now follows from \eqref {eqn: spetral gap for nu k k'} and \eqref{eqn: d from comparison, kk'}.

To complete the proof of the lemma, noting that $\e=e^{-1/N}$, for any fixed $l$, $j$, we see straightforwardly that
$$\limsup_{N\uparrow\infty}
 \sup_{\substack{aN\leq k<k' \leq bN  \\ 2l+1\leq |k'-k| \leq \tau N }}
r_{k,k',l,\e}
 \leq \sup_{a\leq x\leq b}\left( (1-c e^{-\tau}e^{-x})/(1-c e^{-x})\right)^{4l+2} e^{\tau j},$$
which converges to $1$ as $\tau\downarrow 0$. Hence, the limit \eqref{4stars} and the desired uniform boundedness of $C_{k,k',l,j}$ both follow.
\end{proof}

\medskip
We now state and show a $2$-blocks estimate.  The scheme is similar to that of the $1$-block estimate.  Recall $D^{G,s}_{N,k}$ and its bound for $aN\leq k\leq bN$ that $|D^{G,s}_{N,k}| \leq C(a,b,G)$ (cf. \eqref{DG_bound}).

\begin{Lem}[$2$-block estimate] \label{lem: 2 blocks} We have
\begin{equation*}
\limsup_{l\to \infty}
\limsup_{\tau \to 0}
 \limsup_{N\to \infty}
 \sup_{aN\leq k \leq bN }
\E_N \Big |
\int_0^T 
D_{N,k}^{G,s}
\Big( 
\dfrac{\eta_s^l(k)}{1+ \eta_s^l(k)}
-
\dfrac{\eta_s^{\tau N}(k)}{1+ \eta_s^{\tau N}(k)} \Big) 
ds \Big |
=
0.
\end{equation*}
\end{Lem}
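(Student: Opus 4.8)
The plan is to mirror the proof of the $1$-block estimate (Lemma \ref{one block lem}), now with the two-block spectral gap of Lemma \ref{spectral gap of 2 blocks} in place of the translation-invariant one, plus one extra ingredient — a moving-particle bound. \emph{Reduction.} Writing $\eta^{\tau N}_s(k)=\frac1{2\tau N+1}\sum_{|m-k|\le\tau N}\eta_s(m)$ as an average, up to boundary terms of relative size $O(l/(\tau N))$, of the local densities $\eta^l_s(k')$ over $k'$ in a window of radius $\sim\tau N$ about $k$, and using that $x\mapsto x/(1+x)$ is $1$-Lipschitz (and concave), a standard argument — controlling the error terms in $\P_N$-expectation via attractiveness \eqref{attractiveness} and the site bound \eqref{site_particle} — reduces the claim to showing
\begin{equation*}
\limsup_{l\to\infty}\limsup_{\tau\to0}\limsup_{N\to\infty}\ \sup_{\substack{aN\le k<k'\le bN\\ 2l+1\le|k'-k|\le\tau N}} \E_N\Big|\int_0^T D^{G,s}_{N,k}\Big(\frac{\eta^l_s(k)}{1+\eta^l_s(k)}-\frac{\eta^l_s(k')}{1+\eta^l_s(k')}\Big)ds\Big|=0,
\end{equation*}
the residual Jensen-type gap between $\eta^{\tau N}_s(k)/(1+\eta^{\tau N}_s(k))$ and the averaged quantity being bounded by spatial mean absolute deviations of the $\eta^l$'s, absorbed in the same iterated limit.

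\emph{Cutoff and centering.} As in Step~1 of the proof of Lemma \ref{one block lem}, attractiveness and \eqref{site_particle} (with $N_\beta=1$) let us restrict to configurations with $\eta^l_s(k)\le A$ and $\eta^l_s(k')\le A$, up to an error vanishing as $A\uparrow\infty$. Setting $j=\L_{k,k',l}(\eta):=\sum_{x\in\L_{k,k',l}}\eta(x)$, we then subtract from the (cutoff) difference its conditional expectation $E_{\mu_{k,k',l,j}}[\cdot]$, obtaining a function $V_{k,k',l,A}(s,\eta)$ with $E_{\mu_{k,k',l,j}}V_{k,k',l,A}=0$ and $\|V_{k,k',l,A}\|_\infty\le 2C(a,b,G)$. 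Here the long bond of $L_{k,k',l}$ is indispensable: it renders $\L_{k,k',l}$ connected, so that the only conserved quantity is the \emph{total} count $j$, and equivalence of ensembles (\cite{KL}, p.\,355) together with the uniform Radon--Nikodym bound \eqref{eqn: RN for kk'lj} — where $r_{k,k',l,\e}\to1$ by item (3) of Lemma \ref{spectral gap of 2 blocks} — forces $E_{\mu_{k,k',l,j}}[\eta^l(k)/(1+\eta^l(k))]$ and $E_{\mu_{k,k',l,j}}[\eta^l(k')/(1+\eta^l(k'))]$ to the same limit (a function of $j/(4l+2)$); hence the centering error vanishes as $N\to\infty$, then $\tau\to0$, then $l\to\infty$.

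\emph{Entropy, Feynman--Kac, moving particle, Rayleigh.} By the entropy inequality with $H(\nu^N|\Pam_{c,N})=O(N)$, then $e^{|x|}\le e^x+e^{-x}$ and the Feynman--Kac formula, following Steps~4--7 of the proof of Lemma \ref{one block lem}, it suffices to bound the largest eigenvalue of $N^2L+\gamma N V_{k,k',l,A}(s,\cdot)$; via the variational formula, conditioning on the coordinates of $\{k-l,\dots,k'+l\}$ and decomposing over particle number, this reduces to $\sup_f\{E_{\mu_{k,k',l,j}}[V_{k,k',l,A}f]-\gamma^{-1}N\,E_{\Pam_{c,N}}[\sqrt f(-L\sqrt f)]\}$. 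Since the long bond of $L_{k,k',l}$ is absent from $L$, we insert a moving-particle estimate: transporting a particle from $k+l$ to $k'-l$ is a concatenation of at most $|k'-k|\le\tau N$ nearest-neighbour moves through the gap region, so by Cauchy--Schwarz and the boundedness of the Geometric weight ratios of $\Pam_{c,N}$ on $[aN-l,bN+l]$, the long-bond Dirichlet form is $\le C\tau N$ times the localized Dirichlet form of $L$; hence $E_{\Pam_{c,N}}[\sqrt f(-L\sqrt f)]\ge (C\tau N)^{-1}E_{\mu_{k,k',l,j}}[\sqrt f(-L_{k,k',l}\sqrt f)]$. The Rayleigh expansion (\cite{KL}, p.\,375), with the uniformly bounded inverse spectral gap $C_{k,k',l,j}$ of $L_{k,k',l}$ from Lemma \ref{spectral gap of 2 blocks} and $\|V_{k,k',l,A}\|_\infty\le 2C(a,b,G)$, then yields
\begin{equation*}
\E_N\Big|\int_0^T V_{k,k',l,A}(s,\eta_s)\,ds\Big|\le\frac{C_0}{\gamma}+\frac{C\,\gamma\,\tau\,C_{k,k',l,j}\,T\,\|V_{k,k',l,A}\|_\infty^2}{1-2C(a,b,G)\,C_{k,k',l,j}\,\gamma\,\tau},
\end{equation*}
and letting $N\to\infty$, then $\tau\to0$ (the second term vanishing since $C_{k,k',l,j}$ stays bounded by Lemma \ref{spectral gap of 2 blocks}(3) while $\gamma\tau\to0$), then $\gamma\to\infty$, $A\to\infty$, $l\to\infty$, and combining with the earlier steps, proves the lemma.

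\emph{Main obstacle.} The crux is the moving-particle step: the two blocks sit a macroscopic distance $\le\tau N$ apart, so the artificial long bond of $L_{k,k',l}$ cannot be dominated by $L$ except at the cost of a factor $\tau N$; this is harmless only because we send $N\to\infty$ before $\tau\to0$, which is exactly why the statement is phrased with that order of limits. A secondary difficulty, resolved by Lemma \ref{spectral gap of 2 blocks}(3), is the spatial inhomogeneity of $\Pam_{c,N}$: the Radon--Nikodym factor $r_{k,k',l,\e}$ comparing $\mu_{k,k',l,j}$ with the translation-invariant canonical measure $\nu_{l,l,j}$ must tend to $1$ uniformly over the admissible pairs $(k,k')$, which is what lets both equivalence of ensembles and the spectral gap transfer from $\nu_{l,l,j}$ to $\mu_{k,k',l,j}$.
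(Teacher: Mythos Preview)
Your proof is correct and follows the same architecture as the paper's --- reduction to a two-block comparison, density cutoff, entropy/Feynman--Kac, moving-particle bound, and Rayleigh with the spectral gap of Lemma~\ref{spectral gap of 2 blocks}. Two places where the paper is cleaner: it applies the Lipschitz bound $|x/(1+x)-y/(1+y)|\le|x-y|$ at the very first step, so the object carried through the machinery is simply $V_{k,k',l,A}(\eta)=|\eta^l(k)-\eta^l(k')|\,\chi_{\{\eta^l(k,k')\le A\}}$ and no separate Jensen-gap term needs tracking; and the conditioning is on the two blocks $\L_{k,k',l}$, performed \emph{after} the moving-particle inequality --- conditioning on the full interval $\{k-l,\dots,k'+l\}$ as you write would not mesh with decomposing over the canonical measures $\mu_{k,k',l,j}$.
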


\begin{proof}
We separate the argument into steps.
\vskip .1cm

{\it Step 1.}
Since $\dfrac{x}{1+x}$ is Lipschitz on $\R^+$ and $D_{N,k}^{G,s} $ is bounded,
it is enough to show
\begin{equation*}
\limsup_{l\to \infty}
\limsup_{\tau \to 0}
 \limsup_{N\to \infty}
 \sup_{aN\leq k \leq bN }
\E_N 
\int_0^T 
\left |
\eta_s^{\tau N}(k)
-
\eta_s^l(k)
 \right |
ds
=
0.
\end{equation*}

By the triangle inequality, it will be enough to show that as $N\to \infty$, $\tau\to 0$, and $l\to \infty$,
\begin{equation} \label {eqn: two block,12}
\begin{split}
 \sup_{aN\leq k \leq bN }
\E_N 
\int_0^T 
\Big |
\eta_s^{\tau N}(k)
-
\dfrac{1}{2\tau N+1} \sum_{|x-k|\leq \tau N}
\eta_s^{l}(x)
 \Big |
ds
\to
0 \ \ {\rm and }
\\
 \sup_{aN\leq k \leq bN }
\E_N
\int_0^T 
\Big |
\dfrac{1}{2\tau N+1} \sum_{|x-k|\leq \tau N}
\eta_s^{l}(x)
-
\eta_s^l(k)
 \Big |
ds
\to
0.
\end{split}
\end{equation}

\vskip .1cm

{\it Step 2.}
We now show that the first limit in \eqref{eqn: two block,12}.
Note that
\begin{equation*}
\begin{split}
&\Big |
\eta^{\tau N}(k)
-
\dfrac{1}{2\tau N+1} \sum_{|x-k|\leq \tau N}
\eta^{l}(x)
\Big|
\leq
\dfrac{1}{2\tau N+1} 
\sum_{\substack{|x-k-\tau N|\leq l \\ \text{ or } |x-k+\tau N| \leq l}}
\eta(x)\\
&\quad\quad\quad\quad\quad\quad\quad\quad\quad\quad\quad\quad=
\dfrac{2l+1}{2\tau N+1} \left(\eta^l(k-\tau N) + \eta^l(k+\tau N) \right).
\end{split}
\end{equation*}
Then, the expectation in the first limit in \eqref{eqn: two block,12}, given that $\nu^N\leq \Pam_{c,N}$ and that the process is attractive (cf. Subsection \ref{subsec: attractiveness}), is bounded from above by
\begin{equation*}
\dfrac{2l+1}{2\tau N+1} 
\int_0^T 
\E_{\Pam_{c,N}}
\left(
\eta_s^l(k-\tau N) + \eta_s^l(k+\tau N) 
\right)
ds
\end{equation*}
For fixed $l$ and $\tau<a$, since $k\geq aN$ and $\beta=0$, we have $E_{\Pam_{c,N}}[\eta(k)] =
 \rho_{k,c} = ce^{-k/N}/(1-ce^{-k/N})\leq 1$ (cf. \eqref{rho k c}).  Hence, the above display vanishes uniformly in $k$ as $N\to \infty$.

\vskip .1cm
{\it Step 3.}
By the same argument as in Step 2, we can restrain the $x$ in the summation of the second limit in \eqref {eqn: two block,12}
to be $k'$ such that $2l+1\leq |k'-k|\leq \tau N$.
Then, the second limit will follow if we show that
\begin{equation*}
\limsup_{l\to \infty}
\limsup_{\tau \to 0}
 \limsup_{N\to \infty}
 \sup_{\substack{aN\leq k<k' \leq bN  \\ 2l+1\leq |k'-k| \leq \tau N }}
\E_N 
\int_0^T 
\left|
\eta_s^l(k)
-
\eta_s^l(k') 
\right |
ds
=
0.
\end{equation*}

\vskip .1cm
{\it Step 4.}
We will apply a cutoff of large densities first. 
Let 
\begin{equation*}
\eta_s^{l} (k,k') =  \eta_s^l(k) + \eta_s^l(k') .
\end{equation*}
For any $A$,
\begin{equation*}
\begin{split}
&\E_N 
\int_0^T 
\left|
\eta_s^l(k)
-
\eta_s^l(k') 
\right |
ds
=
\E_N \int_0^T 
\left|
\eta_s^l(k) - \eta_s^l(k') 
\right |
\chi_{\{\eta_s^{l} (k,k') \leq A\} }
ds\\
&\quad\quad\quad\quad\quad\quad \quad\quad\quad+
\E_N  \int_0^T 
\left|
\eta_s^l(k) - \eta_s^l(k') 
\right |
\chi_{\{\eta_s^{l} (k,k') > A \}}
ds = I_1 + I_2.
\end{split}
\end{equation*}

As $\nu^N\leq \Pam_{c,N}$ and the process is attractive (cf. Subsection \ref{subsec: attractiveness}), we may bound the second expectation $I_2$ by
\begin{equation}
\label{step 3 eq}
\E_N  \int_0^T 
\eta_s^{l} (k,k')
\chi_{\{\eta_s^{l} (k,k') > A \}}
ds
\leq
\dfrac T A
E_{\Pam_{c,N}} \left( \eta^{l} (k,k') \right)^2.
\end{equation}
Recall $\rho_{k,c} = ce^{-k/N}/(1-ce^{- k/N})$ when $\beta=0$ (cf. \eqref{rho k c}). Trivially, $\rho_{k,c}\leq c/(1-c)$ for all $k$. Note that $\Pam_{c,N}$ has Geometric marginals, therefore, $E_{\Pam_{c,N}}[\eta(k)^2] = 2 \rho^2_{k,c} + \rho_{k,c}$ is uniformly bounded.  Then, as
$$\left( \eta^{l} (k,k') \right)^2 \leq 2\left( \eta^l(k)^2 + \eta^l(k')^2\right) \leq 2(2l+1)^{-1}\sum_{x\in \L_{k,l}\cup \L_{k',l}}\eta(x)^2,$$
we have that \eqref{step 3 eq} is of order $O(A^{-1})$ and that the second expectation $I_2$ is negligible.

Hence, it remains to show that 
\begin{equation*}
\begin{split}
 \sup_{\substack{aN\leq k<k' \leq bN  \\ 2l+1\leq |k'-k| \leq \tau N }}
\E_N \int_0^T 
\left|
\eta_s^l(k) - \eta_s^l(k') 
\right |
\chi_{\{\eta_s^{l} (k,k') \leq A \}}
ds
\end{split}
\end{equation*}
vanishes as we take $N\to \infty$, $\tau\to 0$, and then $l\to \infty$.

\vskip .1cm
{\it Step 4.}
Let
\begin{equation*} 
V_{k,k',l,A}(\eta)
:=
\left|
\eta^l (k)
- 
\eta^l(k')
\right|
\chi_{\{\eta^l(k,k')\leq A\}}.
\end{equation*}
Following the proof of Lemma \ref{one block lem}, for fixed $l,\tau,N,k,k'$, in order to estimate
\begin{equation*}
\E_N
\int_0^T 
V_{k,k',l,A} (\eta_s)
ds
\end{equation*} 
it suffices to bound
\begin{equation}
\label{eigen_2blocks}
(\gamma N)^{-1} \lambda_{N,l}
=
\sup_f
\left\{
E_{\Pam_{c,N}} \left [  V_{k,k',l,A} f  \right ]
-\gamma^{-1} N 
E_{\Pam_{c,N}} \left [
\sqrt f (-L \sqrt f)
\right ]
\right\}.
\end{equation}
where the supremum is over all $f$ which are densities with respect to $\Pam_{c,N}$.

\vskip .1cm
{\it Step 5.}
Recall the generator $L_{k,k',l}$ and its Dirichlet form defined in the beginning of this Subsection.  Recall also $\mu_{k,k',l}$ is the restriction of $\Pam_{c,N}$ to $\L_{k,k',l}$.  The Dirichlet form with respect to the full generator $L$ under $\Pam_{c,N}$ is given by
$$E_{\Pam_{c,N}} \left[ f(-L f)\right ]
= \sum_{x\geq 1} E_{\Pam_{c,N}} \left [ \chi_{\{\eta(x+1)>0\}} \left ( f(\eta^{x+1,x}) - f(\eta(x) \right)^2\right].$$

We now argue the following Dirichlet form inequality:
\begin{equation}\label {eqn: d form kk'l LN}
E_{\mu_{k,k',l}} \left [
\sqrt f (-L_{k,k',l}\sqrt f)
\right ]
\leq
(1+\tau N)
E_{\mu_{k,k',l}} \left [
\sqrt f (-L \sqrt f)
\right ].
\end{equation}
First, we observe that
\begin{equation*}
\begin{split}
E_{\mu_{k,k',l}} \left [ f(-L_{k,k',l} f) \right ]
=&
\sum_{x,x+1\in \L_{k,k',l}}
 E_{\Pam_{c,N}} \left [
 \chi_{\{\eta(x+1) > 0\}}
\left [  f\left(\eta^{x+1,x}  \right)   - f(\eta) \right ]^2
 \right ]\\
 &+
  E_{\Pam_{c,N}} \left [ 
 \chi_{\{\eta(k'-l) > 0\}}
\left [  f\left(\eta^{k'-l,k+l}  \right)   - f(\eta) \right ]^2
 \right ].
\end{split}
\end{equation*}
Here, we wrote the terms on the right-hand side as expectations over the product measure $\mu= \Pam_{c,N}$.

Next, by adding and subtracting at most $\tau N$ terms, we have
\begin{equation*}
\begin{split}
&\left [  f\left(\eta^{k'-l,k+l}  \right)   - f(\eta) \right ]^2\\
\leq&
(k'-k-2l) \sum_{q=0}^{k'-k-2l-1}
\left [  f\left(\eta^{k'-l,k+l+q}  \right)   -f\left(\eta^{k'-l,k+l+q+1}  \right)  \right ]^2
.
\end{split}
\end{equation*}
Also, when $\eta(k'-l)>0$,
 by applying the change of variables $\xi = \eta^{k'-l,k+l+q+1}$ which takes away a particle at $k'-l$ and adds one at $k+l+q+1$, we have (cf. \eqref{triplestar})
\begin{equation*}
\mu(\eta)
=
\e^{k'-k-2l-q-1}\Pam_{c,N}(\xi)
\leq
\mu(\xi).
\end{equation*}
Then, as
$\chi_{\{\eta(k'-l)>0\}} = \chi_{\{\xi(k+l+q+1)> 0\}}$, we have
\begin{equation*}
\begin{split}
 &E_{\Pam_{c,N}} \left [
 \chi_{\{\eta(k'-l) > 0\}}
\left [  f\left(\eta^{k'-l,k+l+q}  \right)   -f\left(\eta^{k'-l,k+l+q+1}  \right)  \right ]^2
 \right]\\
 =&
\sum_{\xi} \mu(\eta) 
 \chi_{\{\xi(k+l+q+1) > 0\}}
 \left [  f\left(\xi^{k+l+q+1,k+l+q}  \right)   -f\left(\xi  \right)  \right ]^2\\
 \leq&
 E_{\Pam_{c,N}} \left [ 
 \chi_{\{\eta(k+l+q+1) > 0\}}
 \left [  f\left(\eta^{k+l+q+1,k+l+q}  \right)   -f\left(\eta  \right)  \right ]^2
 \right ].
\end{split}
\end{equation*}
From these observations, \eqref {eqn: d form kk'l LN} follows.

\vskip .1cm
{\it Step 6.} 
Inputting \eqref{eqn: d form kk'l LN} into \eqref{eigen_2blocks}, and considering the conditional expectation of $f$ with respect to $\Omega_{k,k',l}$ as in the $1$-block estimate proof, for $N$ large, we have
\begin{equation*}
(\gamma N)^{-1} \lambda_{N,l}
\leq
\sup_{f_{k.k',l}}
\left\{
E_{\mu_{k,k',l}} \left [  V_{k,k',l,A} f_{k,k',l}  \right ]
-\dfrac{1}{2\tau \gamma} 
E_{\mu_{k,k',l}} \left [
\sqrt {f_{k,k',l}} (-L_{k,k',l} \sqrt {f_{k,k',l}})
\right ]
\right\},
\end{equation*}
where the supremum is over densities with respect to $\mu_{k,k',l}$.

Again, as in the proof of the $1$-block estimate, decomposing $f_{k,k',l}  d\mu_{k,k'.l}$ along configurations with common total number $j$, we need only to bound
\begin{equation*}
\sup_{0\leq j\leq A(2l+1)}
\sup_f
\left\{
E_{\mu_{k,k',l,j}} \left [ V_{k,k',l,A} f  \right ]
-\dfrac{1}{2\tau \gamma} 
E_{\mu_{k,k',l,j}} \left [
\sqrt f (-L_{k,k',l} \sqrt f)\right]
\right\},
\end{equation*}
where the supremum is over densities with respect to $\mu_{k,k',l,j}$.
\vskip .1cm

{\it Step 7.}
Let 
\begin{equation*}
\widehat V_{k,k',l,A}
=
V_{k,k',l,A}
-
E_{\mu_{k,k',l,j}}
\left[V_{k,k',l,A}   \right].
\end{equation*}
Using the Rayleigh expansion (cf. p.375, \cite{KL}) 
where the inverse spectral gap $C_{k,k',l,j}$ of $L_{k,k',l}$ is bounded (Lemma \ref{spectral gap of 2 blocks}), and $ \| \widehat V_{k,k',l,A}\|_{\infty} \leq  A$, we have 
\begin{equation*}
\begin{split}
&E_{\mu_{k,k',l,j}} \left [  \widehat V_{k,k',l,A} f  \right ]
-\dfrac{1}{2\tau \gamma} 
E_{\mu_{k,k',l,j}} \left [
\sqrt f (-L_{k,k',l} \sqrt f)
\right ]\\
\leq&
\dfrac{2\tau \gamma}{1-4 A C_{k,k',l,j}\,\tau \gamma}
E_{\mu_{k,k;,l,j}}\left [
 \widehat V_{k,k',l,A}  (-L_{k,k',l} )^{-1}
 \widehat V_{k,k',l,A}
\right ]\\
\leq&
\dfrac{2\tau \gamma C_{k,k',l,j}}{1-4 A C_{k,k',l,j}\,\tau \gamma}
E_{\mu_{k,k',l,j}}\left [
 \widehat V_{k,k',l,A} ^2
\right ]
\to 0 \text{ as } \tau \to 0.
\end{split}
\end{equation*}

\vskip .1cm
{\it Step 8.}
To finish, we still need to show that $E_{\mu_{k,k',l,j}}\left[  V_{k,k',l,A}   \right ]$ vanishes.
 In fact, by Lemma \ref{spectral gap of 2 blocks},
$E_{\mu_{k,k',l,j}}\left [ V_{k,k',l,A}   \right ]
\leq r_{k,k',l,\e} E_{\nu_{l,l,j}}\left [  V_{k,k',l,A}   \right ]$ and, for $l$ and $j$ fixed,
$$
\limsup_{\tau\downarrow 0}
\limsup_{N\uparrow\infty}
 \sup_{\substack{aN\leq k<k' \leq bN  \\ 2l+1\leq |k'-k| \leq \tau N }}
r_{k,k',l,\e} \leq 1.
$$
The term $E_{\nu_{l,l,j}}\left [  V_{k,k',l,A}   \right ]$ does not depend on $N$ or $\tau$.   By adding and subtracting $j/(2(2l+1))$, we need only bound $E_{\nu_{l,l,j}}\left[\big |\eta^l(k)- j/(2(2l+1))\big|\right]$.  By an equivalence of ensemble estimate (cf. p. 355 \cite{KL}), $E_{\nu_{l,l,j}}\left [  \big|\eta^l(k) - j/(2(2l+1))\big| ^2  \right ] \leq C(A){\rm Var}_{\nu^{j/(2(2l+1))}_{l,l}}\left( \eta^l(k)\right)$. 
This variance is of order $O(l^{-1})$, since the single site variance ${\rm Var}_{\nu^{j/(2(2l+1))}_{l,l}}\left(\eta(k)\right)$ is uniformly bounded for $j/(2(2l+1))\leq A$.  Hence, $E_{\nu_{l,l,j}}\left [  V_{k,k',l,A}   \right ]$ is of order $O(l^{-1/2})$, finishing the proof. 
\end{proof}

%%%%%%%%%%%%%%%%%%%%%%%%%%%%%%%%%%%%%%%%%%%%

\section{Properties of the initial measures} \label {section: initial measures}

In this section, we show key properties of the invariant measures $\Pam_{c,N}$ in Subsection \ref{muNsect}, the local equilibria $\mu^N$ in Subsection \ref{subsection: mu N}, and also of $\nu^N$ in Subsection \ref{subsec: nu N}.

Recall the three regimes in Subsection \ref{GC ensembles and static limit}:  (1) $\beta =0$, (2) $\Eb_k \sim \ln k$ and $0<\beta<1$, and (3) $1\ll \Eb_k \ll \ln  k$ and $\beta>0$.

\subsection{Properties of the invariant measures}
\label{muNsect}
We first show that $\Pam_{c,N}$ is indeed an invariant measure.

\begin{Lem} \label{lem: invariant measure} For $0\leq c\leq c_0$, we have
$E_{\Pam_{c,N}} (Lf(\xi)) = 0$ for all bounded functions $f$ depending only on a finite number of occupation variables $\{\xi(k)\}$.  Hence, $\Pam_{c,N}$ is an invariant measure.
\end{Lem}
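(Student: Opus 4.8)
The plan is to exploit the product structure of $\Pam_{c,N}$ together with the elementary identity relating $\lambda_k$ to the ratio of the Geometric parameters, and then verify $E_{\Pam_{c,N}}(Lf)=0$ bond by bond via a change of variables. First I would record the key fact: for any $\xi$ with $\xi(k)>0$,
\[
\Pam_{c,N}\big(\xi^{k,k+1}\big)=\frac{\theta_{k+1,c}}{\theta_{k,c}}\,\Pam_{c,N}(\xi)=\lambda_k\,\Pam_{c,N}(\xi),
\]
which is immediate since $\Pam_{c,N}$ is a product of Geometric marginals with parameters $\theta_{k,c}=c\,e^{-\beta\Eb_k-k/N}$ and, by \eqref{eqn: lambda k}, $\lambda_k=\theta_{k+1}/\theta_k=\theta_{k+1,c}/\theta_{k,c}$. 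Here $\theta_{k,c}\le e^{-k/N}<1$ because $c\le c_0=\min_k e^{\beta\Eb_k}$, so every marginal is a genuine probability measure and all the sums below make sense.

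Next I would rewrite the generator in a symmetric ``bond'' form: relabelling the down-jump sum ($k\mapsto k+1$) gives
\[
Lf(\xi)=\sum_{k\ge 1}\Big\{\lambda_k\big[f(\xi^{k,k+1})-f(\xi)\big]\chi_{\{\xi(k)>0\}}+\big[f(\xi^{k+1,k})-f(\xi)\big]\chi_{\{\xi(k+1)>0\}}\Big\},
\]
with no leftover boundary term (the site-$1$ restriction is absorbed, since particles at $1$ only move right). As $f$ depends on only finitely many coordinates, only finitely many summands are non-zero, so one may apply $E_{\Pam_{c,N}}$ term by term. For fixed $k$, the map $\xi\mapsto\xi^{k,k+1}$ is a bijection from $\{\xi(k)>0\}$ onto $\{\xi(k+1)>0\}$ with inverse $\xi\mapsto\xi^{k+1,k}$; performing this change of variables and using the identity above,
\[
E_{\Pam_{c,N}}\big[\lambda_k\chi_{\{\xi(k)>0\}}f(\xi^{k,k+1})\big]=E_{\Pam_{c,N}}\big[\chi_{\{\xi(k+1)>0\}}f(\xi)\big],\qquad
E_{\Pam_{c,N}}\big[\chi_{\{\xi(k+1)>0\}}f(\xi^{k+1,k})\big]=\lambda_k E_{\Pam_{c,N}}\big[\chi_{\{\xi(k)>0\}}f(\xi)\big].
\]
Substituting these into the bond-$k$ contribution of $E_{\Pam_{c,N}}(Lf)$, the two surviving terms cancel against the $-f(\xi)$ pieces in pairs, so the contribution of each bond is exactly $0$; summing over $k$ yields $E_{\Pam_{c,N}}(Lf)=0$.

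Finally, invariance follows from $E_{\Pam_{c,N}}(Lf)=0$ for all bounded cylinder $f$ by the standard argument: under $\Pam_{c,N}$ there are a.s.\ finitely many particles (for $c<c_0$ by \eqref{mean_var_pam}, and also at $c=c_0$ since $\phi_{c_0}\in L^1$), so $\xi_\cdot$ is a well-defined countable-state Markov chain, $t\mapsto E_{\Pam_{c,N}}[f(\xi_t)]$ is differentiable with derivative $E_{\Pam_{c,N}}[Lf(\xi_t)]\equiv 0$, and stationarity then extends from cylinder functions to all bounded measurable $f$. Alternatively, the same computation gives the stronger detailed-balance relation $\Pam_{c,N}(\xi)\lambda_k\chi_{\{\xi(k)>0\}}=\Pam_{c,N}(\xi^{k,k+1})\chi_{\{\xi^{k,k+1}(k+1)>0\}}$ across every bond, i.e.\ the dynamics is reversible with respect to $\Pam_{c,N}$, from which invariance is immediate. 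I do not expect a serious obstacle here; the only points needing care are the index bookkeeping in the bond decomposition — in particular confirming that the boundary at site $1$ contributes no extra term — and the routine passage from ``$E_{\Pam_{c,N}}(Lf)=0$ on a core of cylinder functions'' to genuine invariance of the measure.
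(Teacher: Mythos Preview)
Your proof is correct and follows essentially the same approach as the paper: both arguments hinge on the change of variables $\eta=\xi^{k,k+1}$ together with the product-Geometric identity $\Pam_{c,N}(\xi^{k,k+1})=\lambda_k\,\Pam_{c,N}(\xi)$ when $\xi(k)>0$, yielding cancellation bond by bond. Your write-up is somewhat more detailed---you make the bond decomposition and the bijection explicit, and you additionally note the detailed-balance relation (reversibility), which the paper does not state---but the core computation is identical. One small quibble: your justification that there are a.s.\ finitely many particles at $c=c_0$ via ``$\phi_{c_0}\in L^1$'' is not quite the right reason (indeed $\phi_{1}\notin L^1(\R^+)$ when $\beta=0$); the correct point is simply that for each fixed $N$ the sum $\sum_k\rho_{k,c_0}<\infty$, which you can check directly from the Geometric parameters.
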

\begin{proof} When $c=0$, there are no particles in the system and the statement is trivial.  For $0<c\leq c_0$,
recall that $\lambda_k = \theta_{k+1}/\theta_k = \theta_{k+1,c}/\theta_{k,c}$, and the definition of the generator $L$ (cf. \eqref{eqn: lambda k}).
We need only show that
\begin{equation*}
\begin{split}
E_{\Pam_{c,N}}
\sum_{k=1}^{\infty}
\lambda_k 
\left (  f\left(\xi^{k,k+1} \right)  - f(\xi) 
\right) \chi_{\{\xi(k) > 0\}}
=
-
E_{\Pam_{c,N}}
\sum_{k=2}^{\infty}
\left (  f\left(\xi^{k,k-1}  \right)   - f(\xi)
 \right ) \chi_{\{\xi(k) > 0\}}.
\end{split}
\end{equation*}
For any fixed $k\geq 1$, make a change of variable $\eta=\xi^{k,k+1}$ when $\xi(k)>0$.
Then, $\xi = \eta^{k+1,k}$ and $\eta(k+1)>0$.
 Using that $\Pam_{c,N}$ is a product of 
Geometric marginals with parameters $\{\theta_{k,c}\}$, we have $\chi_{\{\xi(k)>0\}}\dfrac{\Pam_{c,N}(d\xi)}{\Pam_{c,N}(d\eta)} = \dfrac{\chi_{\{\eta(k+1)>0\}}}{\lambda_k}$.
Therefore,
\begin{equation*}
\begin{split}
E_{\Pam_{c,N}}
\left[
\lambda_k 
\left (  f\left(\xi^{k,k+1} \right)  - f(\xi) 
\right) \chi_{\{\xi(k) > 0\}}
\right]
=
E_{\Pam_{c,N}}
\left[
\left (  f\left(\eta \right)  - f(\eta^{k+1,k}) 
\right) \chi_{\{\eta(k+1) > 0\}}
\right].
\end{split}
\end{equation*}
The lemma now follows from
a change of notation from $\eta$ back to $\xi$. 
\end{proof}

%%%%%%%%%%%%%%%%%%%%%%%%%%%%%%%%%%%%%%%%%%%%%%%%%%%%%%%%%%
%%%%%%%%%%%%%%%%%%%%%%%%%%%%%%%%%%%%%%%%%%%%%%%%%%%%%%%%%%
%%%%%%%%%%%%%%%%%%%%%%%%%%%%%%%%%%%%%%%%%%%%%%%%%%%%%%%%%%
%%%%%%%%%%%%%%%%%%%%%%%%%%%%%%%%%%%%%%%%%%%%%%%%%%%%%%%%%%

To prepare to show that $\Pam_{c,N}$ is a local equilibrium measure, we will need the following.  Recall $\theta_k = e^{-\beta \Eb_k - k/N}$ and $N_\beta= e^{\beta \Eb_N}$ (cf. \eqref{Nbeta eqn}).

\begin{Lem} \label {lem: theta k limit to rho}
For any fixed $0\leq a < b \leq \infty$, we have
\begin{equation*}
\lim_{N\to \infty}
\sum_{k=aN} ^{bN} 
N^{-1} N_\beta \theta_k
=
\lim_{N\to \infty}
\dfrac 1N 
\sum_{k=aN} ^{bN} 
e^{-\beta (u(\ln k) - u(\ln N))  -k/N}
=
\int_a^b \phi(x) dx
\end{equation*}
where
$\phi = e^{-x}$ in regime (1), 
$\phi=x^{-\beta} e^{-x}$ in regime (2)
and 
$\phi = e^{-x}$ in regime (3).
\end{Lem}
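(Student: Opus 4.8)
The plan is to verify the stated limit as a Riemann-sum convergence, treating each of the three regimes separately according to the asymptotics of $u' $. In all three cases the summand is
\[
g_{N,k} := e^{-\beta(u(\ln k) - u(\ln N)) - k/N},
\]
and by the mean value theorem, as recalled after Condition~\ref{condition}, we may write $u(\ln k) - u(\ln N) = u'(\ln y_{N,k})\,(\ln k - \ln N) = u'(\ln y_{N,k})\ln(k/N)$ with $y_{N,k}$ between $k$ and $N$. Thus
\[
g_{N,k} = (k/N)^{-\beta u'(\ln y_{N,k})}\, e^{-k/N}.
\]
First I would establish a uniform exponential tail bound so that the $b=\infty$ case reduces to the finite-interval case: since $u'$ is bounded, say $|u'| \le C$, for $k\ge N$ one has $(k/N)^{-\beta u'(\ln y_{N,k})}\le (k/N)^{\beta C}$, and hence $g_{N,k} \le (k/N)^{\beta C} e^{-k/N}$, which makes $N^{-1}\sum_{k\ge bN} g_{N,k}$ uniformly small for $b$ large; for $a\le k/N\le b$ bounded the factor $(k/N)^{-\beta u'(\ln y_{N,k})}$ is bounded above and below by positive constants, so no integrability issue arises near any fixed $b<\infty$. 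This lets me restrict attention to $\sum_{k=aN}^{bN}$ with $0<a<b<\infty$ (the contribution of $k/N$ near $0$ being handled by the same tail-type estimate together with the fact that in regime (2) $\phi = x^{-\beta}e^{-x}$ is integrable near $0$ for $\beta<1$).

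Next, on a fixed window $aN\le k\le bN$, I would show pointwise convergence of the summand to the claimed density. In regimes (1) and (3), $u'(\ln y_{N,k}) \to 0$ (case $1\ll\Eb_k\ll\ln k$), or $\beta=0$, so $(k/N)^{-\beta u'(\ln y_{N,k})} \to 1$ uniformly for $k/N \in [a,b]$ — indeed $y_{N,k}\to\infty$ uniformly, and $\ln(k/N)$ is bounded, so $\beta u'(\ln y_{N,k})\ln(k/N)\to 0$. Hence $g_{N,k}\to e^{-k/N}\to e^{-x}$ and the Riemann sum $N^{-1}\sum_{k=aN}^{bN} g_{N,k}\to\int_a^b e^{-x}dx$. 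In regime (2), $u'(\ln y_{N,k})\to 1$ uniformly on the window (again since $y_{N,k}\to\infty$), so $(k/N)^{-\beta u'(\ln y_{N,k})}\to (k/N)^{-\beta}$ uniformly, giving $g_{N,k}\to x^{-\beta}e^{-x}$ and the sum converging to $\int_a^b x^{-\beta}e^{-x}dx$. The convergence of the Riemann sums is justified by noting that $x\mapsto \phi(x)$ is continuous on $[a,b]$ and the $g_{N,k}$ converge to $\phi(k/N)$ uniformly in $k$ on the window, so $\big|N^{-1}\sum_{k=aN}^{bN} g_{N,k} - N^{-1}\sum \phi(k/N)\big|\to 0$ while $N^{-1}\sum\phi(k/N)\to\int_a^b\phi$.

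I expect the main (though modest) obstacle to be the $b=\infty$ case in regime (2) combined with the behavior near $x=0$: one must control the sum over large $k$ uniformly using only $|u'|\le C$ and $\lim u' = 1$, and confirm that $\int_a^\infty x^{-\beta}e^{-x}dx < \infty$ (immediate) and, if one wants $a=0$, that $\int_0^b x^{-\beta}e^{-x}dx<\infty$, which holds precisely because $0<\beta<1$. The cleanest route is: given $\e>0$ choose $b_0$ with $\int_{b_0}^\infty x^{\beta C}e^{-x}dx<\e$ and (by the pointwise argument above) $N$ large so that $\big|N^{-1}\sum_{k=aN}^{b_0 N} g_{N,k} - \int_a^{b_0}\phi\big|<\e$, and bound $N^{-1}\sum_{k\ge b_0 N} g_{N,k}\le N^{-1}\sum_{k\ge b_0N}(k/N)^{\beta C}e^{-k/N}\to\int_{b_0}^\infty x^{\beta C}e^{-x}dx<\e$; combining and letting $\e\to 0$ gives the claim. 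The argument for general $0\le a<b\le\infty$ is then a matter of splitting $[a,b]$ into $[a,b_0]\cup[b_0,b]$ and applying the pieces above.
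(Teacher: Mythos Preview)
Your approach is essentially the same as the paper's: both write $g_{N,k} = (k/N)^{-\beta u'(\ln y_{N,k})}e^{-k/N}$ via the mean value theorem, use that $y_{N,k}\to\infty$ forces $u'(\ln y_{N,k})\to 0$ or $1$, and conclude by a dominated-convergence type argument. The paper packages this as a single dominated convergence over $(0,\infty)$ rather than splitting into tail/middle/near-zero pieces, but the substance is the same.

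One point needs more care. Your treatment of the contribution near $x=0$ in regime (2) is incomplete as written. The ``same tail-type estimate'' you invoke was derived for $k\ge N$ and gives the bound $(k/N)^{\beta C}$; for $k\le N$ the analogous crude bound from $|u'|\le C$ alone is $(k/N)^{-\beta C}$, and $x^{-\beta C}$ is not integrable near $0$ when $\beta C\ge 1$. Simply knowing that $\phi(x)=x^{-\beta}e^{-x}$ is integrable does not help, since you need a dominating function for $g_{N,k}$, not for $\phi$. The fix (which is exactly what the paper does) is to use $u'\to 1$ rather than just boundedness: pick $\beta_1\in(\beta,1)$ and $m_\beta$ with $u'(x)<\beta_1/\beta$ for $x\ge\ln m_\beta$; then for $k\ge m_\beta$ one has $g_{N,k}\le (k/N)^{-\beta_1}e^{-k/N}$, which is integrable since $\beta_1<1$, while the finitely many terms with $k<m_\beta$ contribute at most $m_\beta N_\beta/N\to 0$.
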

\begin{proof}
We will show the lemma in regime (2), that is when $\Eb_k \sim \ln k$ and $0<\beta<1$
as the other regime (3), when $1\ll \Eb_k\ll \ln k$ and $\beta>0$, can be proved in a similar way, and regime (1), when $\beta=0$ is more trivial.
We will also suppose $a=0$, $b=\infty$, as the argument is the same for any other pair $a,b$.
Define 
$$\Phi_N(x) = \sum_{k=1}^{\infty} e^{-\beta (u(\ln k) - u(\ln N))-k/N} \chi_{(\frac{k-1}{N},\frac{k}{N}]}(x).$$
We need only show that  $\lim_{N\to \infty}\int_0^{\infty} \Phi_N(x) dx = \int_0^\infty \phi(x) dx$ to finish.

 By the mean value theorem, 
$u(\ln k) - u(\ln N) = u'(x_{k,N}) \ln \dfrac kN$, where $x_{k,N}$ is between $\ln k$ and $\ln N$.
Fix $ \beta_1$ such that $\beta<\beta_1<1$.
Since $u'(x)\to 1$ as $x\to \infty$, we may find $m_\beta$ such that $0<u'(x) < \dfrac{\beta_1}{\beta}$, for all $x\geq \ln m_\beta$.
Therefore,  $\Phi_N(x) \leq x^{-\beta_1}e^{-x}$ for $\frac{m_\beta}N<x\leq 1$ and $\Phi_N(x) \leq e^{-x}$ for $x>1$.

By dominated convergence we obtain $\int_{m_\beta /N}^{\infty} \Phi_N(x)dx \to \int_0 ^{\infty} x^{-\beta}e^{-x}dx$.
Also, the remaining term $\int_0^{m_\beta/N} \Phi_N(x)dx \leq \frac{m_{\beta}N_\beta}{N}$ vanishes as $N_\beta = o(N)$ for $0<\beta<1$.  This completes the argument.
\end{proof}

\begin{Lem} \label{invariant measures are local equli}
For all $c$ such that $0\leq c < c_0$, we have 
\begin{equation}
\label{total var difference}
\lim_{N\to \infty}\dfrac 1N \sum_{k=1}^{\infty} 
 \left|
 N_{\beta}\rho_{k,c} - \overline \rho_{N,k}
 \right | =0,
 \end{equation}
  where $\overline \rho_{N,k}= N\int_{(k-1)/N}^{k/N} \phi_c(x) dx$.  As an immediate consequence,
the product invariant measures $\left\{\Pam_{c,N}\right\}_{N\in \N}$, with Geometric marginals, are local equilibrium measures corresponding to $\rho_0=\phi_c$.
\end{Lem}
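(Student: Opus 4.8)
The stated consequence is immediate once \eqref{total var difference} is proved: $\Pam_{c,N}$ is a product measure with Geometric marginals of parameters $\theta_{k,c}$ and means $\rho_{k,c}$, so condition~(1) in the definition of a local equilibrium measure holds with $\theta_{N,k}\equiv\theta_{k,c}$; it is trivially stochastically bounded by $\Pam_{c,N}$ for the given $c<c_0$, which is condition~(3); and condition~(2) for the profile $\rho_0=\phi_c$ is precisely \eqref{total var difference}. So the task is to establish \eqref{total var difference}, and I would treat the three regimes in turn.

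When $\beta=0$ (regime (1)) one has $N_\beta=1$ and $\theta_{k,c}=ce^{-k/N}$, hence $\rho_{k,c}=\tfrac{ce^{-k/N}}{1-ce^{-k/N}}=\phi_c(k/N)$ exactly, while $\overline\rho_{N,k}=N\int_{(k-1)/N}^{k/N}\phi_c\,dx$. Thus the sum in \eqref{total var difference} is the error of approximating $\int_0^\infty\phi_c\,dx$ by the Riemann sum $\tfrac1N\sum_k\phi_c(k/N)$; since $\phi_c$ is continuous, bounded, monotone decreasing and in $L^1(\R^+)$, each summand is at most $\phi_c((k-1)/N)-\phi_c(k/N)$, so the total is at most $N^{-1}\phi_c(0)\to0$.

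When $\beta>0$ (regimes (2) and (3)), set $\theta^*:=c/c_0<1$; since $e^{-\beta\Eb_k}\le 1/c_0$ for all $k$, one has $\theta_{k,c}\le\theta^*$, hence
\begin{equation*}
N_\beta\theta_{k,c}\ \le\ N_\beta\rho_{k,c}\ =\ \frac{N_\beta\theta_{k,c}}{1-\theta_{k,c}}\ \le\ \frac{1}{1-\theta^*}\,N_\beta\theta_{k,c}\qquad(k\ge1),
\end{equation*}
and recall $\theta_{k,c}=c\theta_k$ and $\phi_c=c\phi$ in these two regimes. Fix $0<a<b<\infty$. On the window $aN\le k\le bN$ we have $\Eb_k\to\infty$ uniformly, so $\theta_{k,c}\to0$ uniformly there, giving $N_\beta\rho_{k,c}=N_\beta\theta_{k,c}+o(1)$ uniformly over the window; moreover, writing $N_\beta\theta_{k,c}=ce^{-\beta(u(\ln k)-u(\ln N))-k/N}$ and using $u(\ln k)-u(\ln N)=u'(y)\ln(k/N)$ for $y$ between $\ln k$ and $\ln N$ together with $u'\to1$ in regime (2) (resp.\ $u'\to0$ in regime (3)), one finds $N_\beta\theta_{k,c}\to\phi_c(k/N)$ uniformly for $k/N\in[a,b]$, exactly as in the proof of Lemma \ref{lem: theta k limit to rho}. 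Since $\phi_c$ is uniformly continuous on $[a,b]$, also $\overline\rho_{N,k}\to\phi_c(k/N)$ uniformly there, and therefore $\tfrac1N\sum_{aN\le k\le bN}|N_\beta\rho_{k,c}-\overline\rho_{N,k}|\to0$ for each fixed $a,b$. For the remaining indices, using the sandwich above together with Lemma \ref{lem: theta k limit to rho} applied on $(0,a)$ and on $(b,\infty)$ and the corresponding Riemann-sum limits for $\overline\rho_{N,k}$,
\begin{equation*}
\limsup_{N\to\infty}\ \frac1N\!\!\sum_{k<aN\text{ or }k>bN}\!\!\big(N_\beta\rho_{k,c}+\overline\rho_{N,k}\big)\ \le\ \Big(\tfrac{1}{1-\theta^*}+1\Big)\Big(\int_0^a\phi_c\,dx+\int_b^\infty\phi_c\,dx\Big),
\end{equation*}
which tends to $0$ as $a\to0$ and $b\to\infty$ because $\phi_c\in L^1(\R^+)$. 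Combining the window and tail estimates yields \eqref{total var difference}.

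The step needing the most care is the passage from convergence of the \emph{sums} $\tfrac1N\sum_kN_\beta\theta_{k,c}\to\int_0^\infty\phi_c$ (which is essentially Lemma \ref{lem: theta k limit to rho}) to the stronger $L^1$-type statement \eqref{total var difference}: this requires the uniform-on-$[a,b]$ convergence used above together with a uniform-integrability-type tail bound, and the only delicate point there is regime (2) near $x=0$, where $\phi_c(x)=cx^{-\beta}e^{-x}$ is unbounded, so that the control of $\tfrac1N\sum_{k<aN}N_\beta\theta_{k,c}$ must come from the $L^1$-integrability of $\phi_c$ (via Lemma \ref{lem: theta k limit to rho}, whose proof already isolates the sharpened bound $N_\beta\theta_k\le(N/k)^{\beta_1}e^{-k/N}$ with $\beta_1<1$ obtained from $u'\to1$) rather than from a uniform bound. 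Everything else is routine.
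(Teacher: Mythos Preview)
Your proof is correct and follows essentially the same approach as the paper: split the sum into a compact window $aN\le k\le bN$ and tails, control the tails via the sandwich $N_\beta\theta_{k,c}\le N_\beta\rho_{k,c}\le(1-\theta^*)^{-1}N_\beta\theta_{k,c}$ together with Lemma \ref{lem: theta k limit to rho}, and handle the window by direct uniform convergence. The paper organizes the window argument as an explicit three-term decomposition $I_1+I_2+I_3$ (linearization error, convergence of $N_\beta\theta_{k,c}$ to $\phi_c(k/N)$, Riemann-sum error), whereas you collapse these into a single uniform-convergence statement; your telescoping bound in the $\beta=0$ case is slightly more explicit than the paper's dominated-convergence argument, but the substance is the same.
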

\begin{proof}
Recall that $\theta_{k,c} =c e^{-\beta \Eb_k - k/N}$ and $E_{\Pam_{c,N}}\eta(k) = \rho_{k,c} = \theta_{k,c}/(1-\theta_{k,c})$ (cf. \eqref{rho k c}).
We now verify the limit \eqref{total var difference}. 
When $\beta=0$, we have $N_\beta =1$ and $\phi_c = \dfrac{ce^{-x}}{1- c e^{-x}}$.
Since $\phi_c$ is decreasing, we have $ \rho_{k,c}=\dfrac{ce^{-k/N}}{1- c e^{-k/N}} < \overline \rho_{N} $.
Then, the left-hand side of \eqref {total var difference} equals to
$$
\int_0^\infty \dfrac{ce^{-x}}{1- ce^{-x}}dx
-
\lim_{N\to \infty}\dfrac 1N \sum_{k=1}^{\infty}  \dfrac{ce^{-k/N}}{1-ce^{-k/N}}, 
$$
which clearly vanishes as $N\to \infty$ by dominated convergence.

For the remaining two regimes when $\beta>0$, 
we will split the summation in \eqref{total var difference} into two parts: $aN \leq k \leq bN$ and the rest, for an $0<a<b$ that we will specify. 
In fact, it will be enough to show, for any $\e>0$, that we can find $a>0$ small enough and $b>0$ big enough such that 
\begin{equation} \label {sum: leq aN}
\lim_{N\to\infty}\dfrac 1N \sum_{\substack{1\leq k \leq aN\\ k> bN} }
 \left|
 \dfrac{N_{\beta}\theta_{k,c}}{1- \theta_{k,c}} - \overline \rho_{N,k}
 \right |
 \leq \e 
\end{equation} 
and, for all $b>a>0$, that
\begin{equation}  \label {sum: geq aN}
\lim_{N\to\infty}\dfrac 1N \sum_{k=aN}^{bN}
 \left|
 \dfrac{N_{\beta}\theta_{k,c}}{1- \theta_{k,c}} - \overline \rho_{N,k}
 \right |
=0.
\end{equation}

To verify \eqref{sum: leq aN},
$$
\dfrac 1N \sum_{\substack{1\leq k \leq aN\\ k> bN} } 
 \left|
 \dfrac{N_{\beta}\theta_{k,c}}{1- \theta_{k,c}} - \overline \rho_{N,k}
 \right | 
 \leq
 \dfrac 1N \sum_{\substack{1\leq k \leq aN\\ k\geq bN} }
 \dfrac{N_{\beta}\theta_{k,c}}{1- \theta_{k,c}}
 + 
 \int_{(0,a)\cup(b,\infty)} \phi_c dx .
$$
Recall $c_0 = \min_k e^{\beta \Eb_k}$.  Since $\theta_{k,c} = c e^{-\beta \Eb_k - k/N} \leq \dfrac {c}{c_0}$, by Lemma \ref{lem: theta k limit to rho}, we have 
$$ \dfrac 1N \sum_{\substack{1\leq k \leq aN\\ k> bN} }
  \dfrac{N_{\beta}\theta_{k,c}}{1- \theta_{k,c}}
\leq \dfrac{c_0}{c_0 - c}  \dfrac 1N
 \sum_{\substack{1\leq k \leq aN\\ k> bN} }  N_{\beta}\theta_{k,c}
  \to \dfrac{c_0}{c_0 - c}  \int_{(0,a)\cup(b,\infty)} \phi_c dx.$$
Then, \eqref{sum: leq aN} follows as $\phi_c \in L^1(\R^+)$.

It remains to show \eqref {sum: geq aN}. By adding and subtracting, for each $N$ the left side of  \eqref{sum: geq aN} is bounded by
\begin{equation*}
\begin{split}
\dfrac 1N \sum_{k=aN}^{bN} 
 \left|
 \dfrac{N_{\beta}\theta_{k,c}}{1- \theta_{k,c}} - N_{\beta}\theta_{k,c}
 \right |
 +
 \dfrac 1N \sum_{k=aN}^{bN} 
  \left|
 N_{\beta}\theta_{k,c} - c (k/N)^{-\beta u'(\infty)} e^{-k/N}
 \right |
 \\
 +\dfrac 1N \sum_{k=aN}^{bN} 
  \left|
 c (k/N)^{-\beta u'(\infty)} e^{-k/N} - \overline \rho_{N,k}
 \right |
 =: I_1 + I_2 + I_3
\end{split}
\end{equation*}
where $u'(\infty) = \lim_{x\to \infty} u'(x)$ takes value either $0$ or $1$.

The term $I_1$ is trivially bounded by
$$
\max_{aN\leq k\leq bN}\dfrac {\theta_{k,c}}{1-\theta_{k,c}} N^{-1 }\sum_{k=aN}^{bN} N_{\beta}\theta_{k,c}.
$$
Recall that $\theta_{k,c} =c e^{-\beta \Eb_k - k/N}$ and $\Eb_k\to \infty$ as $k\to\infty$. Then, $\max_{aN\leq k\leq bN}\dfrac {\theta_{k,c}}{1-\theta_{k,c}}$ vanishes as $N\to \infty$.
Since also $N^{-1 }\sum_{k=aN}^{bN} N_{\beta}\theta_{k,c}\to \int_a^b \phi_c dx <\infty$ (Lemma \ref{lem: theta k limit to rho}) is bounded, the term $I_1$ vanishes.

For term $I_2$, we spell out $N_\beta \theta_{k,c}$ as 
$$
ce^{-\beta (\Eb(k) - \Eb(N)) -k/N}.
$$ 
By the mean value theorem, we have $\Eb(k) - \Eb(N) = \ln \dfrac kN u'(y_{k,N})$, where $y_{k,N}$ is in between $\ln k$ and $\ln N$.
Then, $I_2$ is less than or equal to
$$
 \max_{aN\leq k\leq bN}\left\{ \left| ( k/N)^{\beta (u'(y_{k,N}) - u'(\infty))}  -1\right| \right\}
 \dfrac 1N  \sum_{k=aN}^{bN}  N_{\beta}\theta_{k,c}.
$$
We observed in estimating $I_1$ above that $N^{-1}\sum_{k=aN}^{bN}N_\beta \theta_{k,c}$ is bounded.  Hence, $I_2$ vanishes as $N\to \infty$.
 
We now address the last term $I_3$.  Observe, as $\phi_c$ is decreasing, that 
$$
I_3 \ = \ \int_{a-\frac1N}^b \phi_c(x)dx
-
\dfrac 1N \sum_{k=aN}^{bN} 
 c (k/N)^{-\beta u'(\infty)} e^{-k/N},
$$
which vanishes as $N\to \infty$ by the dominated convergence theorem.
\end{proof}

We now give an useful mean and variance estimate.
\begin{Lem}
\label{Lem: mean variance mu N}
For all $c$ such that $0\leq c < c_0$
we have that
\begin{eqnarray} 
&&E_{\Pam_{c,N}} \sum_{k=1}^\infty \eta (k) = \sum_{k=1}^\infty \rho_{k,c} = O\big(N N^{-1}_\beta\big)
\ \ {\rm and \ \ }\nonumber \\
&&\ \ \ \ \ \ \ \ \ \ \ \ \ \ \ \ \ \ \ \ \ \ \ \ \frac{N_\beta^2}{N^2}\sum_{k=1}^\infty \text{Var}_{\Pam_{c,N}} (\eta(k)) = \frac{N_\beta^2}{N^2} \sum_{k=1}^\infty \left[ \rho^2_{k,c} + \rho_{k,c}\right]\rightarrow 0.
\label{eqn: particle number estimate}
\end{eqnarray}
\end{Lem}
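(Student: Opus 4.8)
The plan is to estimate the two sums directly using the explicit form $\rho_{k,c} = \theta_{k,c}/(1-\theta_{k,c})$ with $\theta_{k,c} = ce^{-\beta\Eb_k - k/N}$, and the elementary observation that $\rho^2_{k,c}+\rho_{k,c} = \rho_{k,c}(1+\rho_{k,c}) = \theta_{k,c}/(1-\theta_{k,c})^2$, so that everything reduces to controlling $\sum_k N_\beta\theta_{k,c}$ and $\sum_k N_\beta^2\theta_{k,c}$ together with the denominators $1-\theta_{k,c}$. The first step is to record the uniform bound $\theta_{k,c}\le c/c_0 < 1$, valid since $\Eb_k \ge \log c_0/\beta$ when $\beta>0$ (and trivially when $\beta=0$, where $c<c_0=1$). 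This bounds $1/(1-\theta_{k,c})$ uniformly by $c_0/(c_0-c)$, so up to a constant we may replace $\rho_{k,c}$ by $\theta_{k,c}$ and $\rho^2_{k,c}+\rho_{k,c}$ by $\theta_{k,c}$ as well in both estimates.

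For the mean, I would then write
$$
\sum_{k=1}^\infty \rho_{k,c} \le \frac{c_0}{c_0-c}\sum_{k=1}^\infty \theta_{k,c} = \frac{c_0}{c_0-c}\,N N_\beta^{-1}\cdot \frac{N_\beta}{N}\sum_{k=1}^\infty e^{-\beta(\Eb_k - \Eb_N) - k/N},
$$
and invoke Lemma \ref{lem: theta k limit to rho} with $a=0$, $b=\infty$ to see that $\frac{N_\beta}{N}\sum_k e^{-\beta(\Eb_k-\Eb_N)-k/N}\to \int_0^\infty \phi(x)\,dx <\infty$, where $\phi$ is $e^{-x}$, $x^{-\beta}e^{-x}$ or $e^{-x}$ in regimes (1), (2), (3) respectively. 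Since $\phi\in L^1(\R^+)$ in all three cases, the sum is bounded, giving $\sum_k\rho_{k,c} = O(NN_\beta^{-1})$. For the variance, the same manipulation gives
$$
\frac{N_\beta^2}{N^2}\sum_{k=1}^\infty \big[\rho^2_{k,c}+\rho_{k,c}\big] \le \Big(\frac{c_0}{c_0-c}\Big)^2 \frac{N_\beta^2}{N^2}\sum_{k=1}^\infty \theta_{k,c} = \Big(\frac{c_0}{c_0-c}\Big)^2 \frac{N_\beta}{N}\cdot\frac{N_\beta}{N}\sum_{k=1}^\infty e^{-\beta(\Eb_k-\Eb_N)-k/N},
$$
and since the last sum is $O(1)$ by Lemma \ref{lem: theta k limit to rho} while $N_\beta/N\to 0$ (trivially $N_\beta=1$ when $\beta=0$, and $N_\beta=o(N)$ by the regime description when $\beta>0$), the right-hand side tends to $0$.

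The only genuine subtlety — and the step I would flag as the main obstacle, though it is modest — is justifying that $\frac{N_\beta}{N}\sum_{k\ge 1}e^{-\beta(\Eb_k-\Eb_N)-k/N}$ is a convergent Riemann-type sum with finite limit, which is precisely the content of Lemma \ref{lem: theta k limit to rho}; one should double-check that the small-$k$ tail (corresponding to $x$ near $0$ in regime (2), where $\phi_c(x)=cx^{-\beta}e^{-x}$ is integrable but unbounded) does not contribute a divergent piece, but this is already handled in the proof of that lemma via the bound $\int_0^{m_\beta/N}\Phi_N(x)\,dx \le m_\beta N_\beta/N = o(1)$. With that input in hand, both displayed estimates follow immediately, completing the proof.
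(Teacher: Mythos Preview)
Your proof is correct and in fact more direct than the paper's. The key difference is that you exploit the uniform bound $\theta_{k,c}\le c/c_0<1$ (equivalently $1/(1-\theta_{k,c})\le c_0/(c_0-c)$) to reduce both the mean and variance estimates to a single appeal to Lemma~\ref{lem: theta k limit to rho}, together with $N_\beta/N\to 0$. The paper instead routes the mean estimate through Lemma~\ref{invariant measures are local equli} (the stronger $\ell^1$-type convergence $\frac1N\sum_k|N_\beta\rho_{k,c}-\overline\rho_{N,k}|\to 0$), and for the variance splits $\sum(\rho_{k,c}^2+\rho_{k,c})$ into the linear and quadratic parts, handling the latter via a somewhat indirect bound $\sum(N_\beta\rho_{k,c})^2 \le 2(\sum|N_\beta\rho_{k,c}-\overline\rho_{N,k}|)^2 + 2\sum\overline\rho_{N,k}^2$ followed by an absolute-continuity argument to control $N^{-2}\sum\overline\rho_{N,k}^2$. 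Your observation that $\rho_{k,c}^2+\rho_{k,c}=\theta_{k,c}/(1-\theta_{k,c})^2\le (c_0/(c_0-c))^2\theta_{k,c}$ bypasses all of this. The paper's route has the incidental advantage that Lemma~\ref{invariant measures are local equli} is needed elsewhere (e.g.\ in the mass-conservation Lemma~\ref{lem mass conserve}), so establishing it is not wasted effort; but for the present lemma your argument is cleaner. One cosmetic point: in your displayed identities a factor of $c$ is dropped when rewriting $\sum_k\theta_{k,c}$ in terms of $\sum_k e^{-\beta(\Eb_k-\Eb_N)-k/N}$, though this is harmless for the $O(\cdot)$ and $o(1)$ conclusions.
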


\begin{proof}
We first consider the means:
$$
\frac{N_\beta}{N}E_{\Pam_{c,N}}\sum_{k=1}^\infty \eta(k)
=
\dfrac {N_\beta}N \sum_{k=1}^{\infty} \rho_{k,c}.
$$ 
By \eqref{total var difference}, $ \lim_{N\to\infty} \dfrac 1N \sum_{k=1}^{\infty} |N_{\beta} \rho_{k,c} - \overline \rho_{N,k}| =0$, where $\overline \rho_{N,k} = N\int_{(k-1)/N}^{k/N}\rho_c(x)dx$.
As $\dfrac1N \sum_{k=1}^\infty   \overline \rho_{N,k} = \int_0^{\infty} \phi_cdx <\infty$, then the estimate on sum of means in \eqref{eqn: particle number estimate} follows.

Next, we consider the variances.
Since $N_\beta = o(N)$ and $N^{-1}\sum_{k=1}^\infty N_\beta \rho_{k,c}<\infty$ by the first estimate in \eqref{eqn: particle number estimate}, 
we have that $N_\beta^2 N^{-2} \sum_{k=1}^{\infty} \rho_{k,c}$ vanishes as $N\to \infty$.
For the term $N_\beta^2 N^{-2} \sum_{k=1}^{\infty} \rho_{k,c}^2$, we 
use $\sum (N_\beta\rho_{k,c})^2 \leq 2( \sum | N_\beta \rho_{k,c} -\overline \rho_{N,k}| )^2 + 2\sum\overline \rho_{N,k}^2 $.
Since $N^{-1}\sum |N_\beta \rho_{N,k} - \overline\rho_{N,k}| \to 0$, it suffices to show that
$\lim_{N\to\infty} \dfrac1{N^2} \sum_{k=1}^{\infty} \overline\rho_{N,k}^2 =0$.

To this end, let $\hat \rho_N = \max_{k\geq1} \overline\rho_{N,k}$. 
Then, $ \dfrac1{N^2} \sum_{k=1}^{\infty} \overline\rho_{N,k}^2 \leq \dfrac{\hat \rho_N}{N^2}\sum_{k=1}^{\infty} \overline\rho_{N,k}$.
Now, $N^{-1}\sum_{k=1}^\infty \overline\rho_{N,k}=\int_0^{\infty}\phi_cdx < \infty$.  The desired limit holds since, by absolute continuity of the Lebesgue integral, $N^{-1} \hat \rho_N\to 0$ as $N\to \infty$.
\end{proof}

%%%%%%%%%%%%%%%%%%%%%%%%%%%%%%%%%%%%%%%%%%%%%%%%%%%%%%%%%%
%%%%%%%%%%%%%%%%%%%%%%%%%%%%%%%%%%%%%%%%%%%%%%%%%%%%%%%%%%
%%%%%%%%%%%%%%%%%%%%%%%%%%%%%%%%%%%%%%%%%%%%%%%%%%%%%%%%%%
%%%%%%%%%%%%%%%%%%%%%%%%%%%%%%%%%%%%%%%%%%%%%%%%%%%%%%%%%%

\subsection{Properties of local equilibria $\mu^N$} \label {subsection: mu N}

%%%%%%%%%%%%%%%%%%%%%%%%%%%%%%%%%%%%%%%%%%%%%%%%%%%%%%%%%%
%%%%%%%%%%%%%%%%%%%%%%%%%%%%%%%%%%%%%%%%%%%%%%%%%%%%%%%%%%
%%%%%%%%%%%%%%%%%%%%%%%%%%%%%%%%%%%%%%%%%%%%%%%%%%%%%%%%%%
%%%%%%%%%%%%%%%%%%%%%%%%%%%%%%%%%%%%%%%%%%%%%%%%%%%%%%%%%%
We now observe that the local equilibria $\mu^N$ satisfy Condition \ref{condition 1}.
\begin{Prop} \label {prop: mu N in nu N}
Local equilibrium measures $\mu^N$ satisfy Condition \ref{condition 1}.
\end{Prop}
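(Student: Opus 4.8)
The plan is to verify the three items of Condition \ref{condition 1} with $\nu^N=\mu^N$ and $m_{N,k}=\rho_{N,k}$, the mean of the $k$-th marginal $\mu^N_k$. Items (1) and (2) of Condition \ref{condition 1} are immediate from the definition of a local equilibrium measure: $\mu^N$ is a product measure, items (1)--(2) of that definition give precisely $\frac1N\sum_k|N_\beta\rho_{N,k}-\overline\rho_{N,k}|\to0$, and item (3) there is the stochastic bound $\mu^N\leq\Pam_{c,N}$ for some $0\leq c<c_0$, equivalently $\theta_{N,k}\leq\theta_{k,c}$ for all $k$. So the only real content is item (3) of Condition \ref{condition 1}, the relative-entropy bound $H(\mu^N\mid\Pam_{c,N})=O(NN_\beta^{-1})$.

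For this, I would use that both $\mu^N$ and $\Pam_{c,N}$ are product measures, so the relative entropy tensorizes: $H(\mu^N\mid\Pam_{c,N})=\sum_{k\geq1}H(\mu^N_k\mid\Pam_{\beta,c,N,k})$. For Geometric laws with parameters $\theta_{N,k}$ and $\theta_{k,c}$, a one-line computation gives
\begin{equation*}
H(\mu^N_k\mid\Pam_{\beta,c,N,k})=\ln\frac{1-\theta_{N,k}}{1-\theta_{k,c}}+\rho_{N,k}\ln\frac{\theta_{N,k}}{\theta_{k,c}},
\end{equation*}
with the convention that the last term is $0$ when $\theta_{N,k}=0$. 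The stochastic domination $\theta_{N,k}\leq\theta_{k,c}$ now makes the second term nonpositive, while the first obeys $\ln\frac{1-\theta_{N,k}}{1-\theta_{k,c}}\leq\frac{\theta_{k,c}-\theta_{N,k}}{1-\theta_{k,c}}\leq\frac{\theta_{k,c}}{1-\theta_{k,c}}=\rho_{k,c}$. Summing over $k$, $H(\mu^N\mid\Pam_{c,N})\leq\sum_{k\geq1}\rho_{k,c}$, which is $O(NN_\beta^{-1})$ by the mean estimate \eqref{mean_var_pam} of Lemma \ref{Lem: mean variance mu N}. This verifies item (3) and completes the proof.

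There is no substantial obstacle here; the one point requiring a little care is the degenerate marginals, i.e. sites $k$ with $\theta_{N,k}=0$ (so $\mu^N_k=\delta_0$), where the entropy is simply $\ln\frac{1}{1-\theta_{k,c}}$, still $\leq\rho_{k,c}$ and consistent with the displayed formula. One also notes $1-\theta_{k,c}\geq1-c/c_0>0$ uniformly in $k$ and $N$, since $\theta_{k,c}=ce^{-\beta\Eb_k-k/N}\leq ce^{-\beta\Eb_k}\leq c/c_0$ by the definition of $c_0$; this keeps all the logarithms and the $\rho_{k,c}$ finite and bounds them cleanly.
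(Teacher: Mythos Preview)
Your proof is correct and, in particular, your handling of the entropy bound is cleaner than the paper's. Both arguments tensorize the entropy over sites and use the explicit formula $H(\text{Geom}(p)\mid\text{Geom}(q))=\ln\frac{1-p}{1-q}+\frac{p}{1-p}\ln\frac{p}{q}$ together with the stochastic domination $p=\theta_{N,k}\leq\theta_{k,c}=q$. The paper (in its Lemma \ref{entropy: ln k}) bounds the per-site entropy by $-\ln(1-\theta_{k,c})$ and then, to sum this, splits into the three energy regimes: for $\beta=0$ it compares to $-N\int_0^\infty\ln(1-e^{-x})\,dx$, and for $\beta>0$ it isolates the finitely many indices where $c_0$ is attained, uses $-\ln(1-x)\leq c_1 x$ on a compact subinterval of $[0,1)$, and finally invokes Lemma \ref{lem: theta k limit to rho}. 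You instead go one step further with the elementary $\ln\frac{1-p}{1-q}\leq\frac{q-p}{1-q}\leq\frac{q}{1-q}=\rho_{k,c}$, which lets you sum directly and cite the already-established mean bound $\sum_k\rho_{k,c}=O(NN_\beta^{-1})$ from Lemma \ref{Lem: mean variance mu N}. This buys you a uniform argument across all three regimes with no case analysis; the paper's route gives a slightly sharper per-site bound (since $-\ln(1-q)\leq q/(1-q)$) but then has to work harder to sum it.
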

\begin{proof}
First, by the definition of $\mu^N$,
parts (1) and (2) of Condition \ref{condition 1} are met.
In Lemma \ref{entropy: ln k} below, we show that the relative entropy estimate, part (3), holds.
\end{proof}

\begin{Lem}  \label {entropy: ln k}
Fix $c$ such that $0\leq c\leq c_0$ and assume that $\mu^N\leq \Pam_{c,N}$ .
Then there exists a constant $C$ such that  $H(\mu^N | \Pam_{c,N}) \leq CN N_{\beta}^{-1}$ holds for all $N$.
\end{Lem}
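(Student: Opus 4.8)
The plan is to compute the relative entropy $H(\mu^N\,|\,\Pam_{c,N})$ explicitly using the product structure of both measures, and then to bound each single-site contribution. Since $\mu^N = \prod_k \mu^N_k$ with $\mu^N_k$ Geometric of parameter $\theta_{N,k}$, and $\Pam_{c,N} = \prod_k \Pam_{\beta,c,N,k}$ with $\Pam_{\beta,c,N,k}$ Geometric of parameter $\theta_{k,c} = c\theta_k$, the entropy splits as a sum over sites:
\begin{equation*}
H(\mu^N\,|\,\Pam_{c,N}) = \sum_{k=1}^\infty H\big(\mu^N_k\,|\,\Pam_{\beta,c,N,k}\big).
\end{equation*}
For two Geometric laws with parameters $\theta$ and $\theta'$ (means $m = \theta/(1-\theta)$, $m' = \theta'/(1-\theta')$), a direct computation gives
\begin{equation*}
H\big({\rm Geo}(\theta)\,|\,{\rm Geo}(\theta')\big) = \ln\frac{1-\theta}{1-\theta'} + m\ln\frac{\theta}{\theta'}.
\end{equation*}
So the first step is to record this formula, with $\theta = \theta_{N,k}$, $\theta' = \theta_{k,c}$, $m = m_{N,k} = \rho_{N,k}$.

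\textbf{Estimating the summands.} The assumption $\mu^N\leq\Pam_{c,N}$, which for Geometric marginals is equivalent to $\theta_{N,k}\leq\theta_{k,c}$, is crucial: it forces $\ln(\theta_{N,k}/\theta_{k,c})\leq 0$ and $\ln\big((1-\theta_{N,k})/(1-\theta_{k,c})\big)\geq 0$, but more usefully it lets me control everything by $\theta_{k,c}$. Using $\ln\frac{1-\theta_{N,k}}{1-\theta_{k,c}} \leq \frac{\theta_{k,c}-\theta_{N,k}}{1-\theta_{k,c}} \leq \frac{\theta_{k,c}}{1-\theta_{k,c}} = \rho_{k,c}$, and noting the second term $\rho_{N,k}\ln(\theta_{N,k}/\theta_{k,c})\leq 0$ is nonpositive, I get the clean bound
\begin{equation*}
H\big(\mu^N_k\,|\,\Pam_{\beta,c,N,k}\big) \leq \rho_{k,c}.
\end{equation*}
Summing over $k$ and invoking the mean estimate from Lemma \ref{Lem: mean variance mu N} (or \eqref{mean_var_pam}), namely $\sum_{k\geq 1}\rho_{k,c} = E_{\Pam_{c,N}}\sum_k\eta(k) = O(NN_\beta^{-1})$, yields $H(\mu^N\,|\,\Pam_{c,N}) = O(NN_\beta^{-1})$, which is the claim. (For the boundary case $c = c_0$, one still has $\theta_{k,c_0}<1$ for each fixed $k$ and $\sum_k\rho_{k,c_0}$ finite for each $N$, though one should double-check the order there via Lemma \ref{app_1}; for the main statements only $c<c_0$ is used, where \eqref{mean_var_pam} applies directly.)

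\textbf{Main obstacle.} The computation is essentially routine; the only genuine point of care is verifying that dropping the nonpositive term $\rho_{N,k}\ln(\theta_{N,k}/\theta_{k,c})$ does not lose too much — but since relative entropy is nonnegative, the upper bound $\rho_{k,c}$ is legitimate regardless, and no cancellation is needed. The real content is purely in reusing the particle-number bound $\sum_k\rho_{k,c} = O(NN_\beta^{-1})$ established earlier. Thus the proof is short: state the Geometric entropy formula, apply $\ln(1+x)\leq x$ together with $\theta_{N,k}\leq\theta_{k,c}$ to bound each summand by $\rho_{k,c}$, and conclude by summation.
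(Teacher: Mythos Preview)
Your approach---splitting into single-site Geometric entropies and bounding each by $\rho_{k,c}$---is the same skeleton as the paper's, and for $c<c_0$ it is actually a bit cleaner, since you can invoke \eqref{mean_var_pam} directly rather than redoing the sum estimate by hand.

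The gap is at the endpoint $c=c_0$, which the lemma statement explicitly allows. Your bound $H\leq\sum_k\rho_{k,c_0}$ is not $O(NN_\beta^{-1})$: Lemma~\ref{app_1} shows precisely that $N_\beta N^{-1}\sum_k\rho_{k,c_0}\to\infty$. Concretely, when $\beta=0$ one has $\sum_k\rho_{k,1}=\sum_k(e^{k/N}-1)^{-1}\sim N\ln N$, and when $\beta>0$ the indices $k_0$ where $c_0=e^{\beta\Eb_{k_0}}$ already contribute $\rho_{k_0,c_0}\sim N/k_0$, which swamps $NN_\beta^{-1}$. So the inequality $H(\mu^N_k|\Pam_{\beta,c,N,k})\leq\rho_{k,c}$, while correct, is too crude at those sites.

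The paper avoids this by using the slightly sharper single-site bound $H\leq-\ln(1-\theta_{k,c})$ and then linearizing $-\ln(1-x)\leq c_1x$ only on a subinterval $[0,\alpha]$ with $\alpha<1$. For $\beta>0$ it isolates the finitely many indices $K_0$ where $c_0e^{-\beta\Eb_k}=1$, bounds their contribution by $O(\ln N)=o(NN_\beta^{-1})$ directly, and for the remaining $k$ obtains $-\ln(1-c_0\theta_k)\leq c_1c_0\theta_k$; summing $\theta_k$ (rather than $\rho_{k,c_0}$) then gives $O(NN_\beta^{-1})$ via Lemma~\ref{lem: theta k limit to rho}. If you only care about $c<c_0$---which is all the main theorems require---your argument is complete as written; to cover $c=c_0$ you need this extra step.
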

\begin{proof}
Let $\zeta$ and $\chi$ be two Geometric distributions with rate $p$ and $q$ respectively.
Assuming $p \leq q$ we have
\begin{equation*} \label {relative entropy two geometrics}
H(\zeta \vert \chi)
= \sum_{n\geq 0} (1-p)^n \ln \frac{1-p}{1-q} \frac{p^n}{q^n} = 
\ln \dfrac{1-p}{1-q}
+
\dfrac{p}{1-p} \ln \dfrac{p}{q}
\leq
\ln \dfrac{1}{1-q}.
\end{equation*}

Suppose now, for $k\geq 1$, that $\zeta=\mu^N_{k}$ and $p= \theta_{N,k}$ and $\chi = \Pam_{\beta,c,N,k}$ and $q = c\theta_k$.  Note, by assumption, that $\theta_{N,k} \leq c\theta_k =ce^{-\beta u(\ln k) -k/N}$.  Then, as $\mu^N$ and $\Pam_{c,N}$ is the product over $\{\mu^N_{k}\}_{k\geq 1}$ and $\{\Pam_{\beta, c,N,k}\}_{k\geq 1}$ respectively, we have
\begin{equation*}
H(\mu^N | \Pam_{c,N}) \leq \sum_{k=1} ^{\infty} \ln \dfrac{1}{1- c_0e^{-\beta u(\ln k)-k/N}}.
\end{equation*}
When $\beta = 0$, we have 
$$ H(\mu^N | \Pam_{c,N}) \leq \sum_{k=1} ^{\infty} \ln \dfrac{1}{1-e^{-k/N}}
\leq -N\int_0^{\infty} \ln (1-e^{-x}) dx =: CN.$$ 
For the cases $\beta >0$, we recall that $c_0 = \min_k e^{\beta \Eb_k}$.  Let $K_0=\{k_{0,j}\}_{1\leq j \leq n}$ be the indices
where $c_0$ is attained.  
The contribution from each $k_{0,j}$ to the relative entropy $H(\mu^N | \Pam_{c,N})$ is bounded above by
\begin{equation*}
\ln \dfrac1{1-e^{-k_{0,j}/N}} = O(\ln N).
\end{equation*}
This order is negligible compared with $NN_{\beta}^{-1}= Ne^{-\beta \Eb_N}=Ne^{-\beta u(\ln N)}$ in the two cases when $u'(\ln N) \to 1$ and $0<\beta<1$ or when $u'(\ln N) \to 0$ and $\beta>0$.  
We will be able to disregard later these $k_{0,j}$'s.

Now, as $u(\ln k) \to \infty$ as $k\to \infty$, find $0<\alpha<1$ such that $0< c_0e^{-\beta u(\ln k)-k/N} \leq \alpha$ for all $N$ and $k\notin K_0$.
Using convexity of $-\ln(1-x)$, there exists $c_1>0$ such that $-\ln(1-x) \leq c_1 x$ on $[0,\alpha]$.
Then, we have
\begin{equation*}
\begin{split}
\sum_{k=1} ^{\infty} \ln \dfrac{1}{1- c_0e^{-\beta u(\ln k)-k/N}}
\leq
c_1c_0 \sum_{k=1} ^{\infty}  e^{-\beta u(\ln k)-k/N}
+
O(\ln N).
\end{split}
\end{equation*}
Multiplying and dividing by the term $NN_{\beta}^{-1}$, we get
\begin{equation} \label {eqn: estimate for general relative entropy}
\begin{split}
H(\mu^N | \Pam_{c,N}) \leq
c_1c_0 NN_{\beta}^{-1} 
\left[
\sum_{k=1} ^{\infty}\dfrac 1N e^{-\beta (u(\ln k) - u(\ln N))  -k/N}
+
\ O(N^{-1} N_\beta \ln N)
\right].
\end{split}
\end{equation}
Now, $N^{-1} N_\beta \ln N$ vanishes as $N\to \infty$
and by Lemma \ref{lem: theta k limit to rho} the summation in \eqref{eqn: estimate for general relative entropy} approaches a finite limit.
The proof is now complete.
\end{proof}
%%%%%%%%%%%%%%%%%%%%%%%%%%%%%%%%%%%%%%%%%%%%%%%%%%%%%%%%%%
%%%%%%%%%%%%%%%%%%%%%%%%%%%%%%%%%%%%%%%%%%%%%%%%%%%%%%%%%%
%%%%%%%%%%%%%%%%%%%%%%%%%%%%%%%%%%%%%%%%%%%%%%%%%%%%%%%%%%
%%%%%%%%%%%%%%%%%%%%%%%%%%%%%%%%%%%%%%%%%%%%%%%%%%%%%%%%%%

\subsection{Properties of $\nu^N$ satisfying Condition \ref{condition 1}} \label {subsec: nu N}
We will establish the items \eqref{total number}, \eqref{nu N var bound}, \eqref{site_particle}, and \eqref{Initial convergence nu N}. 
We start with an estimate on the number of particles in the system.
\begin{Lem}
\label{lem: total number}
We have that `the total expected particle bound' \eqref{total number} holds.
\end{Lem}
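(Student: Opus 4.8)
The proof rests on a single structural feature of the dynamics, namely conservation of the total number of particles, combined with the stochastic domination $\nu^N\le\Pam_{c,N}$ from part (2) of Condition \ref{condition 1} and the mean estimate for $\Pam_{c,N}$ recorded in Lemma \ref{Lem: mean variance mu N}.

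First I would record that the generator $N^2L$ only transports particles between adjacent sites and neither creates nor destroys them, so that $\sum_{k\ge 1}\eta_t(k)=\sum_{k\ge 1}\eta_0(k)$ holds $\P_N$-almost surely for every $t\ge 0$. To make this statement meaningful one must first know that, started from $\nu^N$, the configuration has only finitely many particles: since $\nu^N\le\Pam_{c,N}$ and, by \eqref{eqn: particle number estimate} in Lemma \ref{Lem: mean variance mu N}, $E_{\Pam_{c,N}}\sum_{k\ge1}\eta(k)=\sum_{k\ge1}\rho_{k,c}<\infty$, the random variable $\sum_{k\ge1}\eta(k)$ is $\Pam_{c,N}$-a.s. finite, hence $\nu^N$-a.s. finite by domination, and conservation keeps it finite for all times --- this is precisely the point flagged after the definition of the dynamics, which lets us regard $\eta_\cdot$ as a countable-state chain on which the conservation law is legitimate.

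Granted this, for every $0\le t\le T$ we have $\E_N\sum_{k\ge1}\eta_t(k)=\E_N\sum_{k\ge1}\eta_0(k)=E_{\nu^N}\sum_{k\ge1}\eta(k)$, so the asserted bound is automatically uniform in $t$. To estimate the right-hand side I would apply stochastic domination to the truncations $f_M(\eta)=\sum_{k=1}^M\eta(k)$, each of which is coordinatewise nondecreasing, obtaining $E_{\nu^N}f_M\le E_{\Pam_{c,N}}f_M$, and then let $M\to\infty$ by monotone convergence to get $E_{\nu^N}\sum_{k\ge1}\eta(k)\le E_{\Pam_{c,N}}\sum_{k\ge1}\eta(k)$. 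Finally, \eqref{eqn: particle number estimate} (equivalently the mean estimate in \eqref{mean_var_pam}) gives $E_{\Pam_{c,N}}\sum_{k\ge1}\eta(k)=O(NN_\beta^{-1})$, which is exactly \eqref{total number}.

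The only delicate step is the well-posedness/conservation one: everything hinges on knowing in advance that there are finitely many particles, so that the total count is a genuine finite conserved quantity and the interchange of $\E_N$ with the infinite sum (and of $\lim_M$ with $E_{\nu^N}$) is justified. This is supplied, as above, by the first-moment bound under the dominating measure $\Pam_{c,N}$; once it is in hand, the lemma is immediate.
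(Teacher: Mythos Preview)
Your proof is correct but differs from the paper's in the choice of input from Condition~\ref{condition 1}. The paper also invokes conservation of particles to reduce to time $0$, but then evaluates $E_{\nu^N}\sum_k\eta(k)=\sum_k m_{N,k}$ directly using part~(1) of Condition~\ref{condition 1}: since $\frac1N\sum_k|N_\beta m_{N,k}-\overline\rho_{N,k}|\to 0$ and $\frac1N\sum_k\overline\rho_{N,k}=\int_0^\infty\rho_0\,dx<\infty$, one gets $\frac{N_\beta}{N}\sum_k m_{N,k}\to\int_0^\infty\rho_0\,dx$. You instead use part~(2), stochastic domination by $\Pam_{c,N}$, together with Lemma~\ref{Lem: mean variance mu N}. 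Your route yields only an upper bound rather than the exact limit, but that is all the lemma requires; it has the mild advantage of not touching the $m_{N,k}$ at all, and your careful justification of conservation and of the monotone-convergence step is a point the paper leaves implicit.
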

\begin{proof} 
Since the total number of particles is conserved we have 
$$
\frac{N_\beta}{N}\E_N\sum_{k=1}^\infty \eta_t(k)
=
\frac{N_\beta}{N}\E_N\sum_{k=1}^\infty \eta_0(k) 
= \frac{1}{N}\sum_{k=1}^\infty N_\beta m_{N,k} = o(1) + \frac{1}{N}\sum_{k=1}^\infty \overline\rho_{N,k}.$$
by Condition \ref{condition 1}.
However,
$N^{-1}\sum_{k=1}^\infty \overline\rho_{N,k} = \int_0^\infty \rho_0(x)dx$, which is finite.
\end{proof}

\begin{Lem} \label{lem: var bound}
We have that the `variance bound' \eqref{nu N var bound} holds.
\end{Lem}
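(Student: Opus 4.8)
The plan is to propagate the variance estimate \eqref{mean_var_pam} for the invariant measures $\Pam_{c,N}$ forward in time, using a relative-entropy (or direct coupling) argument together with attractiveness. The cleanest route I would take exploits the fact that $\nu^N \leq \Pam_{c,N}$ is preserved by the dynamics, since $\Pam_{c,N}$ is invariant and the process is attractive. Thus $\eta_t$ under $\P_N$ is stochastically dominated by $\eta_t$ under $\P_{\Pam_{c,N}}$, whose law is $\Pam_{c,N}$ for all $t$. This immediately controls second moments site-by-site: $\E_N[\eta_t(k)^2] \leq E_{\Pam_{c,N}}[\eta(k)^2] = 2\rho_{k,c}^2 + \rho_{k,c}$, because $x\mapsto x^2$ is coordinatewise increasing. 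Summing and using Lemma \ref{Lem: mean variance mu N}, $\frac{N_\beta^2}{N^2}\sum_k \E_N[\eta_t(k)^2] \to 0$.

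The issue is that $\text{Var}_{\P_N}(\eta_t(k)) = \E_N[\eta_t(k)^2] - (\E_N[\eta_t(k)])^2 \leq \E_N[\eta_t(k)^2]$, so in fact the bound on second moments already dominates the variance. Hence the chain of inequalities
\[
\sum_{k=1}^\infty \text{Var}_{\P_N}(\eta_t(k))
\ \leq\ \sum_{k=1}^\infty \E_N\big[\eta_t(k)^2\big]
\ \leq\ \sum_{k=1}^\infty E_{\Pam_{c,N}}\big[\eta(k)^2\big]
\ =\ \sum_{k=1}^\infty \big[2\rho_{k,c}^2 + \rho_{k,c}\big]
\]
finishes the proof once we invoke \eqref{eqn: particle number estimate}, which gives $\frac{N_\beta^2}{N^2}\sum_k[2\rho_{k,c}^2+\rho_{k,c}] = o(1)$, i.e. $\sum_k \text{Var}_{\P_N}(\eta_t(k)) = o(N^2 N_\beta^{-2})$, uniformly over $0\leq t\leq T$. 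The uniformity in $t$ is automatic because the bound does not depend on $t$.

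The only subtlety — and the step I would be most careful about — is justifying that the stochastic domination $\nu^N\leq\Pam_{c,N}$ together with attractiveness yields $\E_N[f(\eta_t)]\leq E_{\Pam_{c,N}}[f(\eta)]$ for the specific unbounded increasing function $f(\eta)=\eta(k)^2$ (and for $\eta(k)$ itself). This is exactly the content of \eqref{attractiveness}, which is stated there for increasing $f$; to handle the unboundedness one truncates $f$ by $f\wedge M$, which is bounded increasing, applies \eqref{attractiveness}, and passes $M\to\infty$ by monotone convergence, using that the right-hand side $E_{\Pam_{c,N}}[\eta(k)^2]=2\rho_{k,c}^2+\rho_{k,c}$ is finite for each fixed $N,k$. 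With that in hand the lemma is immediate. (Alternatively, one can avoid the domination argument and instead run the relative-entropy method: the entropy $H(\P_N^{[0,t]}|\P_{\Pam_{c,N}}^{[0,t]})$ is nonincreasing in $t$ and bounded by $H(\nu^N|\Pam_{c,N}) = O(NN_\beta^{-1})$ by Condition \ref{condition 1}(3), and then the entropy inequality transfers the variance bound; but the attractiveness route is shorter and more robust, so I would present that.)
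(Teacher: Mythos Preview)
Your proof is correct and follows essentially the same route as the paper: bound $\text{Var}_{\P_N}(\eta_t(k))$ by $\E_N[\eta_t(k)^2]$, use attractiveness \eqref{attractiveness} to dominate by $E_{\Pam_{c,N}}[\eta(k)^2]=2\rho_{k,c}^2+\rho_{k,c}$, and conclude via Lemma \ref{Lem: mean variance mu N}. Your added care with the truncation $f\wedge M$ to justify \eqref{attractiveness} for the unbounded $f(\eta)=\eta(k)^2$ is a detail the paper leaves implicit.
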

\begin{proof}
By attractiveness \eqref{attractiveness},
 $$\text{Var}_{\P_N}(\eta_t(k)) = \E_N[\eta_t^2(k)] -\left(\E_N \eta_t(k)\right)^2 \leq E_{\Pam_{c,N}}[\eta^2(k)]  \leq  \text{Var}_{\Pam_{c,N}}(\eta(k)) +\rho^2_{k,c}.$$

 Then, by Lemma \ref{Lem: mean variance mu N}, we conclude that
 $N_{\beta}^2N^{-2} \sum_{k=1}^{\infty} \text{Var}_{\P_N}(\eta_t(k)) \to 0$ as $N\to \infty$. 
\end{proof}

\begin{Lem}\label{lem: site particle bound}
We have that the `site particle bound' \eqref{site_particle} holds.
\end{Lem}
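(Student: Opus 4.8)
The plan is to deduce the bound directly from attractiveness together with the explicit Geometric marginals of $\Pam_{c,N}$, reducing the time-dependent statement to a static estimate. First I would note that $\eta\mapsto \eta(k)$ is coordinatewise increasing; since $\nu^N\leq \Pam_{c,N}$ and $\Pam_{c,N}$ is invariant, the attractiveness inequality \eqref{attractiveness} yields
\begin{equation*}
N_\beta\,\E_N\big[\eta_t(k)\big]\leq N_\beta\, E_{\Pam_{c,N}}\big[\eta(k)\big]=N_\beta\,\rho_{k,c}
\end{equation*}
for every $t\geq 0$ (in particular for all $0\leq t\leq T$). Hence it suffices to prove $\sup_N\sup_{aN\leq k\leq bN}N_\beta\,\rho_{k,c}<\infty$.

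Next, recall $\theta_{k,c}=ce^{-\beta\Eb_k-k/N}$ and $\rho_{k,c}=\theta_{k,c}/(1-\theta_{k,c})$. Because $c<c_0=\min_j e^{\beta\Eb_j}$, we have $\theta_{k,c}\leq c e^{-\beta\Eb_k}\leq c/c_0<1$, so $1/(1-\theta_{k,c})\leq c_0/(c_0-c)$ uniformly in $k$ and $N$. Therefore, using $N_\beta=e^{\beta\Eb_N}$,
\begin{equation*}
N_\beta\,\rho_{k,c}\leq \frac{c_0}{c_0-c}\,N_\beta\,\theta_{k,c}=\frac{c\,c_0}{c_0-c}\,e^{-\beta(\Eb_k-\Eb_N)-k/N},
\end{equation*}
and it remains to bound the last exponential uniformly over $aN\leq k\leq bN$.

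For this I would invoke the mean value theorem comparison recorded after Condition \ref{condition}: $\Eb_k-\Eb_N=u'(\ln y)\ln(k/N)$ for some $y$ between $k$ and $N$, whence $|\Eb_k-\Eb_N|\leq \|u'\|_\infty\max\{|\ln a|,|\ln b|\}$ whenever $aN\leq k\leq bN$. Combined with $e^{-k/N}\leq e^{-a}$ on this range, one gets
\begin{equation*}
e^{-\beta(\Eb_k-\Eb_N)-k/N}\leq e^{\,\beta\|u'\|_\infty\max\{|\ln a|,|\ln b|\}-a},
\end{equation*}
a constant depending only on $a,b,\beta$ (and in the regime $\beta=0$ this is simply $e^{-a}$ with $N_\beta=1$). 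Substituting back gives the desired uniform bound on $N_\beta\rho_{k,c}$, and the lemma follows. There is no substantive obstacle: the single point requiring care is that the bound $1/(1-\theta_{k,c})\leq c_0/(c_0-c)$ genuinely uses $c<c_0$, which is precisely why the boundary case $c=c_0$ must be handled separately in the Appendix.
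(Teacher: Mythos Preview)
Your proof is correct and follows essentially the same approach as the paper: reduce via attractiveness \eqref{attractiveness} to the static bound $N_\beta\rho_{k,c}$, then control the numerator $N_\beta\theta_{k,c}=ce^{-\beta(\Eb_k-\Eb_N)-k/N}$ using the mean value theorem $\Eb_k-\Eb_N=u'(\ln y)\ln(k/N)$. The only cosmetic differences are that the paper splits the cases $\beta=0$ and $\beta>0$ and bounds the denominator by $1/(1-e^{-a})$ in the latter case, whereas you handle all regimes at once with the uniform bound $1/(1-\theta_{k,c})\leq c_0/(c_0-c)$; and you invoke $\|u'\|_\infty$ directly rather than the limit $u'\to 0$ or $1$, which is arguably cleaner since only boundedness of $u'$ is needed here.
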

\begin{proof}
First, by attractiveness \eqref{attractiveness}, we have that
$\E_N \big[\eta_t(k)\big] \leq E_{\Pam_{c,N}} \big[\eta(k)\big] = \rho_{k,c}$ (cf. \eqref{rho k c}).  
To bound $N_\beta \rho_{k,c}$, recall that $c_0=\min_k e^{\beta \Eb_k}$ and $c<c_0$.

When $\beta=0$, we have $c_0=1$ and $N_\beta =1$. In this case, we have the desired bound, $N_\beta \rho_{k,c} \leq \dfrac{e^{-a}}{1-e^{-a}}$
for all $k\geq aN$. 

When $\beta> 0$, 
using definition of $c_0$, and that $c<c_0$, we have the denominator $1- ce^{-\beta \Eb_k -k/N} \geq 1-e^{-a}$ as $k\geq aN$. Write $N_\beta e^{-\beta \Eb_k - k/N} \leq e^{-\beta (\Eb_k - \Eb_N) -a}$.  By the mean value theorem, $\Eb_k - \Eb_N = u'(r) \ln (k/N)$ where $r$ is between $aN\leq k$ and $N$.  By assumption, $u'(r)$ tends to $0$ or $1$, and $\ln (k/N) \leq \ln b$ for $k\leq bN$.  We conclude then that $N_\beta e^{-\beta \Eb_k - k/N}$ is uniformly bounded in $N$, and the lemma follows.
\end{proof}

We now address initial convergence.
\begin{Prop} \label{prop: initial convergence and mass for condition 1}
We have `initial convergence' \eqref{Initial convergence nu N} holds. 
\end{Prop}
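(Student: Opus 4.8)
The plan is to prove \eqref{Initial convergence nu N} by a routine second–moment (Chebyshev) argument, exploiting the two structural features of $\nu^N$ guaranteed by Condition \ref{condition 1}: that $\nu^N$ is a \emph{product} measure, and that it is stochastically dominated by $\Pam_{c,N}$. Fix $G\in C_c^\infty(\R^+_\circ)$, say with support in $[a,b]$, $0<a<b<\infty$, and write $X_N := \tfrac{N_\beta}{N}\sum_{k\geq 1}G(k/N)\eta(k)$; since $G$ is compactly supported, the sum runs only over $aN\leq k\leq bN$, so all sums below are in fact finite. It then suffices to show $\E_{\nu^N}[X_N]\to \int_0^\infty G\rho_0\,dx$ and $\text{Var}_{\nu^N}(X_N)\to 0$, after which Chebyshev's inequality, applied with threshold $\delta/2$, closes the argument.

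For the mean, since the marginal $\nu^N_k$ has mean $m_{N,k}$ we have $\E_{\nu^N}[X_N]=\tfrac1N\sum_k G(k/N)\,N_\beta m_{N,k}$. First I would compare this with $\tfrac1N\sum_k G(k/N)\overline\rho_{N,k}$: the difference is at most $\|G\|_\infty\,\tfrac1N\sum_k|N_\beta m_{N,k}-\overline\rho_{N,k}|$, which vanishes by part (1) of Condition \ref{condition 1}. Then, writing $\tfrac1N\sum_k G(k/N)\overline\rho_{N,k}=\sum_k G(k/N)\int_{(k-1)/N}^{k/N}\rho_0(x)\,dx$ and using uniform continuity of $G$ together with $\rho_0\in L^1(\R^+)$, this Riemann-type sum converges to $\int_0^\infty G(x)\rho_0(x)\,dx$. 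Hence $\E_{\nu^N}[X_N]\to\int_0^\infty G\rho_0\,dx$, so $|\E_{\nu^N}[X_N]-\int_0^\infty G\rho_0\,dx|<\delta/2$ for $N$ large.

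For the variance, since $\nu^N$ is a product measure there are no covariance terms: $\text{Var}_{\nu^N}(X_N)=\tfrac{N_\beta^2}{N^2}\sum_k G(k/N)^2\,\text{Var}_{\nu^N}(\eta(k))$. By attractiveness (stochastic domination $\nu^N\leq\Pam_{c,N}$ applied to the increasing function $\eta(k)^2$), $\text{Var}_{\nu^N}(\eta(k))\leq E_{\nu^N}[\eta(k)^2]\leq E_{\Pam_{c,N}}[\eta(k)^2]=\text{Var}_{\Pam_{c,N}}(\eta(k))+\rho_{k,c}^2=2\rho_{k,c}^2+\rho_{k,c}$, with $\rho_{k,c}$ as in \eqref{rho k c}. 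Therefore $\text{Var}_{\nu^N}(X_N)\leq\|G\|_\infty^2\,\tfrac{N_\beta^2}{N^2}\sum_k\bigl(2\rho_{k,c}^2+\rho_{k,c}\bigr)$, which tends to $0$ by the variance estimate in Lemma \ref{Lem: mean variance mu N} (equivalently, this is \eqref{nu N var bound} at $t=0$). Combining with the mean estimate, $\nu^N[\,|X_N-\E_{\nu^N}X_N|>\delta/2\,]\leq 4\delta^{-2}\text{Var}_{\nu^N}(X_N)\to 0$, which gives \eqref{Initial convergence nu N}.

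There is no serious obstacle here—the only point requiring any care is the variance bound, where one must invoke the domination by $\Pam_{c,N}$ (rather than work with $\nu^N$ directly) to reduce to the single–site second moments controlled in Lemma \ref{Lem: mean variance mu N}; the product structure of $\nu^N$ is what makes this a sum of single–site contributions in the first place. Finally, since each $\Pam_{c,N}$ with $0\leq c<c_0$ satisfies Condition \ref{condition 1} (by Lemma \ref{invariant measures are local equli} and Lemma \ref{entropy: ln k}) with $\rho_0=\phi_c$, Proposition \ref{prop: static} follows immediately as a special case of \eqref{Initial convergence nu N}.
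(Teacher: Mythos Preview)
Your proof is correct and follows essentially the same approach as the paper: split into a mean estimate (using part (1) of Condition \ref{condition 1} and Riemann-sum convergence) and a variance estimate (using the product structure and domination by $\Pam_{c,N}$ to reduce to Lemma \ref{Lem: mean variance mu N}), then apply Chebyshev. The only cosmetic difference is that the paper packages the variance step as a citation to Lemma \ref{lem: var bound}, whose proof is exactly the attractiveness argument you wrote out explicitly.
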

 
\begin{proof}
By assumption, $ \lim_{N\to\infty} \dfrac 1N \sum_{k=1}^{\infty} |N_{\beta} m_{N,k} - \overline \rho_{N,k}| =0$.  For a test function $G$, 
since $N^{-1}\sum_{k=1}^{\infty} G(k/N)\overline \rho_{N,k}$ approximates $\int_{\R^+} G(x)\rho_0(x) dx$,
it is enough to check that
\begin{equation*} \label {eqn: static estimate a b spelt out }
\nu^N
\left[ \left|
N^{-1}
\sum_{k=1}^{\infty}
 N_{\beta} (\eta(k) - m_{N,k})
\right|
>\delta
 \right].
\end{equation*}
By Chebychev's inequality, we have the upperbound of 
 $ \delta^{-2}  N_{\beta}^2N^{-2} \sum_{k=1}^{\infty} \text{Var}_{\nu^N}(\eta(k))$,
 which vanishes by the variance bound in Lemma \ref{lem: var bound}.
 \end{proof}

%%%%%%%%%%%%%%%%%%%%%%%%%%%%%%%%%%%%%%%%%%%%%%%%%%%%%
%%%%%%%%%%%%%%%%%%%%%%%%%%%%%%%%%%%%%%%%%%%%%%%%%%%%%
%%%%%%%%%%%%%%%%%%%%%%%%%%%%%%%%%%%%%%%%%%%%%%%%%%%%%
\section{Uniqueness of weak solutions} \label {section: uniqueness}
In this section, we present some uniqueness results for 
the macroscopic equations in Theorems \ref{thm: beta 0}, \ref{thm: ln k} and \ref{thm: lnln k}, governing the particle density $\rho(t,x)$ or the height function $\psi(t,x) := \int_x^{\infty} \rho(t,u)du$.
The methods are based on maximum principles for linear parabolic equations.

We first need a lemma to relate properties of $\psi$ with those of $\rho$. 
Recall that
 $\C$ is space of functions $\rho: [0,T]\times \R^+\mapsto \R^+$  such that $t\in [0,T]\mapsto \rho(t,x)dx\in \Mb$ is vaguely continuous: Namely, for each $G\in C_c^\infty(\R^+_\circ)$, the map $t\in [0,T]\mapsto \int_0^\infty G(x)\rho(t,x)dx$ is continuous.

Also, recall
$$\Wb = \left\{
\psi \in C\left( [0,T] \times \R^+ \right): \text{ for } t\in [0,T], \ \psi(t,\cdot) \text{ is absolutely continuous on } \R^+_{\circ}
\right\}.$$
\begin{Lem}
\label{lem_uniqueness}
Let $\rho(t,x) \in \C$. Suppose, for all $t\in[0,T]$, that
\begin{equation} \label {eqn rho macro prop}
\rho(t,\cdot) \leq \phi_c(\cdot)\in L^1(\R^+),
\quad
 \int_0^{\infty} \rho(t,x) dx=\int_0^{\infty} \rho_0(x) dx<\infty.
\end{equation}
Let $\psi(t,x) = \int_x^{\infty} \rho(t,u)du$.
Then, $\psi(t,x)$ belongs to the class $\Wb$ with
\begin{equation} \label{eqn psi macro prop}
\begin{split}
\lim_{x\to\infty} \psi(t,x) = 0,
\quad
0\leq -\partial_x\psi(t,\cdot) \leq \phi_c(\cdot),
\quad
\psi(t,0) =\psi(0,0).
\end{split}
\end{equation}
\end{Lem}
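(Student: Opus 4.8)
The plan is to unpack the definition $\psi(t,x)=\int_x^\infty \rho(t,u)\,du$ and verify each claimed property directly, using the two structural hypotheses \eqref{eqn rho macro prop} on $\rho$ together with the vague continuity $\rho\in\C$. First I would observe that since $\rho(t,\cdot)\le\phi_c(\cdot)\in L^1(\R^+)$, the integral defining $\psi(t,x)$ converges for every $x\ge 0$ and $t\in[0,T]$, and moreover $0\le\psi(t,x)\le\int_x^\infty\phi_c(u)\,du$. The bound $\int_x^\infty\phi_c\to 0$ as $x\to\infty$ (by dominated convergence, $\phi_c\in L^1$) immediately gives $\lim_{x\to\infty}\psi(t,x)=0$. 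Also $\psi(t,0)=\int_0^\infty\rho(t,u)\,du=\int_0^\infty\rho_0(u)\,du=\psi(0,0)$ by the conservation-of-mass hypothesis in \eqref{eqn rho macro prop}.

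Next I would address membership in $\Wb$, which requires two things: absolute continuity of $x\mapsto\psi(t,x)$ on $\R^+_\circ$ for each fixed $t$, and joint continuity of $\psi$ on $[0,T]\times\R^+$. Absolute continuity is automatic: $\psi(t,x)=\big(\int_0^\infty\rho(t,u)\,du\big)-\int_0^x\rho(t,u)\,du$ is (a constant minus) the indefinite integral of the locally integrable function $\rho(t,\cdot)$, hence absolutely continuous on every compact subinterval of $\R^+_\circ$, with $\partial_x\psi(t,x)=-\rho(t,x)$ for a.e.\ $x$; this yields $0\le-\partial_x\psi(t,\cdot)=\rho(t,\cdot)\le\phi_c(\cdot)$. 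For joint continuity, fix $(t_0,x_0)$ and write
\begin{equation*}
|\psi(t,x)-\psi(t_0,x_0)|\le |\psi(t,x)-\psi(t,x_0)|+|\psi(t,x_0)-\psi(t_0,x_0)|.
\end{equation*}
The first term is bounded by $\int_{\min(x,x_0)}^{\max(x,x_0)}\phi_c(u)\,du$, which is small when $|x-x_0|$ is small, uniformly in $t$. For the second term I would use vague continuity of $t\mapsto\rho(t,\cdot)$ against a smooth cutoff approximating $\chi_{[x_0,\infty)}$: choose $G_\varepsilon\in C_c^\infty(\R^+_\circ)$ with $G_\varepsilon\uparrow \chi_{(x_0,R)}$ and control the tail $\int_R^\infty\rho(t,u)\,du\le\int_R^\infty\phi_c$ and the piece near $0$ uniformly in $t$ using $\rho\le\phi_c\in L^1$; then $t\mapsto\int_0^\infty G_\varepsilon(u)\rho(t,u)\,du$ is continuous by $\rho\in\C$, and an $\varepsilon/3$ argument gives continuity of $t\mapsto\psi(t,x_0)$.

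The main obstacle, and the only genuinely delicate point, is precisely this last step: passing from vague continuity of $t\mapsto\rho(t,\cdot)\,dx$ (i.e.\ continuity only against compactly supported test functions in $\R^+_\circ$) to continuity of the \emph{unbounded-domain} integral $\psi(t,x_0)=\int_{x_0}^\infty\rho(t,u)\,du$. This is where the uniform domination $\rho(t,\cdot)\le\phi_c(\cdot)\in L^1(\R^+)$ is essential—it controls both the tail at $+\infty$ and, near the left endpoint, any mass in $(0,x_0)$ that a compactly-supported test function cannot see—and it must be invoked carefully to make the $\varepsilon/3$ splitting uniform in $t\in[0,T]$. Everything else is a routine consequence of the fundamental theorem of calculus for absolutely continuous functions and dominated convergence.
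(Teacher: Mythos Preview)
Your proposal is correct and follows essentially the same strategy as the paper: the properties in \eqref{eqn psi macro prop} and the absolute continuity in $x$ are immediate, and continuity in $t$ at a fixed $x_0$ is obtained by approximating $\chi_{[x_0,\infty)}$ with compactly supported test functions, using the uniform domination $\rho(t,\cdot)\le\phi_c\in L^1$ to control the approximation error uniformly in $t$.

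The one minor difference worth noting is how joint continuity on $[0,T]\times\R^+$ is assembled. You obtain it directly from your two-term splitting, since your bound $|\psi(t,x)-\psi(t,x_0)|\le\int_{\min(x,x_0)}^{\max(x,x_0)}\phi_c$ is uniform in $t$. The paper instead first establishes separate continuity in $x$ and in $t$, and then upgrades to joint continuity via the monotonicity of $x\mapsto\psi(t,x)$: for a family of nonincreasing functions, separate continuity plus monotonicity yields joint continuity by a sandwich argument. Both routes are short; yours is slightly more direct, while the paper's monotonicity device avoids invoking the uniform-in-$t$ domination at this stage and is a reusable trick for height-function settings.
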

\begin{proof}
The absolute continuity of $\psi(t,\cdot)$ follows from definition of $\psi$ and it is trivial to verify \eqref{eqn psi macro prop} from  \eqref{eqn rho macro prop}.
To finish, we need only to check that $\psi(t,x)$ is a continuous function on $[0,T] \times \R^+$. 

We claim
 that such continuity will follow if $\psi$ is continuous in $x$ and $t$ separately.
Indeed, fix any $(t_0,x_0) \in (0,T) \times \R^+_{\circ}$ and denote $\psi(t_0,x_0) = a_0$. 
If $x\mapsto \psi(t,x)$, for each $t$, is continuous at $x_0$, then for any $\epsilon>0$ there exists $\delta$ such that 
\begin{equation*}
a_0-\epsilon \leq \psi(t_0,x_0\pm \delta) \leq a_0+\epsilon.
\end{equation*}
Suppose $t\mapsto \psi(t, x)$, for each $x$, is continuous in $t$, then we may find $\delta'$, such that for all $t$ where $|t-t_0| \leq \delta'$, we have
\begin{equation*}
\psi(t_0,x_0\pm \delta) - \epsilon \leq \psi(t,x_0\pm \delta) \leq \psi(t_0,x_0\pm \delta) +\epsilon.
\end{equation*}
Since $x\mapsto \psi(t,x)$, for each $t$,  is monotone in $x$, we have, for all $(t,x)$ such that $|t-t_0|\leq \delta'$ and $|x-x_0|\leq \delta$, that 
\begin{equation*}
- 2\epsilon \leq \psi(t,x) -\psi(t_0,x_0\pm \delta)   \leq 2 \epsilon.
\end{equation*}
Hence, we deduce continuity of $\psi$ at $(t_0,x_0)$.
Continuity for boundary points $(t,x)$ on the boundary is verified in the same way.

Now, we focus on showing that $t\mapsto \psi(t,x)$ and $x\mapsto\psi(t,x)$ are both continuous. 
For any fixed $t\in [0,T]$, $x\mapsto \psi(t,x)$ is continuous on $\R^+$ since $\psi$ is in form $\psi(t,x) = \int_x^{\infty} \rho(t,u) du$ 
and $\int_0^{\infty}\rho(t,u)du < \infty$.

To show continuity in $t$, we first note that $\psi(t,0) = \psi(0,0)$ for all $t\in [0,T]$, and therefore $t\mapsto\psi(t,0)$ is continuous.
Fix now any $x_0>0$ and $t_0\in [0,T]$. For any $\epsilon > 0$, using $\rho(t,x)\leq \phi_c(x)$ and that $\psi_c\in L^1(\R^+)$, we may find $G$ continuous and with compact support in $\R^+_{\circ}$ 
such that for all $t\in [0,T]$,
\begin{equation*}
\left |   
\int_x^\infty \rho(t,u) du 
-
\int_0^{\infty} G(u) \rho(t,u) du 
  \right |
  \leq \dfrac{\epsilon}{4}.
\end{equation*} 
Then, by the triangle inequality using two applications of the above inequality, we have $|\psi(t,x_0) - \psi(t_0,x_0) |$ is bounded from above by
\begin{equation*}
\left |   
\int_0^\infty G(u) \rho(t,u) du 
-
\int_0^{\infty} G(u) \rho(t_0,u) du 
  \right |
  +
 \dfrac{\epsilon}{2}.
\end{equation*}
Finally,  continuity of $t\mapsto\psi(t,x_0)$ at $t_0$ follows as $\rho\in \C$, namely from the vague continuity of $\rho(t,x)dx$.
\end{proof}

\subsection{Case: $\beta=0$} \label{section uniqueness beta 0}
Let $\rho(t,x) \in \C$ with $\rho(0,\cdot) = \rho_0(\cdot)$ be a weak solution of the equation
\begin{equation*}
\partial_t \rho
= \partial_x^2 \dfrac{\rho}{\rho+1} + \partial_x \dfrac{\rho}{\rho+1},
\end{equation*}
that is, 
for all $G \in C_c^{\infty} ([0,T) \times \R^+_{\circ})$,
\begin{equation}
\label {eqn: beta=0, rho, weak form}
\int_0^{\infty}
G(0,x)\rho_0 dx
+
\int_0^T \int_0^{\infty}
\left\{
\partial_t G  \rho
+
\partial_x^2 G
 \dfrac{\rho}{\rho+1}
-
\partial_x G 
 \dfrac{\rho}{\rho+1}
 \right\}
  dxdt
 =
 0.
\end{equation}
Assume also that $\rho(t,x)$ satisfies \eqref{eqn rho macro prop}.

\begin{Prop} We have
$\psi(t,x) = \int_x^\infty \rho(t,u)du$ belongs to $\Wb$ and \eqref {eqn psi macro prop} holds by Lemma \ref{lem_uniqueness}.  In particular, $\psi$ solves weakly the equation 
\begin{equation*} 
\label{eqn: beta=0, psi}
\partial_t \psi
= 
\partial_x \left( \dfrac{\partial_x \psi}{1 - \partial_x \psi} \right)+   \dfrac{\partial_x \psi}{1 - \partial_x \psi},
\end{equation*}
that is, for all $G \in C_c^{\infty} ([0,T) \times \R^+_{\circ})$
\begin{equation}  \label{eqn: beta=0, psi, weak form}
\int_0^{\infty}
G(0,u)\psi_0 dx
+
\int_0^T \int_0^{\infty}
\left\{
\partial_t G  \psi
-
\partial_x G
 \dfrac{\partial_x\psi}{1-\partial_x\psi}
+
 G 
 \dfrac{\partial_x\psi}{1-\partial_x\psi}
 \right\}
  dxdt
 =
 0,
\end{equation}
where $\psi_0(x) = \int_x^{\infty} \rho_0(u)du$.

Moreover, $\psi(t,x)$ is the unique weak solution in the class $\Wb$ of the initial-boundary value problem  \eqref {eqn: macro beta 0, psi}. 
Consequently, $\rho(t,x)$ is the unique weak solution in $\C$ of the equation \eqref{eqn: macro beta 0}.
\end{Prop}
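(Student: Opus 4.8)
The plan is to reduce the uniqueness statement for $\rho$ to a uniqueness statement for the height function $\psi$, and then to prove the latter by a maximum-principle argument adapted to the degenerate quasilinear operator $\partial_x\big(\tfrac{\partial_x\psi}{1-\partial_x\psi}\big)+\tfrac{\partial_x\psi}{1-\partial_x\psi}$ on the half-line. First I would note that Lemma \ref{lem_uniqueness} already gives that $\psi(t,x)=\int_x^\infty\rho(t,u)\,du$ lies in $\Wb$ and satisfies the boundary/constraint conditions in \eqref{eqn: macro beta 0, psi}; the passage from the weak form \eqref{eqn: beta=0, rho, weak form} for $\rho$ to the weak form \eqref{eqn: beta=0, psi, weak form} for $\psi$ is a routine integration by parts in $x$, using a test function $\widetilde G(t,x)=\int_0^x G(t,u)\,du$ (or its reverse), exploiting that $\psi$ is absolutely continuous in $x$ with $-\partial_x\psi=\rho$ and that $\psi,\rho$ decay in $x$ and have the stated $L^1$ bounds so no boundary terms survive at $x=0$ or $x=\infty$. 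Conversely, differentiating a $\Wb$-solution $\psi$ of \eqref{eqn: macro beta 0, psi} in $x$ recovers a $\C$-solution $\rho=-\partial_x\psi$ of \eqref{eqn: macro beta 0}; so the two uniqueness assertions are equivalent and it suffices to prove that \eqref{eqn: macro beta 0, psi} has at most one weak solution in $\Wb$.

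Next I would set up the comparison argument for $\psi$. Suppose $\psi_1,\psi_2\in\Wb$ are two weak solutions with the same initial data $\psi_0$, the same boundary behaviour $\psi_i(t,0)=\psi_0(0)$, $\lim_{x\to\infty}\psi_i(t,x)=0$, and the same gradient constraint $0\le -\partial_x\psi_i\le\phi_c$. Writing $\Phi(p)=\tfrac{p}{1-p}$, which is smooth and strictly increasing on $[0,1)$ with $\Phi'(p)=(1-p)^{-2}$, the difference $w=\psi_1-\psi_2$ satisfies, in the weak sense,
\begin{equation*}
\partial_t w=\partial_x\big(a(t,x)\,\partial_x w\big)+a(t,x)\,\partial_x w,
\end{equation*}
where $a(t,x)=\int_0^1\Phi'\big(\lambda(-\partial_x\psi_1)+(1-\lambda)(-\partial_x\psi_2)\big)\,d\lambda$ is a measurable coefficient with $1\le a(t,x)\le (1-\phi_c(x))^{-2}$; on any compact subinterval $[a,b]\subset\R^+_\circ$ where $\phi_c<1$ the coefficient is bounded above and below by positive constants, so the linear operator is uniformly parabolic there. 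The boundary and decay conditions give $w(t,0)=0$ for all $t$, $w(0,x)=0$, and $w(t,x)\to 0$ as $x\to\infty$ uniformly enough because of the common $L^1$ domination by $\phi_c$. The plan is then to run an energy/maximum-principle estimate: multiply the equation for $w$ by $w$ (or by a suitable cutoff $w\,\zeta_R^2$ to handle the unbounded domain) and integrate over $[0,t]\times\R^+$; the divergence term produces $-\int a(\partial_x w)^2$, the drift term is controlled by Cauchy--Schwarz against this good term plus $C\int w^2$, the boundary term at $x=0$ vanishes since $w(t,0)=0$, and the term at $x=R$ is killed by letting $R\to\infty$ using the integrability afforded by the $\phi_c$ bound. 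Gronwall's inequality in $t$ then forces $\int w(t,x)^2\,dx\equiv 0$, hence $\psi_1\equiv\psi_2$.

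I expect the main obstacle to be the degeneracy of $\Phi'$ near $p=1$ together with the unboundedness of the spatial domain, i.e. making the energy estimate legitimate despite $a(t,x)$ possibly being unbounded where $\phi_c$ approaches $1$. In the present regime, however, $\phi_c(x)=\tfrac{ce^{-x}}{1-ce^{-x}}$ with $c<c_0=1$, so $-\partial_x\psi_i\le\phi_c<1$ is bounded away from $1$ uniformly on each compact interval and in fact $\phi_c(x)\to 0$ as $x\to\infty$; thus $a(t,x)$ is genuinely bounded on every $[0,T]\times[a,b]$ and the parabolicity is uniform there, which is what the maximum-principle step needs. The only care required is near $x=0^+$, where $\phi_c$ is largest but still $<1$, and near $x=\infty$, where the cutoff/decay argument applies; neither causes a real difficulty because the $L^1$ bound $\rho(t,\cdot)\le\phi_c\in L^1(\R^+)$ controls the tails and the Dirichlet-type condition $w(t,0)=0$ controls the left endpoint. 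A secondary technical point is justifying the integration by parts against $w$ itself when $w$ is only known to be absolutely continuous in $x$ and vaguely continuous in $t$; this is handled in the standard way by a mollification in $t$ (Steklov averages) and by approximating the cutoffs, so that all manipulations take place for smooth approximants and the estimates pass to the limit. Once uniqueness of $\psi$ in $\Wb$ is established, uniqueness of $\rho=-\partial_x\psi$ in $\C$ follows immediately, completing the proof.
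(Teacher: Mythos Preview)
Your overall strategy---linearize the difference $w=\psi_1-\psi_2$ via the mean-value representation $H(\partial_x\psi_1)-H(\partial_x\psi_2)=\widehat H\,\partial_x w$ with $H(p)=p/(1-p)$, and then argue uniqueness for the resulting linear parabolic equation---matches the paper's exactly, as does the derivation of the weak form \eqref{eqn: beta=0, psi, weak form} by testing \eqref{eqn: beta=0, rho, weak form} against antiderivatives $\widehat G(t,x)=\int_0^x G(t,u)\,du$. Where you diverge is in the final step: you propose an $L^2$ energy estimate with spatial cutoffs and Gronwall, while the paper uses a maximum principle. Concretely, the paper fixes $\epsilon>0$, chooses $0<a<b<\infty$ so that $|w(t,a)|<\epsilon$ and $|w(t,b)|<\epsilon$ for all $t\in[0,T]$ (possible because $w(t,0)=0$, $w(t,x)\to 0$ as $x\to\infty$, and $|\partial_x w|\le 2\phi_c\in L^1$), and then applies the weak maximum principle for linear uniformly parabolic equations on the bounded cylinder $(0,T)\times(a,b)$ (citing \cite{LSU}) to conclude $|w|<\epsilon$ there; since $\epsilon$ and the compact set are arbitrary, $w\equiv 0$.

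The paper's route is shorter and sidesteps precisely the regularity issue you flag as ``secondary'': it needs only continuity of $w$ (which membership in $\Wb$ supplies) together with the weak form, and never has to justify pairing the equation with $w$ itself or introduce Steklov averages. Your energy method would also go through, but note a sign slip: the convex combination sits at $\tau\partial_x\psi_1+(1-\tau)\partial_x\psi_2\in[-\phi_c,0]$, not at $[0,\phi_c]$, so $\widehat H=(1-p)^{-2}$ actually satisfies $(1+\phi_c(x))^{-2}\le \widehat H\le 1$. Thus the diffusion coefficient is bounded \emph{above} by $1$ globally, and bounded below by $(1-c)^2>0$ since $\phi_c(0^+)=c/(1-c)<\infty$; there is no degeneracy issue at all, and your concern about $\phi_c$ approaching $1$ is an artifact of the sign confusion.
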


\begin{proof} We first show \eqref{eqn: beta=0, psi, weak form}.  Since $\rho(t,x)\leq \phi_c(x)\in L^1(\R^+)$ (cf. \eqref{eqn rho macro prop}), by straightforward approximations, the test functions admissible for \eqref{eqn: beta=0, rho, weak form} may be extended to include all functions of the form $\widehat G(t,x) = \int_0^x G(t,u)du$ where $G \in C_c^{\infty} ([0,T) \times \R^+_{\circ})$. Then, by integration by parts, \eqref{eqn: beta=0, psi, weak form} follows.

We now show $\psi(t,x)$ is the unique weak solution to \eqref {eqn: macro beta 0, psi} in the space $\Wb$.
Suppose there exist two such weak solutions $\psi_1$, $\psi_2$.
 Let $\psi = \psi_1 - \psi_2$ and $H(p) = \dfrac{p}{1-p}$.
As \eqref{eqn: beta=0, psi, weak form} holds for $\psi_1$, $\psi_2$, in the new notation, we have
\begin{equation*}
\int_0^T \int_0^{\infty}
\left\{
\partial_t G  \psi
-
\partial_x G
\left(
H \left(\partial_x \psi_1  \right)
-
H \left(\partial_x \psi_2  \right)
\right)
+
  G 
 \left(
H \left(\partial_x \psi_1  \right)
-
H \left(\partial_x \psi_2  \right)
\right)
 \right\}
  dxdt
 =
 0
\end{equation*}
and
\begin{equation*}
\begin{split}
H (\partial_x \psi_1 )
-
H (\partial_x \psi_2 )
&=
(\partial_x \psi_1 -\partial_x \psi_2)
\int_0^1 H'(\tau \partial_x \psi_1 + (1-\tau)\partial_x \psi_2) 
d\tau
\\
&=:
(\partial_x \psi_1 -\partial_x \psi_2)
\widehat{H}(t,x).
\end{split}
\end{equation*}
Then, $\psi$ satisfies
\begin{equation*}
\int_0^T \int_0^{\infty}
\left\{
\partial_t G  \psi
-
\partial_x G
\left(
\widehat{H}(t,x)
\partial_x \psi
\right)
+
G 
\left(
\widehat{H}(t,x)
\partial_x \psi
\right)
 \right\}
  dxdt
 =
 0,
\end{equation*}
that is, $\psi$ is a weak solution in $\Wb$ of the linear problem
\begin{equation}
\label{eqn: hat H}
\begin{cases}
\partial_t \psi
= 
\partial_x \left( \widehat{H} \partial_x \psi  \right)
+
 \widehat{H} \partial_x \psi\\
 \psi(0,x)=0,
 \quad
 \psi(t,0)=0,\\
 \lim_{x\to\infty} \psi(t,x) = 0,
 \quad
 -\phi_c(\cdot)\leq \partial_x \psi(t,\cdot) \leq \phi_c(\cdot) \text{ for all } t\in [0,T].
 \end{cases}
\end{equation}
To show that $\psi \equiv 0$, and therefore uniqueness of weak solution. it suffices to show, for all $\e>0$ and all compact set $D\subset (0,T)\times\R_\circ^+$, that $|\psi| < \e$ on $D$.

For such a $D$, we may find 
$0<a<b<\infty$ where $D\subset Q^T_{a,b}:=(0,T) \times (a,b)$.
Since $\big|\partial_x \psi(t,\cdot)\big| \leq \phi_c(\cdot) \in L^1(\R^+)$  for all $t\in [0,T]$, and  $\psi$ vanishes for both $x=0$ and $x\rightarrow\infty$, we can adjust $a$, $b$
so that $|\psi(t,a)| <\e$ and $|\psi(t,b)| <\e$ for all $t\in[0,T]$. Then, we have $|\psi| <\e$ on the parabolic boundary of $Q^T_{a,b}$.

Notice that, on $Q^T_{a,b}$, the PDE in \eqref{eqn: hat H} is uniformly parabolic and has bounded coefficients:
Since
$ H'(p) = \dfrac{1}{(1-p)^2} $
and  $-\phi_c(a)\leq \partial_x \psi_1, \partial_x \psi_2 \leq 0$ on $Q^T_{a,b}$,
we have
\begin{equation*}
\dfrac{1}{(1+\phi_c(a))^2}
\leq
\widehat{H}
\leq 1
\ \ \text{ on } Q^T_{a,b}.
\end{equation*}
Then, by a maximum principle (cf.\,p.\,188, \cite{LSU}), we have $|\psi| < \e$ on $Q^T_{a,b}$, and therefore on $D$.

Finally, if $\rho(t,x)$ were not unique with respect to \eqref{eqn: beta=0, rho, weak form}, one could construct two different weak solutions $\psi(t,x)$, which is a contradiction.
\end{proof}

\subsection{Case $\beta> 0$} \label{section uniqueness beta neq 0}
Let $\rho(t,x)\in \C$ with $\rho(0,\cdot)=\rho_0(\cdot)$ be a weak solution of 
\begin{equation} 
\label{eqn: beta neq 0, pde}
\partial_t \rho
=
\partial_x^2\rho 
-
\partial_x \Big(\alpha(x,\beta)\rho\Big).
\end{equation}
where $\alpha(x,\beta) = -(\beta +x)/x$ when $\Eb_k \sim \ln k$ and equals $-1$ when $1\ll \Eb_k \ll \ln\ln k$ (cf. \eqref{alpha_D}), and $\rho$ satisfies \eqref{eqn rho macro prop}.

\begin{Prop}
We have $\psi(t,x)=\int_x^\infty \rho(t,u)du$ belongs to $\Wb$ and \eqref{eqn psi macro prop} holds by Lemma \ref{lem_uniqueness}, and solves weakly the equation 
\begin{equation} \label {eqn: beta neq 0, psi pde}
\partial_t \psi
=
\partial_x^2\psi 
-
\alpha(x,\beta)\partial_x  \psi,
\end{equation}
where $\psi(0,x) = \int_x^{\infty} \rho_0(u) du$.

Then, $\psi(t,x)$ is the unique weak solution in $\Wb$ of the initial-boundary value problem \eqref {eqn: macro ln k, psi} when $\Eb_k \sim \ln k$, and of \eqref{eqn: macro lnln k, psi} when $1\ll \Eb_k \ll  \ln k$. 
Consequently, $\rho(t,x)$ is the unique weak solution in $\C$ of the equation \eqref{eqn: macro ln k} when $\Eb_k \sim \ln k$ and of \eqref{eqn: macro lnln k} when $1\ll \Eb_k \ll  \ln k$.
\end{Prop}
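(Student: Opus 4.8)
The plan is to follow the $\beta=0$ argument almost verbatim, the simplification being that here the macroscopic equation \eqref{eqn: beta neq 0, psi pde} for $\psi$ is already \emph{linear} in $\psi$, so the mean-value linearization of the nonlinearity $H(p)=p/(1-p)$ used in the $\beta=0$ proof is not needed. The two energy regimes --- $\Eb_k\sim\ln k$, where $\alpha(x,\beta)=-(\beta+x)/x$, and $1\ll\Eb_k\ll\ln k$, where $\alpha(x,\beta)\equiv-1$ --- are handled simultaneously, the latter being easier since the drift is then constant.

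First I would pass from the weak formulation of \eqref{eqn: beta neq 0, pde} to a weak formulation for $\psi(t,x)=\int_x^\infty\rho(t,u)\,du$. Exactly as for $\beta=0$, since $\rho(t,\cdot)\leq\phi_c(\cdot)\in L^1(\R^+)$ (cf.\,\eqref{eqn rho macro prop}), the admissible test functions may be enlarged --- by a cutoff $\chi_n\uparrow 1$ and dominated convergence, the error terms being supported in $[n,n+1]$ and bounded by $\int_n^{n+1}\phi_c\to0$ --- to all functions of the form $\widehat G(t,x)=\int_0^x G(t,u)\,du$ with $G\in C_c^\infty([0,T)\times\R^+_{\circ})$. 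Such $\widehat G$ vanish on a neighbourhood of $x=0$ and are constant for large $x$; consequently the coefficient $\alpha(x,\beta)$, which is singular only at the origin and bounded as $x\to\infty$, never enters the relevant integrals in a harmful way. Integrating by parts in $x$ --- the boundary contributions at $x=0$ and $x=\infty$ vanishing because $\widehat G(t,0)=0$ and $\psi(t,x)\to0$ --- then produces the weak form of \eqref{eqn: beta neq 0, psi pde}. By Lemma \ref{lem_uniqueness}, $\psi\in\Wb$ and satisfies the boundary relations \eqref{eqn psi macro prop}, and $\psi(0,x)=\int_x^\infty\rho_0(u)\,du$; hence $\psi$ is a weak solution of \eqref{eqn: macro ln k, psi} (resp.\,\eqref{eqn: macro lnln k, psi}).

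For uniqueness of $\psi$, suppose $\psi_1,\psi_2\in\Wb$ both solve the relevant initial-boundary value problem and put $\psi=\psi_1-\psi_2$. By linearity, $\psi$ is a weak solution of
\begin{equation*}
\begin{cases}
\partial_t\psi=\partial_x^2\psi-\alpha(x,\beta)\,\partial_x\psi,\\
\psi(0,x)=0,\quad\psi(t,0)=0,\\
\lim_{x\to\infty}\psi(t,x)=0,\quad -\phi_c(\cdot)\leq\partial_x\psi(t,\cdot)\leq\phi_c(\cdot)\ \text{ for }t\in[0,T].
\end{cases}
\end{equation*}
To get $\psi\equiv0$ it suffices, as in the $\beta=0$ case, to show $|\psi|<\e$ on an arbitrary compact $D\subset(0,T)\times\R^+_{\circ}$. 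Pick $0<a<b<\infty$ with $D\subset Q^T_{a,b}:=(0,T)\times(a,b)$. On $Q^T_{a,b}$ the equation is uniformly parabolic (diffusion coefficient $1$) with smooth bounded drift $-\alpha(x,\beta)$, so interior regularity together with continuity of $\psi$ up to the boundary lets one invoke the maximum principle (cf.\,p.\,188, \cite{LSU}). Using $|\partial_x\psi(t,\cdot)|\leq\phi_c(\cdot)\in L^1(\R^+)$ together with $\psi(t,0)=0$ and $\psi(t,x)\to0$ as $x\to\infty$, one can shrink $a$ and enlarge $b$ so that $|\psi(t,a)|<\e$ and $|\psi(t,b)|<\e$ for all $t$, whence $|\psi|<\e$ on the parabolic boundary of $Q^T_{a,b}$ and therefore $|\psi|<\e$ on $Q^T_{a,b}\supset D$. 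Thus $\psi\equiv0$, and by continuity $\psi_1=\psi_2$. Finally, uniqueness of $\rho$ in $\C$ is a consequence: two weak solutions $\rho_1,\rho_2$ of \eqref{eqn: macro ln k} (resp.\,\eqref{eqn: macro lnln k}) satisfying \eqref{eqn rho macro prop} yield, via Lemma \ref{lem_uniqueness} and the first step, two weak solutions $\psi_i=\int_x^\infty\rho_i(t,u)\,du\in\Wb$ of the $\psi$-problem; by the uniqueness just shown $\psi_1\equiv\psi_2$, so $\rho_1=-\partial_x\psi_1=-\partial_x\psi_2=\rho_2$ almost everywhere. The only step requiring real care is the first one --- the approximation argument extending the test-function class to the (bounded but non-compactly-supported) $\widehat G$ and the attendant check that the origin singularity of $\alpha$ is never felt; the maximum-principle step is, if anything, easier than for $\beta=0$, since no linearization is required.
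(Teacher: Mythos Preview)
Your proposal is correct and follows essentially the same approach as the paper. The paper's own proof is terse, merely observing that $-\alpha(x,\beta)$ is bounded on any $[a,b]$ with $0<a<b<\infty$ (even though it blows up at $x=0$ when $\Eb_k\sim\ln k$) and then invoking the $\beta=0$ uniqueness argument verbatim; you have spelled out this argument in detail and correctly identified the key simplification (no linearization is needed since the equation is already linear in $\psi$).
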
 

\begin{proof}
That $\psi$ solves weakly \eqref{eqn: beta neq 0, psi pde} follows, as in the proof of Lemma \ref{lem_uniqueness}, from the assumptions $\rho$ is a weak solution of \eqref{eqn: beta neq 0, pde} and $\rho\leq \phi_c$. 

Notice that, in equation \eqref{eqn: beta neq 0, psi pde}, the coefficient $-\alpha(x,\beta)$ before $\partial_x \psi$ equals $\dfrac{\beta + x}{x}$ when $\Eb_k \sim \ln k$ and equals $1$ when $1\ll \Eb_k \ll \ln k$.  In both situations, it is bounded on any $[a,b]$ with $0<a<b<\infty$, even if it blows up at $x=0$ when $\Eb_k \sim \ln k$.
Then, the same proof of uniqueness given for Lemma \ref{lem_uniqueness} applies to show uniqueness of weak solutions for the equations \eqref{eqn: macro ln k, psi}  and \eqref{eqn: macro lnln k, psi}.
\end{proof}

%%%%%%%%%%%%%%%%%%%%%%%%%%%%%%%%%%%%%%%%%%%%%%%%%%%%%

\appendix
\section{Remarks on limits when $c=c_0$}
\label{appendix}

We now make remarks, for the interested reader, on some of the behavior with respect to measures $\Pam_{c,N}$ at the boundary, when $c=c_0$.
\vskip .1cm

{\bf 1.}
Lemma \ref {Lem: mean variance mu N} does not hold for invariant measure $\Pam_{c_0,N}$.
In fact, under $\Pam_{c_0,N}$, the total number of particles explodes and the associated variance does not vanish in the limit.

\begin{Lem}
\label{app_1}
We have
\begin{equation} \label{eqn: c=c_0 diverge}
\dfrac { N_\beta} { N} 
\sum_{k=1}^{\infty}  E_{\Pam_{c_0,N}} (\eta(k))
=
\dfrac { N_\beta} { N} 
\sum_{k=1}^{\infty}  \rho_{c_0,k}
\to
\infty
\text{ as } N\to \infty.
\end{equation}
and 
\begin{equation}\label{eqn: c=c_0 variance not vanish}
\liminf_{N\to \infty} \dfrac {N_\beta^2} {N^2} 
\sum_{k=1}^{\infty}  \text{Var}_{ \Pam_{c_0,N}} (\eta(k))
=
\liminf_{N\to \infty} \dfrac { N_\beta^2} { N^2} 
\sum_{k=1}^{\infty}  (\rho_{k,c_0}^2 + \rho_{k,c_0})
>0.
\end{equation}
\end{Lem}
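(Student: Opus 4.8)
The plan is to trace both divergences to the finitely many ``boundary'' sites where the energy is minimal. Set $K_0:=\argmin_{k\ge1}\Eb_k$; this set is nonempty and finite because $\Eb_k=u(\ln k)\to\infty$ (it is the same $K_0$ used in Lemma \ref{entropy: ln k}). For $k_0\in K_0$ one has $c_0e^{-\beta\Eb_{k_0}}=1$ by the definition of $c_0$, hence $\theta_{k_0,c_0}=c_0e^{-\beta\Eb_{k_0}-k_0/N}=e^{-k_0/N}$ and
\[
\rho_{k_0,c_0}=\frac{e^{-k_0/N}}{1-e^{-k_0/N}}\ \ge\ \frac{Ne^{-k_0/N}}{k_0}\ \ge\ \frac{N}{ek_0}\qquad(N\ge k_0),
\]
using $1-e^{-x}\le x$; equivalently $\rho_{k_0,c_0}/N\to1/k_0$ as $N\to\infty$. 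Thus a single boundary site already carries $\Theta(N)$ expected particles, which is exactly the overshoot relative to the $c<c_0$ estimates of Lemma \ref{Lem: mean variance mu N}.

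For the variance statement \eqref{eqn: c=c_0 variance not vanish}, I would simply drop all terms but $k_0$: since $\rho_{k,c_0}^2+\rho_{k,c_0}\ge0$,
\[
\frac{N_\beta^2}{N^2}\sum_{k\ge1}\big(\rho_{k,c_0}^2+\rho_{k,c_0}\big)\ \ge\ \frac{N_\beta^2}{N^2}\,\rho_{k_0,c_0}^2\ \ge\ \frac{N_\beta^2}{e^2k_0^2}
\]
for $N\ge k_0$, and since $N_\beta=e^{\beta\Eb_N}\ge1$ the right-hand side stays at least $1/(e^2k_0^2)>0$; taking $\liminf_{N\to\infty}$ gives the claim (and it diverges when $\beta>0$ since $N_\beta\to\infty$).

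For the mean statement \eqref{eqn: c=c_0 diverge} I would split on $\beta$. If $\beta>0$ the same one-site bound suffices: $\frac{N_\beta}{N}\sum_k\rho_{k,c_0}\ge\frac{N_\beta}{N}\rho_{k_0,c_0}\ge N_\beta/(ek_0)\to\infty$ because $N_\beta\to\infty$ in regimes (2) and (3). If $\beta=0$ then $N_\beta=1$, $c_0=1$, $\rho_{k,1}=e^{-k/N}/(1-e^{-k/N})\ge Ne^{-k/N}/k$, and therefore $\frac1N\sum_{k\ge1}\rho_{k,1}\ge\sum_{k\ge1}e^{-k/N}/k\ge e^{-1}\sum_{k=1}^{N}1/k\to\infty$; here the divergence is only logarithmic and is produced by the whole harmonic tail rather than by a single site. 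I do not anticipate a genuine obstacle: once $K_0$ is isolated the argument is a two-line estimate in each case, the only mild care being the $\beta=0$ subcase of \eqref{eqn: c=c_0 diverge}, where $N_\beta$ provides no help and one must invoke the harmonic sum instead.
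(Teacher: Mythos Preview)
Your proof is correct and follows essentially the same approach as the paper: isolate a site $k_0$ where $c_0e^{-\beta\Eb_{k_0}}=1$, observe $\rho_{k_0,c_0}\sim N/k_0$, and use this single term as a lower bound, with the $\beta=0$ mean case handled separately via a harmonic-type sum. The only cosmetic difference is that the paper treats the $\beta=0$ variance bound by taking $k=1$ directly (yielding $\liminf\ge1$), whereas you run the one-site argument uniformly in $\beta$; both are equivalent in substance.
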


\begin{proof}
To verify these two claims, recall that $\rho_{k,c_0} =  \dfrac{c_0e^{-\beta \Eb_k-k/N}}{1-c_0e^{-\beta \Eb_k-k/N}}$ and $c_0 =\min_{k} e^{\beta \Eb_k}$.

When $\beta = 0$, \eqref{eqn: c=c_0 diverge} and \eqref{eqn: c=c_0 variance not vanish} follow from the limits,
$$
\dfrac { N_\beta} { N} 
\sum_{k=1}^{\infty}  \rho_{c_0,k}
=
\dfrac 1 N 
\sum_{k=1}^{\infty} \dfrac{e^{-k/N}}{1-e^{-k/N}}
=
\sum_{k=1}^{\infty} \dfrac{1}{N(e^{k/N}-1)}
\to
\infty,
$$
and
$$
 \dfrac { N_\beta^2} { N^2} 
\sum_{k=1}^{\infty} \rho_{k,c_0}^2
=
 \dfrac1 { N^2} 
\sum_{k=1}^{\infty} \Big( \dfrac{e^{-k/N}}{1-e^{-k/N}} \Big)^2
\geq
 \dfrac1 { N^2} 
 \Big( \dfrac{1}{e^{1/N} - 1} \Big)^2 \to 1.
$$

For the other two cases, when $\beta> 0$, let $k_0$ be an index where $c_0$ is realized, that is \,$c_0 = e^{\Eb_{k_0}}$. 
Now notice, as $N\rightarrow\infty$,
$$
\dfrac{1}{N} \rho_{k_0,c_0}= \dfrac{1}{N} \dfrac{e^{-k_0/N}}{1-e^{-k_0/N}}\to \frac{1}{k_0}.
$$
Then, both \eqref{eqn: c=c_0 diverge} and \eqref{eqn: c=c_0 variance not vanish} follow
from 
$$
\dfrac { N_\beta} { N} 
\sum_{k=1}^{\infty}  \rho_{k,c_0}
\geq
\dfrac{N_\beta}{N} \rho_{k_0,c_0},
\quad
\dfrac { N_\beta^2} { N^2} 
\sum_{k=1}^{\infty}  \rho^2_{k,c_0}
\geq
\dfrac{N_\beta^2}{N^2} \rho_{k_0,c_0}^2,
$$
and that $N_\beta \to \infty$ as $N\to \infty$. \end{proof}
\vskip .1cm

{\bf 2.}
We showed in Proposition \ref{prop: static}, when $c<c_0$ in the three regimes , that $\phi_c$ corresponds in a sense to the limit shape under the measures $\Pam_{c,N}$.  We now state the same happens when $c=c_0$.

\begin{Lem}
\label{app_2}
We have that the limit \eqref{static limit} holds when $c=c_0$.
\end{Lem}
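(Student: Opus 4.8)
The plan is to reduce everything to the interior range $aN\le k\le bN$: any $G\in C_c^\infty(\R^+_\circ)$ is supported in some interval $[a,b]$ with $0<a<b<\infty$, so the pathologies recorded in Lemma \ref{app_1}---which stem either from the fixed indices realizing $c_0$ or from the tail $k\to\infty$---play no role. On this range $\Pam_{c_0,N}$ is well behaved: since $\lim_{x\to\infty}u(x)=\infty$ we have $\inf_{k\ge aN}\Eb_k\to\infty$, hence $\theta_{k,c_0}=c_0e^{-\beta\Eb_k-k/N}\to 0$ uniformly over $aN\le k\le bN$ when $\beta>0$, while $\theta_{k,1}=e^{-k/N}\le e^{-a}$ when $\beta=0$; in all cases $\sup_{aN\le k\le bN}\theta_{k,c_0}$ stays bounded away from $1$ uniformly in $N$. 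Combined with $\Eb_k-\Eb_N=u'(y)\ln(k/N)$ and the boundedness of $u'$ (Condition \ref{condition}), this gives a uniform bound $\sup_N\sup_{aN\le k\le bN}N_\beta\rho_{k,c_0}\le C(a,b)<\infty$; that is, the conclusion of the site particle bound \eqref{site_particle} holds for $\Pam_{c_0,N}$, the proof of \eqref{site_particle} never using $c<c_0$ once $G$ is supported away from the origin.

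First I would treat the mean. The interior estimate (the terms $I_1+I_2+I_3$) in the proof of Lemma \ref{invariant measures are local equli} uses only that $\theta_{k,c}$ is uniformly bounded away from $1$ on $[a,b]$ and that $\Eb_k\to\infty$, not that $c<c_0$, and hence yields
\[
\lim_{N\to\infty}\frac1N\sum_{aN\le k\le bN}\big|N_\beta\rho_{k,c_0}-\overline{\rho}_{N,k}\big|=0,\qquad \overline{\rho}_{N,k}=N\int_{(k-1)/N}^{k/N}\phi_{c_0}(x)\,dx.
\]
Since $\phi_{c_0}$ is continuous on $[a,b]$, it follows that
\[
E_{\Pam_{c_0,N}}\Big[\tfrac{N_\beta}{N}\sum_{k\ge1}G(k/N)\xi(k)\Big]=\frac1N\sum_{aN\le k\le bN}G(k/N)\,N_\beta\rho_{k,c_0}\ \longrightarrow\ \int_0^\infty G(x)\phi_{c_0}(x)\,dx.
\]

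Next I would bound the variance. By the product structure, $\text{Var}_{\Pam_{c_0,N}}(\xi(k))=\rho_{k,c_0}^2+\rho_{k,c_0}$, so
\[
\text{Var}_{\Pam_{c_0,N}}\Big(\tfrac{N_\beta}{N}\sum_{k\ge1}G(k/N)\xi(k)\Big)=\frac1{N^2}\sum_{aN\le k\le bN}G(k/N)^2\big(N_\beta\rho_{k,c_0}\big)^2+\frac{N_\beta}{N^2}\sum_{aN\le k\le bN}G(k/N)^2\,N_\beta\rho_{k,c_0}.
\]
By the uniform bound from the first paragraph, the first term is $O(N^{-1})$ and the second is $O(N_\beta N^{-1})$, both of which vanish since $N_\beta=o(N)$ in regimes (2) and (3) and $N_\beta=1$ in regime (1). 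Chebyshev's inequality, together with the mean computation, then gives \eqref{static limit} at $c=c_0$: bound $\Pam_{c_0,N}[\,|\,\cdot\,-\int_0^\infty G\phi_{c_0}|>\delta\,]$ by the probability of a $\delta/2$ deviation from the mean (controlled by the variance) plus the indicator that the mean deviates from $\int_0^\infty G\phi_{c_0}$ by more than $\delta/2$, which is $0$ for $N$ large.

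I do not expect a genuine obstacle here; the only point requiring care is checking that the interior parts of the proofs of Lemmas \ref{lem: theta k limit to rho} and \ref{invariant measures are local equli} survive at $c=c_0$, which they do because the hypothesis $c<c_0$ is invoked there solely to control the contributions of $k\le aN$ and $k>bN$---precisely the ranges annihilated by a compactly supported test function.
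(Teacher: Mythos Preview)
Your proposal is correct and follows essentially the same approach as the paper: restrict to $aN\le k\le bN$ via the compact support of $G$, establish the site particle bound $\sup_N\sup_{aN\le k\le bN}N_\beta\rho_{k,c_0}<\infty$ on this range (the paper simply invokes the proof of Lemma~\ref{lem: site particle bound} here), use it to show the restricted variance $\frac{N_\beta^2}{N^2}\sum_{aN\le k\le bN}\text{Var}_{\Pam_{c_0,N}}(\eta(k))\to 0$, separately verify the mean converges to $\int G\phi_{c_0}\,dx$, and conclude by Chebyshev. The only cosmetic difference is that for the mean the paper argues the pointwise limit $N_\beta\rho_{k,c_0}\to\phi_{c_0}(x)$ directly and applies dominated convergence, whereas you route through the interior $I_1+I_2+I_3$ estimate of Lemma~\ref{invariant measures are local equli}; these are equivalent, though note that for $\beta=0$ that lemma's proof does not use the $I_1+I_2+I_3$ decomposition, so in that case you should fall back on the direct Riemann-sum argument (which your uniform bound $\theta_{k,1}\le e^{-a}$ already enables).
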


\begin{proof}
A main tool in the proof of Proposition \ref{prop: initial convergence and mass for condition 1}, which applies under measures $\Pam_{c,N}$ when $c<c_0$, is the variance estimate in Lemma \ref{Lem:  mean variance mu N}, which as seen in Lemma \ref{app_1} above does not hold.  
However, since $G$ has compact support, it is enough to make estimates for $k\in [aN,bN]$, where the support of $G$ is contained in $[a,b]$ for $0<a<b$. 

We claim that in all the three regimes,
\begin{equation} \label {eqn: variance estimate for invariant on compact}
 \lim_{N\rightarrow \infty}\dfrac {N_\beta^2} {N^2} 
\sum_{aN\leq k\leq bN}  \text{Var}_{ \Pam_{c_0,N}} (\eta(k)) = 0.
\end{equation}
Indeed,
notice that $\text{Var}_{ \Pam_{c_0,N}} (\eta(k)) = \rho_{k,c_0}^2 + \rho_{k,c_0}$
where $\rho_{k,c_0} =  \dfrac{c_0e^{-\beta \Eb_k-k/N}}{1-c_0e^{-\beta \Eb_k-k/N}}$.
Since $N_\beta = o(N)$, the claim \eqref{eqn: variance estimate for invariant on compact} would follow from the bound 
$
\sup_N \sup_{aN\leq k\leq bN} N_\beta\rho_{k,c_0} <\infty$. Such a bound holds in fact by the proof of Lemma \ref{lem: site particle bound}.

Hence, under $\Pam_{c_0,N}$, we conclude $N_\beta N^{-1}\sum G(k/N)(\eta(k) - \rho_{k,c_0})\to 0$ in probability. To finish, we need only show that
\begin{equation} \label {eqn: convergence of claim static}
\lim_{N\to \infty}
\dfrac {1}{N} 
\sum_{k=1}^{\infty}
G\Big(\dfrac k N\Big) N_{\beta}  \rho_{k,c_0} 
=
\int_0^{\infty} G(x) \phi_{c_0}(x) dx,
\end{equation}
where, we note that the summation of $k$ above is actually on $aN\leq k\leq bN$.  Recall the formula for $\rho_{k,c_0}$ in \eqref{rho k c}.

When $\beta=0$, we have $N_\beta = 1$ and $c_0=1$.  Then,
\begin{equation*}
N_{\beta} \rho_{k,c_0} = \dfrac{e^{-k/N}}{1- e^{-k/N}}
\to \dfrac{ e^{-x}}{1- e^{-x}}=\phi_{c_0},
\text{ as } N\to \infty, \frac kN\to x.
\end{equation*}
Then,  \eqref{eqn: convergence of claim static} follows from dominated convergence.
 
However, when $\beta> 0$, note first $N_{\beta} \theta_{k,c_0} = c_0e^{-\beta (\Eb_k - \Eb_N)-k/N}$ and $\Eb_k - \Eb_N = u(\ln k ) - u(\ln N)$.
 By the mean value theorem,
$\Eb_k - \Eb_N \to \ln x \lim_{z\to \infty} u'(z)$ as $N\to \infty$ and $k/N \to x$.  Note also that $N_\beta =e^{\beta \Eb_N} \to \infty$ (cf. \eqref{Nbeta eqn}).
Then,
\begin{equation*}
N_{\beta}  \rho_{k,c_0} = \frac{N_\beta \theta_{k,c_0}}{1-\theta_{k,c_0}}\to c_0 e^{-\beta \ln x \lim_{z\to \infty} u'(z)} e^{-x}=\phi_{c_0}(x), \text{ as } N\to \infty, \frac kN\to x.
\end{equation*}
Again, by dominated convergence theorem, \eqref{eqn: convergence of claim static} follows.
\end{proof}

\medskip
{\bf Acknowledgements.}
This research was partly supported by ARO-W911NF-18-1-0311 and a Simons Foundations Sabbatical grant.

%%%%%%%%%%%%%%%%%%%%%%%%%%%%%%%%%%%%%%%%%%%%
%%%%%%%%%%%%%%%%%%%%%%%%%%%%%%%%%%%%%%%%%%%%
%%%%%%%%%%%%%%%%%%%%%%%%%%%%%%%%%%%%%%%%%%%%

%%%%%%%%%%%%%%%%%%%%%%%%%%%%%%%%%%%%%%%%%%%%
%%%%%%%%%%%%%%%%%%%%%%%%%%%%%%%%%%%%%%%%%%%%
%%%%%%%%%%%%%%%%%%%%%%%%%%%%%%%%%%%%%%%%%%%%

%%%%%%%%%%%%%%%%%%%%%%%%%%%%%%%%%%%%%%%%%%%%
%%%%%%%%%%%%%%%%%%%%%%%%%%%%%%%%%%%%%%%%%%%%
%%%%%%%%%%%%%%%%%%%%%%%%%%%%%%%%%%%%%%%%%%%%

%%%%%%%%%%%%%%%%%%%%%%%%%%%%%%%%%%%%%%%%%%%%
%%%%%%%%%%%%%%%%%%%%%%%%%%%%%%%%%%%%%%%%%%%%
%%%%%%%%%%%%%%%%%%%%%%%%%%%%%%%%%%%%%%%%%%%%

\end{document}